\newtheorem{theorem}{Theorem}
\newtheorem{corollary}{Corollary}
\newtheorem{lemma}{Lemma}
\newtheorem{remark}{Remark}
\newcommand\blfootnote[1]{%
  \begingroup
  \renewcommand\thefootnote{}\footnote{#1}%
  \addtocounter{footnote}{-1}%
  \endgroup
}
\begin{document}

\title{On the Capacity of Secure Distributed Batch Matrix Multiplication}
\author{Zhuqing Jia and Syed A. Jafar}
\affil{Center for Pervasive Communications and Computing (CPCC), UC Irvine\\
Email: \{zhuqingj, syed\}@uci.edu}
\date{}
\maketitle

\begin{abstract}  
The problem of secure distributed batch matrix multiplication (SDBMM) studies the communication efficiency of retrieving a sequence of desired  matrix products ${\bf AB}$ $=$  $({\bf A}_1{\bf B}_1,$ ${\bf A}_2{\bf B}_2,$  $\cdots,$  ${\bf A}_S{\bf B}_S)$ from $N$ distributed servers where the constituent  matrices ${\bf A}=({\bf A}_1, {\bf A}_2, \cdots, {\bf A}_S)$ and  ${\bf B}=({\bf B}_1, {\bf B}_2,\cdots,{\bf B}_S)$ are stored in $X$-secure coded form, i.e., any group of up to $X$ colluding servers learn nothing about ${\bf A, B}$.  It is assumed that ${\bf A}_s\in\mathbb{F}_q^{L\times K}, {\bf B}_s\in\mathbb{F}_q^{K\times M}, s\in\{1,2,\cdots, S\}$ are uniformly and independently distributed and $\mathbb{F}_q$ is a large finite field. The rate of an SDBMM scheme is defined as the ratio of  the number of bits of desired information that is retrieved, to the total number of bits downloaded on average. The supremum of achievable rates is called the capacity of SDBMM. In this work we explore the capacity of SDBMM, as well as several of its variants, e.g., where the user may already have either ${\bf A}$ or ${\bf B}$ available as side-information, and/or where the security constraint for either ${\bf A}$ or ${\bf B}$ may be relaxed. We obtain  converse bounds, as well as  achievable schemes for  various cases of SDBMM, depending on the $L, K, M, N, X$ parameters, and identify parameter regimes where these bounds match. {In particular, the capacity for securely computing a batch of outer products of two vectors is $(1-X/N)^+$,   for a batch of inner products of two (long) vectors the capacity approaches $(1-2X/N)^+$ as the length of the vectors approaches infinity, and in general for sufficiently large $K$ (e.g., $K>2\min(L,M)$), the capacity $C$ is bounded as $(1-2X/N)^+\leq C<(1-X/N)^+$.} A remarkable aspect of our upper bounds is a connection between SDBMM and a form of private information retrieval (PIR) problem, known as multi-message $X$-secure $T$-private information retrieval (MM-XSTPIR). Notable features of our achievable schemes include the use of cross-subspace alignment and a  transformation  argument that converts a scalar multiplication problem into a scalar addition problem, allowing a surprisingly efficient solution. 
\end{abstract}

\blfootnote{This work is supported in part by funding from NSF grants CCF-1617504, CCF-1907053, CNS-1731384, ONR grant N00014-21-1-2386, and ARO grant W911NF1910344.}

\newpage

\section{Introduction}
Distributed matrix multiplication is a key building block for a variety of applications that include collaborative filtering, object recognition, sensing and data fusion, cloud computing, augmented reality and machine learning. Coding techniques, such as MDS codes \cite{Lee_Suh_Ramchandran},  Polynomial codes \cite{Yu_Ali_Avestimehr}, MatDot and PolyDot codes \cite{Dutta_Fahim_Haddadpour,GPdot}, and {Entangled Polynomial codes \cite{Yu_Maddah-Ali_Avestimehr}} have been shown to be capable of improving the efficiency of distributed matrix multiplication. However, with the expanding scope of distributed computing applications, there are mounting security concerns \cite{Lagrange,Yang_Lee, Chang_Tandon_SDMMOS, Kakar_Ebadifar_Sezgin_CSA,DOliveira_Rouayheb_Karpuk, Aliasgari_Simeone_Kliewer} about sharing information with external servers. The problem of secure distributed batch matrix multiplication (SDBMM) is motivated by these security concerns. 

\begin{figure}[!h]
\begin{center}
\begin{tikzpicture}[scale=0.75]
\node[circle, help lines, fill=black!5, text=black!50!white, draw=black, minimum size=0.7cm, inner sep=0] (S1) at (3cm, 1cm) { ${\bf M}_1$};
\node[circle, draw=black, fill=black!5, text=black,minimum size=0.7cm,  inner sep=0] (Si) at (6cm, 1cm) { ${\bf M}_i$};
\node[circle,  minimum size=0.7cm,  inner sep=0] (dots) at (4.5cm, 1cm) { $\cdots$};
\node[circle,  minimum size=0.7cm,  inner sep=0] (Mj) at (9cm, 1cm) { $\cdots$};
\node [draw, rectangle,fill=black!5, text=black, inner sep =0.2cm] (D1) at (2cm, -1.5cm) {\footnotesize Server $1$};
\node [rectangle, inner sep =0.2cm] (Ddots1) at (4cm, -1.5cm) {$\cdots$};
\node [draw, rectangle, fill=black!5, text=black, inner sep =0.2cm] (Dn) at (6cm, -1.5cm) {\footnotesize Server $n$};
\node [rectangle, inner sep =0.2cm] (Ddots2) at (8cm, -1.5cm) {$\cdots$};
\node [draw, rectangle, fill=black!5, text=black, inner sep =0.2cm] (DN) at (10cm, -1.5cm) {\footnotesize Server $N$};
\draw [help lines, ->] (S1)--(D1);
\draw [help lines, ->] (S1)--(Dn);
\draw [help lines, ->] (S1)--(DN);
\draw [ ->] (Si)--(D1) node[draw, rectangle, fill=white, pos=0.3]{\scriptsize $\widetilde{M}_i^1$};
\draw [ ->] (Si)--(Dn) node[draw, rectangle, fill=white, pos=0.3]{\scriptsize $\widetilde{M}_i^n$};
\draw [ ->] (Si)--(DN) node[draw, rectangle, fill=white, pos=0.3]{\scriptsize $\widetilde{M}_i^N$};
\node[circle, help lines, draw=black, fill=black!5, text=black, minimum size=0.7cm, inner sep=0] (U1) at (3cm, -4cm) { $U_1$};
\node[circle, help lines,  minimum size=0.7cm, inner sep=0] (Udots1) at (4.5cm, -4cm) { $\cdots$};
\node[circle, draw=black, fill=black!5, text=black, minimum size=0.7cm, inner sep=0] (Uj) at (6cm, -4cm) { $U_j$};
\node[circle, help lines,  minimum size=0.7cm, inner sep=0] (Udots2) at (7.5cm, -4cm) { $\cdots$};
\draw [help lines, ->] (D1)--(U1) node[ rectangle, pos=0.3]{};
\draw [help lines, ->] (Dn)--(U1) node[ rectangle,  pos=0.3]{};
\draw [help lines, ->] (DN)--(U1) node[ rectangle, pos=0.3]{};
\draw [thick, ->] (D1)--(Uj) node[draw, rectangle, fill=white, pos=0.3]{\scriptsize $\Delta_1$};
\draw [thick, ->] (Dn)--(Uj) node[draw, rectangle, fill=white, pos=0.3]{\scriptsize $\Delta_ n$};
\draw [thick, ->] (DN)--(Uj) node[draw, rectangle, fill=white, pos=0.3]{\scriptsize $\Delta_ N$};
\node[below=0.5cm of Uj, minimum size=0.3cm, inner sep=0.1cm] (P) {\scriptsize  ${\bf M}_k{\bf M}_l$};
\draw[->](Uj)--(P);

\begin{scope}[shift={(11,0)}]
\node[circle, help lines, fill=black!5, text=black, draw=black, minimum size=0.7cm, inner sep=0] (S1) at (3cm, 1cm) { ${\bf A}$};

\node[circle, draw=black, fill=black!5, text=black,minimum size=0.7cm,  inner sep=0] (Si) at (9cm, 1cm) { ${\bf B}$};

\node [draw, rectangle,fill=black!5, text=black, inner sep =0.2cm] (D1) at (2cm, -1.5cm) {\footnotesize Server $1$};
\node [rectangle, inner sep =0.2cm] (Ddots1) at (4cm, -1.5cm) {$\cdots$};
\node [draw, rectangle, fill=black!5, text=black, inner sep =0.2cm] (Dn) at (6cm, -1.5cm) {\footnotesize Server $n$};
\node [rectangle, inner sep =0.2cm] (Ddots2) at (8cm, -1.5cm) {$\cdots$};
\node [draw, rectangle, fill=black!5, text=black, inner sep =0.2cm] (DN) at (10cm, -1.5cm) {\footnotesize Server $N$};
\draw [help lines, ->] (S1)--(D1);
\draw [help lines, ->] (S1)--(Dn);
\draw [help lines, ->] (S1)--(DN);
\draw [  ->] (S1)--(DN) node[draw, rectangle, fill=white, pos=0.3]{\scriptsize $\widetilde{A}_N$};
\draw [  ->] (S1)--(D1) node[draw, rectangle, fill=white, pos=0.3]{\scriptsize $\widetilde{A}_1$};
\draw [  ->] (S1)--(Dn) node[draw, rectangle, fill=white, pos=0.3]{\scriptsize $\widetilde{A}_n$};

\draw [ ->] (Si)--(D1) node[draw, rectangle, fill=white, pos=0.3]{\scriptsize $\widetilde{B}_1$};
\draw [ ->] (Si)--(Dn) node[draw, rectangle, fill=white, pos=0.3]{\scriptsize $\widetilde{B}_n$};
\draw [ ->] (Si)--(DN) node[draw, rectangle, fill=white, pos=0.3]{\scriptsize $\widetilde{B}_N$};

\node[circle, draw=black, fill=black!5, text=black, minimum size=0.9cm, inner sep=0] (Uj) at (6cm, -4cm) { \small User};

\draw [thick, ->] (D1)--(Uj) node[draw, rectangle, fill=white, pos=0.3]{\scriptsize $\Delta_1$};
\draw [thick, ->] (Dn)--(Uj) node[draw, rectangle, fill=white, pos=0.3]{\scriptsize $\Delta_ n$};
\draw [thick, ->] (DN)--(Uj) node[draw, rectangle, fill=white, pos=0.3]{\scriptsize $\Delta_ N$};
\node[below=0.5cm of Uj, minimum size=0.3cm, inner sep=0.1cm] (P) {\scriptsize  ${\bf A}{\bf B}$};
\draw[->](Uj)--(P);
\end{scope}
\end{tikzpicture}
\end{center}
\caption{\it\small (Left) General context for SDBMM showing various sources that produce large amounts of data represented as matrices ${\bf M}_1, {\bf M}_2, \cdots$, and store it at $N$ distributed  servers in $X$-secure form, coded independently as $\widetilde{M}_i^n$. Various authorized users access these servers and retrieve products of their desired matrices based on the downloads that they request from all $N$ servers. Unlike PIR (private information retrieval) \cite{PIRfirst} problems there are no privacy constraints in SDBMM, so users can publicly announce which matrix products they wish to retrieve. (Right) The SDBMM problem considered in this work, where the goal is to minimize the average size of the total download for a generic user whose desired matrices are labeled ${\bf A}, {\bf B}$. }
\end{figure}
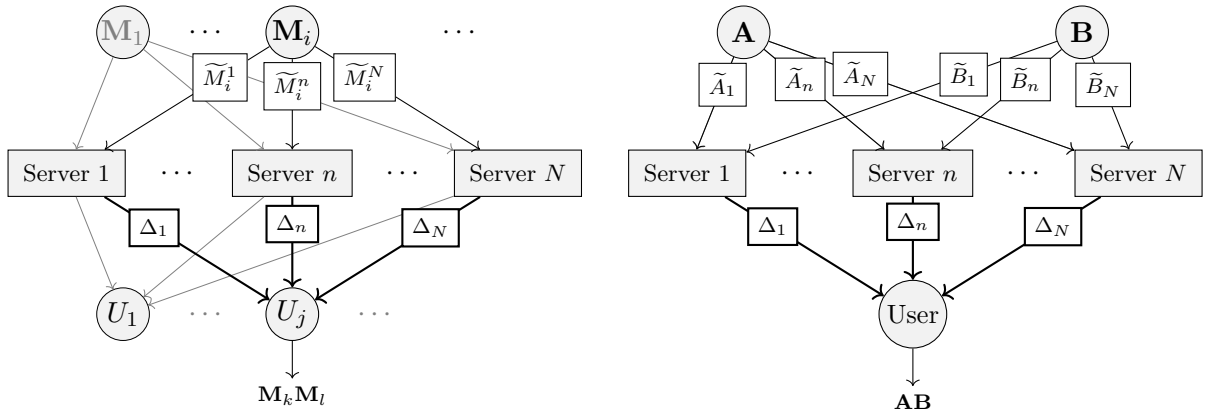

As defined in this work, SDBMM studies the communication efficiency of retrieving  desired matrix products from distributed servers where the constituent matrices are securely stored. Specifically, suppose ${\bf A}=({\bf A}_1, {\bf A}_2,\cdots, {\bf A}_S)$ and ${\bf B}=({\bf B}_1, {\bf B}_2, \cdots, {\bf B}_S)$ are collections of  random matrices that are stored across $N$ servers subject to an $X$-security guarantee, i.e., any colluding group of up to $X$ servers can learn nothing about the ${\bf A}$ and ${\bf B}$ matrices. Specifically, ${\bf A}_s\in\mathbb{F}_q^{L\times K}$, ${\bf B}_s\in\mathbb{F}_q^{K\times M}$  for all $s\in\{1,2,\cdots, S\}$, are independent and uniformly distributed, and $\mathbb{F}_q$ is assumed to be a large finite field. 
A user wishes to retrieve ${\bf A}{\bf B}=({\bf A}_1{\bf B}_1, {\bf A}_2{\bf B}_2, \cdots, {\bf A}_S{\bf B}_S)$, where each ${\bf A}_s{\bf B}_s, s\in\{1,2,\cdots, S\}$ is an $L\times M$ matrix in $\mathbb{F}_q$,  while downloading as little information from the $N$ servers as possible. The rate of an SDBMM scheme is defined as the ratio $H({\bf AB})/D$, where  $H({\bf AB})$ is the number of bits of desired information that is retrieved, and $D$ is the total number of bits downloaded on average. The supremum of achievable rates is called the capacity of SDBMM. In this work we study the capacity of SDBMM, as well as several of its variants, e.g., where the user may already have either ${\bf A}$ or ${\bf B}$ available as side-information,\footnote{The definition of rate takes into account the side-information available at the user. For example, say, the user has ${\bf B}$ available as side-information, then the rate is defined as $H({\bf AB}\mid {\bf B})/D$.} and/or where the security constraint for either ${\bf A}$ or ${\bf B}$ may be relaxed. 

{SDBMM may be viewed as a generalization of the problem of secure distributed matrix multiplication (SDMM) where the batch size is set to  $S=1$. Information-theoretic study of SDMM was introduced recently in \cite{Chang_Tandon_SDMMOS}, under a closely related model with a few subtle differences. Motivated by a master-worker model of distributed computation, it is assumed in \cite{Chang_Tandon_SDMMOS} that the ${\bf A}, {\bf B}$ matrices originate at the user (master), who securely encodes and sends these matrices to $N$ servers (workers), and then from just the downloads that he receives from the servers in return, the user is able to compute ${\bf AB}$. The goal in \cite{Chang_Tandon_SDMMOS}, as in this work, is to maximize the rate, defined as $H({\bf AB})/D$. The same model is  pursued in \cite{Kakar_Ebadifar_Sezgin_CSA, DOliveira_Rouayheb_Karpuk, Aliasgari_Simeone_Kliewer}, where new coding schemes are proposed for SDMM. However, certain aspects of the model have the potential to appear inconsonant. For example, a key assumption in  these models is that the user must not use his prior knowledge of ${\bf A}, {\bf B}$ and must decode ${\bf AB}$ only from the downloads. Since the goal is to minimize the communication cost, and the user already knows ${\bf A}, {\bf B}$, why not do the computation locally and avoid all communication entirely? Indeed, the question is not merely philosophical, because as shown in this work, in some cases, the best scheme even with the model of \cite{Chang_Tandon_SDMMOS, Kakar_Ebadifar_Sezgin_CSA,DOliveira_Rouayheb_Karpuk, Aliasgari_Simeone_Kliewer}  turns out to be one that allows the user to retrieve both ${\bf A}, {\bf B}$ from the downloads, and then compute ${\bf AB}$ locally --- something that could be done without the need for any communication (albeit it may incur a greater local computation cost on the part of the master, an important concern that is not reflected in the communication cost metric) if indeed ${\bf A}, {\bf B}$ originated at the user. On the other hand, the SDBMM model assumed in this work assumes that ${\bf A},{\bf B}$ do not both originate at the user,\footnote{One of ${\bf A}, {\bf B}$ may be available to the user as side-information in some variants of SDBMM studied in this work.} rather they are stored securely and remotely at the $N$ servers,\footnote{We envision that the matrices ${\bf A}$ and ${\bf B}$ comprise sensitive data that originates at other sources, and is securely stored at the $N$ servers. The communication cost of uploading the data to the servers is not a focus of this work, because the upload requires communication between a different set of entities (sources and servers) with their own separate communication channels and cost dynamics. Furthermore, if the data is of interest to many (authorized) users who perform their desired computations, then the repeated cost of such downloads could very well  outweigh the one-time cost of uploading the secured data to the servers.} thus eliminating this concern. Another difference between our model of SDMM and previous works is that the precise assumptions regarding matrix dimensions $L, K, M$ are left unclear in \cite{Chang_Tandon_SDMMOS, Kakar_Ebadifar_Sezgin_CSA}, indeed these dimensions do not appear in the capacity results in \cite{Chang_Tandon_SDMMOS, Kakar_Ebadifar_Sezgin_CSA}.  However, our model allows $L, K, M$ to take arbitrary values, including large values. As it turns out, our results reveal that the relative size of $L,K,M$ does matter for the capacity of SDBMM.

Using prior works on SDMM that are most closely related to this work, in particular \cite{Chang_Tandon_SDMMOS, Kakar_Ebadifar_Sezgin_CSA}, as the baseline, the main contributions of this work may be summarized as follows. In terms of converse (upper) bounds, in addition to the converse bounds that follow directly from \cite{Chang_Tandon_SDMMOS} (e.g., the converse for Theorem \ref{thm:cba} is inherited from \cite{Chang_Tandon_SDMMOS}), we find new upper bounds on SDBMM capacity by identifying a connection between SDBMM and a form of private information retrieval (PIR) problem, known as multi-message $X$-secure $T$-private information retrieval (MM-XSTPIR). By developing converse bounds for MM-XSTPIR\footnote{Notably while the capacity of multi-message PIR has been explored in \cite{Banawan_Ulukus_Multimessage,Li_Gastpar} and that of $X$-secure $T$-private PIR has been explored in \cite{Jia_Sun_Jafar_XSTPIR},  to our knowledge there has been no prior work on MM-XSTPIR.} and by using the connection of MM-XSTPIR to SDBMM, we are able to bound the capacity of SDBMM. In terms of lower bounds (achievability), a notable aspect of our achievable schemes is the idea of cross-subspace alignment (CSA) which was introduced in \cite{Jia_Sun_Jafar_XSTPIR}, and was recently applied to SDBMM in \cite{Kakar_Ebadifar_Sezgin_CSA} based on matrix partitioning. In this work, instead of coding across matrix partitions, we utilize CSA codes for batch processing (subsequent generalizations of this idea appear in   \cite{Jia_Jafar_MDSXSTPIR, Jia_Jafar_CDBC}). Another interesting aspect of the achievable schemes in this work is a transformation  that converts a scalar multiplication problem into a scalar addition problem. This transformation allows a surprisingly\footnote{The achieved rate  exceeds an upper bound previously obtained in literature \cite{Kakar_Ebadifar_Sezgin_CSA}. See the discussion following Theorem \ref{thm:cabphi}.} efficient (and capacity optimal) solution for scalar multiplication, outer products of vectors, and Hadamard products of matrices. Depending on the $L, K, M, N, X$ parameters, we identify several  regimes where the converse (upper) and achievability (lower) bounds match, settling the capacity of SDMM for those cases.  In contrast with \cite{Chang_Tandon_SDMMOS, Kakar_Ebadifar_Sezgin_CSA} where the matrix dimensions $L, K, M$ do not appear in the capacity results, our results reveal that the relative size of $L,K,M$ does matter. For example, capacity-achieving schemes for SDMM from \cite{Chang_Tandon_SDMMOS, Kakar_Ebadifar_Sezgin_CSA} fall short if $K/L<1$, and a converse bound for two-sided security that is derived in \cite{Kakar_Ebadifar_Sezgin_CSA} is violated under certain conditions as well (see discussion following Theorem \ref{thm:cabphi} and Theorem \ref{thm:cba} in this work). Indeed, we find that in general the capacity depends on all matrix dimensions as well as the parameters $N, X$. For example,  the capacity of a batch of outer products of two vectors is $(1-X/N)^+$,  the capacity of a batch of inner products of two (long) vectors approaches $(1-2X/N)^+$ as the length of the vectors approaches infinity, and in general for sufficiently large $K$ (e.g., $K>2\min(L,M)$), the capacity $C$ is bounded as $(1-2X/N)^+\leq C<(1-X/N)^+$ (see Theorem \ref{thm:cabphi} and Remark \ref{rem:highlights} for details). Finally, it may also be of interest to  consider a notion of dimension-independent capacity, defined as the infimum of SDBMM capacity over all possible matrix dimensions. Remarkably, the dimension-independent capacities for all cases studied in this work are settled as immediate corollaries of our main results.

}

{\it Notations:} {The notation $x^+$ denotes $\max(x,0)$.} For a positive integer $N$, $[N]$ stands for the set $\{1,2,\dots,N\}$. The notation $X_{[N]}$ denotes the set $\{X_1,X_2,\dots,X_N\}$. {For an index set $\mathcal{I}=\{i_1,i_2,\dots,i_n\}$, where $(i_1,i_2,\dots,i_n)$ are sorted in ascending order, let $X_{\mathcal{I}}$ denotes the tuple $(X_{i_1},X_{i_2},\dots,X_{i_n})$. If $A$ is a set/tuple of random variables, then by $H_q(A)$ we denote the joint entropy (base $q$) of those random variables. Mutual informations between sets of random variables are similarly defined as $I_q(\cdot)$.} We use the notation $X\sim Y$ to indicate that $X$ and $Y$ are identically distributed.

\section{Problem Statement}
\subsection{Problem Statement: SDBMM}
Let $\mathbf{A}=(\mathbf{A}_1, \mathbf{A}_2,\cdots,\mathbf{A}_S)$ represent $S$ random matrices, chosen independently and uniformly from all matrices over $\mathbb{F}_q^{L\times K}$. Similarly, let $\mathbf{B}=(\mathbf{B}_1, \mathbf{B}_2,\cdots,\mathbf{B}_S) $ represent $S$ random matrices, chosen independently and uniformly from all matrices over $\mathbb{F}_q^{K\times M}$. The user wishes to compute the products $\mathbf{AB}=(\mathbf{A}_1\mathbf{B}_1, \mathbf{A}_2\mathbf{B}_2, \cdots, \mathbf{A}_S\mathbf{B}_S)$.

The independence between matrices $\mathbf{A}_{[S]}$, $\mathbf{B}_{[S]}$ is formalized as follows.
\begin{equation}
H_q(\mathbf{A},\mathbf{B})=H_q(\mathbf{A})+H_q(\mathbf{B})=\sum_{s\in[S]}H_q(\mathbf{A}_s)+\sum_{s\in[S]}H_q(\mathbf{B}_s)\label{eq:matind}.
\end{equation}
Since we are operating over $\mathbb{F}_q$, let us express all entropies in base $q$ units.

The ${\bf A},{\bf B}$ matrices are available at $N$ servers with security levels $X_A, X_B$, respectively. This means that any group of up to $X_A$ colluding servers can learn nothing about ${\bf A}$ matrices, and any group of up to $X_B$ colluding servers can learn nothing about the ${\bf B}$ matrices.\footnote{For the most part, we will focus on cases with $X_A,X_B\in\{0,X\}$, i.e., the security level can either be $X>0$ or zero, where a security level zero implies that there is no security constraint for that set of matrices.} 
Security is achieved by  coding according to  secret sharing schemes that separately generate shares $\widetilde{A}_s^n, \widetilde{B}_s^n$ corresponding to each ${\bf A}_s, {\bf B}_s$, and make these shares\footnote{If $X_A=0$ then  we could choose $\widetilde{A}_s^n={\bf A}_s$. Similarly, if $X_B=0$, then it is possible to have $\widetilde{B}_s^n={\bf B}_s$.} available to the $n^{th}$ server, for all $n\in[N]$. The independence between these securely coded matrices is formalized as,
\begin{align}
I_q({\bf A}, \widetilde{A}_{[S]}^{[N]}; {\bf B}, \widetilde{B}_{[S]}^{[N]})&=0,\label{eq:sepencode}\\
H_q\left(\widetilde{A}_{[S]}^{[N]},\widetilde{B}_{[S]}^{[N]}\right)&=\sum_{s\in[S]}H_q(\widetilde{A}_{s}^{[N]})+\sum_{s\in[S]}H_q(\widetilde{B}_{s}^{[N]}).
\end{align}
Each matrix  must be recoverable from all its secret shares,
\begin{align}\label{eq:emfunc}
H_q(\mathbf{A}_s\mid \widetilde{A}_s^{[N]})=0, &&H_q(\mathbf{B}_s\mid \widetilde{B}_s^{[N]})=0, &&\forall s\in[S].
\end{align}
The ${\bf A}, {\bf B}$ matrices must be perfectly secure from any set of secret shares that can be accessed by a set of up to  $X_A, X_B$ colluding servers, respectively.
\begin{align}
I_q\left({\bf A}; \widetilde{A}_{[S]}^{\mathcal{X}}\right)&=0,&&\mathcal{X}\subset[N], &&|\mathcal{X}|=X_A\label{eq:securea}\\
I_q\left({\bf B}; \widetilde{B}_{[S]}^{\mathcal{X}}\right)&=0,&&\mathcal{X}\subset[N], &&|\mathcal{X}|=X_B \label{eq:secureb}\\
I_q({\bf A, B}; \widetilde{A}_{[S]}^{\mathcal{X}}, \widetilde{B}_{[S]}^{\mathcal{X}})&=0,&&\mathcal{X}\subset[N], &&|\mathcal{X}|=\min(X_A,X_B) \label{eq:secure}
\end{align}
In order to retrieve the products ${\bf A}{\bf B}$,  from each server $n\in[N]$, the user downloads $\Delta_n$ which is function of $\widetilde{A}_{[S]}^n, \widetilde{B}_{[S]}^n$.
\begin{equation}
H_q\left(\Delta_n\mid\widetilde{A}_{[S]}^n, \widetilde{B}_{[S]}^n\right)=0.
\end{equation}
The  side-information available to the user \emph{apriori} is denoted $\Psi$, which can be either the ${\bf A}$ matrices, or the ${\bf B}$ matrices, or null ($\phi$) if the user has no side-information. Given the downloads from all $N$ servers and the side-information, the user must be able to recover the matrix products ${\bf AB}$.
\begin{align}
H_q\left({\bf AB}\mid \Delta_{[N]}, \Psi\right)&=0.\label{eq:correct}
\end{align}
Let us define the rate of an SDBMM scheme as follows.
\begin{align}
R&=\frac{H_q({\bf AB}\mid \Psi)}{D}
\end{align}
where $D$ is the average value (over all realizations of ${\bf A}, {\bf B}$ matrices) of the total number of  $q$-ary symbols downloaded by the user from all $N$ servers. In order to steer away from the field-size concerns that are best left to coding-theoretic studies, we will only allow the field size to be asymptotically large, i.e., $q\rightarrow\infty$.
The capacity of SDBMM is the supremum of achievable rate values over all SDBMM schemes and over all $S$. 

{The results in this work show that the relative size of matrix dimensions $L,K,M$ does matter for SDBMM capacity results. However, it may  be of interest to also consider a notion of dimension-independent capacity, defined as the infimum of SDBMM capacity over all possible dimensions $L,K$ and $M$.}

\begin{remark} The goal of the SDBMM problem is to design schemes to minimize $D$. The normalization factor $H_q({\bf AB}\mid \Psi)$ is not particularly important since it does not depend on the scheme, it is simply a baseline that is chosen to represent the average download needed from a centralized server that directly sends ${\bf AB}$ to the user in the absence of security constraints. Other baselines, e.g., $H_q({\bf AB})$ may be chosen instead as in \cite{Chang_Tandon_SDMMOS}, or one could equivalently formulate the problem directly as a minimization of download cost $D$. We prefer the formulation as a rate maximization because it allows a more direct connection to the capacity of PIR, one of the main themes of this work. 
\end{remark}

Finally, depending upon which matrices are secured and/or available as side-information, we have the following versions of the SDBMM problem.
{\begin{align}
\begin{array}{|l|c|c|c|c|}\hline
\mbox{SDBMM version}&\mbox{secure}&\mbox{side-information $\Psi$}&\mbox{capacity}&\mbox{dimension-ind. capacity}\\\hline
&&&&\\[-1em]
\mbox{SDBMM}_{(\bf AB, \phi)}&{\bf A,B}& \phi & C_{(\bf AB, \phi)} & \mathring{C}_{(\bf AB, \phi)}\\\hline
&&&&\\[-1em]
\mbox{SDBMM}_{(\bf AB, B)}&{\bf A,B}& {\bf B} & C_{(\bf AB, B)}& \mathring{C}_{(\bf AB, B)}\\\hline
&&&&\\[-1em]
\mbox{SDBMM}_{(\bf B, \phi)}&{\bf B}& \phi & C_{(\bf B,\phi)}& \mathring{C}_{(\bf B,\phi)}\\\hline
&&&&\\[-1em]
\mbox{SDBMM}_{(\bf B, A)}&{\bf B}& {\bf A} & C_{(\bf B, \bf A)}& \mathring{C}_{(\bf B, \bf A)}\\\hline
&&&&\\[-1em]
\mbox{SDBMM}_{(\bf B, B)}&{\bf B}& {\bf B} & C_{(\bf B,\bf B)}&\mathring{C}_{(\bf B,\bf B)}\\\hline
\end{array}
\end{align}}
Thus, the version of SDBMM is indicated by the subscript which has two elements, the first representing the matrices that are secured and the second representing the matrices available to the user as side-information. Note that other cases, such as  $\mbox{SDBMM}_{(\bf AB, A)}$, $\mbox{SDBMM}_{(\bf A, \phi)}$, $\mbox{SDBMM}_{(\bf A, A)}$, $\mbox{SDBMM}_{(\bf A, B)}$, are equivalent to, $\mbox{SDBMM}_{(\bf AB, B)}$, $\mbox{SDBMM}_{(\bf B, \phi)}$, $\mbox{SDBMM}_{(\bf B, B)}$, $\mbox{SDBMM}_{(\bf B, A)}$,  respectively, by the inherent symmetry of the problem, leaving us with just the $5$ cases tabulated above.

\subsection{Problem Statement: Multi-Message $X$-Secure $T$-Private Information Retrieval}
{Since our converse bounds rely on a connection between SDBMM and multi-message $X$-secure $T$-private information retrieval (MM-XSTPIR), in this subsection, we formally define the problem of MM-XSTPIR.} Consider $K$ messages stored at $N$ distributed servers, $W_1, W_2,\dots, W_K$. Each message is represented by $L$ random symbols from the finite field $\mathbb{F}_q$.
\begin{align}
H_q(W_1)=H_q(W_2)=\dots=H_q(W_K)=L,\label{eq:mmxstpirmsgentropy}\\
H_q(W_{[K]})=KL,\label{eq:mmxstpirmsgind}
\end{align}
The information stored at the $n$-th server is denoted by $S_n$, $n\in[N]$. $X$-secure storage, $0\leq X\leq N$, guarantees that any $X$ (or fewer) colluding servers learns nothing about messages.
\begin{align}\label{eq:mmxstpirsecur}
I_q(S_{\mathcal{X}};W_{[K]})=0,\quad \forall\mathcal{X}\subset[N], |\mathcal{X}|=X.
\end{align}
To make information retrieval possible, messages must be function of $S_{[N]}$.
\begin{align}\label{eq:mmxstpirmsgfunc}
H_q(W_{[K]}|S_{[N]})=0.
\end{align}
The multi-message $T$-private information retrieval allows the user to retrieve $M$ messages simultaneously. The user privately and uniformly generate a set of indices of desired messages $\mathcal{K}$, $\mathcal{K}\subset[K], |\mathcal{K}|=M$. To retrieve desired messages privately, the user generates $N$ queries $Q_{[N]}^{\mathcal{K}}$. The $n$-th query $Q_{n}^{\mathcal{K}}$ is sent to the $n$-th server. The user has no prior knowledge of information stored at servers, i.e.,
\begin{align}\label{eq:mmxstpirqsind}
I_q(S_{[N]};\mathcal{K},Q_{[N]}^{\mathcal{K}})=0.
\end{align}
$T$-privacy, $0\leq T\leq N$, guarantees that any $T$ (or fewer) colluding servers learns nothing about $\mathcal{K}$.
\begin{align}\label{eq:mmxstpirpriv}
I_q(Q_{\mathcal{T}}^{\mathcal{K}},S_{\mathcal{T}};\mathcal{K})=0,\quad \forall\mathcal{T}\subset[N], |\mathcal{T}|=T.
\end{align}
Upon receiving user's query $Q_{n}^{\mathcal{K}}$, the $n$-th server responds user with an answer $A_n^{\mathcal{K}}$, which is function of the query and its storage, i.e.,
\begin{align}\label{eq:mmxstpiransfunc}
H_q(A_n^{\mathcal{K}}|Q_{n}^{\mathcal{K}},S_n)=0.
\end{align}
The user must be able to recover desired messages $W_{\mathcal{K}}$ from all answers $A_{[N]}^{\mathcal{K}}$.
\begin{align}\label{eq:mmxstpirdecode}
H_q(W_{\mathcal{K}}|A_{[N]}^{\mathcal{K}},Q_{[N]}^{\mathcal{K}},\mathcal{K})=0.
\end{align}
The rate of a multi-message XSTPIR scheme is defined by the number of $q$-ary symbols of desired messages that are retrieved per downloaded $q$-ary symbol,
\begin{align}
R=\frac{H_q(W_{\mathcal{K}})}{\sum_{n\in[N]}A_{n}^{\mathcal{K}}}=\frac{ML}{D}.
\end{align}
$D=\sum_{n\in[N]}A_{n}^{\mathcal{K}}$ is expected number of downloaded $q$-ary symbols from all servers. The capacity of multi-message XSTPIR is the supremum of rate over all feasible schemes, denoted as $C_{\text{MM-XSTPIR}}(N,X,T,K,M)$.

Note that setting $X=0$ reduces the problem to basic multi-message $T$-private information retrieval where storage is not secure. The setting $T=0$ reduces the problem to $X$-secure storage with no privacy constraint.

\section{Results}\label{sec:observ}
\subsection{A Connection between SDBMM and MM-XSTPIR}
{Let us begin by identifying a connection between SDBMM and MM-XSTPIR.}
\begin{lemma}\label{lemma:main}
The following bounds apply.
\begin{align}
K\geq M&&\implies&&\max\left(C_{(\bf AB,B)}, C_{(\bf B,B)},  C_{(\bf AB,\phi)}, C_{(\bf B,\phi)}\right)\leq C_{\text{MM-XSTPIR}}(N,X_A,X_B,K,M),\label{eq:lemmamain1}
\\
K\geq L&&\implies&&C_{(\bf AB,\phi)}\leq C_{\text{MM-XSTPIR}}(N,X_B,X_A,K,L),\label{eq:lemmamain2}
\end{align}
where $C_{\text{MM-XSTPIR}}(N,X,T,K,M)$ is the capacity of MM-XSTPIR with $N$ servers, $X$-secure storage and $T$-private queries, retrieving $M$ out of $K$ messages.
\end{lemma}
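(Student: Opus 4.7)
The plan is to prove both inequalities by a direct reduction: starting from any SDMM scheme I construct an MM-XSTPIR scheme of at least the same rate, which upper-bounds the corresponding SDMM capacity by the MM-XSTPIR capacity.

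For \eqref{eq:lemmamain1} I interpret the $K$ columns of $\mathbf{A}$, each of length $L$, as the $K$ messages of the MM-XSTPIR instance. Server $n$ stores the share $\widetilde{A}_{[S]}^n$ produced by the SDMM encoder, which by \eqref{eq:securea} gives the required $X_A$-secure storage. To retrieve the messages indexed by $\mathcal{M}=\{i_1,\dots,i_M\}\subset[K]$, the PIR user sets $\mathbf{B}$ to be the column-selection matrix whose $j$-th column is the standard basis vector $e_{i_j}$; the SDMM $X_B$-encoder applied to this $\mathbf{B}$ then produces shares that are delivered to the servers as the $X_B$-private queries, since \eqref{eq:secureb} hides $\mathbf{B}$, and therefore $\mathcal{M}$, from any $X_B$ colluding servers. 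The hypothesis $K\geq M$ is exactly what permits $M$ distinct standard basis vectors to fit into the columns of $\mathbf{B}$. Correctness is immediate from $\mathbf{AB}=[\mathbf{A}e_{i_1},\dots,\mathbf{A}e_{i_M}]=W_{\mathcal{M}}$ together with \eqref{eq:correct}, and the trivial entropy bound $H(\mathbf{AB}\mid\Psi)\leq LM$ ($\mathbf{AB}$ is an $L\times M$ matrix over $\mathbb{F}_q$) versus $H(W_{\mathcal{M}})=LM$ shows that the induced PIR rate is at least the SDMM rate. This single construction handles all four SDMM variants on the left of \eqref{eq:lemmamain1}: the $(\mathbf{AB},\mathbf{B})$ and $(\mathbf{B},\mathbf{B})$ versions are compatible because the PIR user generates $\mathbf{B}$ itself and therefore has it as side-information for free, and the $(\mathbf{B},\phi)$ and $(\mathbf{B},\mathbf{B})$ versions simply take $X_A=0$, which corresponds to MM-XSTPIR with zero message security.

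For \eqref{eq:lemmamain2} I apply the mirror-image reduction: the $K$ rows of $\mathbf{B}$, each of length $M$, are now the messages, and each of the $L$ rows of $\mathbf{A}$ is set to a standard basis vector so that $\mathbf{AB}$ lists the $L$ desired rows of $\mathbf{B}$. The hypothesis $K\geq L$ provides enough rows to choose among, and the bound is restricted to $C_{(\mathbf{AB},\phi)}$ because $\mathbf{B}$ now plays the role of the PIR messages (so $X_B>0$ is required, ruling out the $(\mathbf{B},\cdot)$ versions) and the user must not receive $\mathbf{B}$ as side-information for the PIR problem to be non-trivial (ruling out $(\mathbf{AB},\mathbf{B})$). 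The main technical subtlety I expect is a distributional mismatch: the SDMM scheme is specified for uniform $\mathbf{A},\mathbf{B}$, while in the reduction one of the matrices is forced into a specific structured form, so the denominator $D$, an expected download under the SDMM distribution, need not literally equal the download in the reduced scheme. Correctness and security transfer input-wise through \eqref{eq:correct} and the secret-sharing structure of the encoders, so I would close the rate argument by restricting to (or approaching in the capacity limit) SDMM schemes whose download length is a constant independent of the realization, in which case the download budget transfers exactly and the rate inequality $R_{\mathrm{SDMM}}\leq R_{\text{MM-XSTPIR}}$ follows.
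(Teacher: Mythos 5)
Your reduction construction is exactly the paper's: columns of $\mathbf{A}$ as messages, $\mathbf{B}$ as the column-selector query, shares $\widetilde{A}^n$ as secure storage and $\widetilde{B}^n$ as private queries, and the mirror image for \eqref{eq:lemmamain2}. However, you correctly spot the crucial subtlety --- the download $D$ is averaged over uniform $\mathbf{A},\mathbf{B}$, while the reduction forces $\mathbf{B}$ (resp.\ $\mathbf{A}$) into a measure-zero structured subset --- and then resolve it incorrectly. Proposing to ``restrict to SDMM schemes whose download length is a constant independent of the realization'' is not a legitimate move: the capacity is a supremum over \emph{all} schemes, and one cannot pass to constant-length schemes without proving this incurs no rate loss (padding to the maximum length can strictly hurt, and a concentration/blocking argument is neither given nor obviously available under the problem's definition of $D$). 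The paper closes this gap in a much cleaner way, using the security constraint itself: because $X_B>0$, constraint \eqref{eq:secureb} (together with \eqref{eq:sepencode}) forces $\mathbf{B}$ to be independent of $\widetilde{B}^n_{[S]}$ and hence of $\Delta_n$ for every $n$, so conditioning on $\mathbf{B}$ lying in the restricted set leaves the distribution of each $\Delta_n$, and therefore the expected download $D$, \emph{exactly} unchanged. No restriction to constant-length schemes is needed.

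This same observation is what determines the scope of \eqref{eq:lemmamain2}, and here your explanation is mistaken on a second point. You write that $X_B>0$ is required and that this ``rules out the $(\mathbf{B},\cdot)$ versions'' --- but the $(\mathbf{B},\cdot)$ versions \emph{do} have $X_B=X>0$, so this is self-contradictory and cannot be the reason. The actual reason is that in \eqref{eq:lemmamain2} the roles are swapped, so the download-preservation step now needs $\mathbf{A}$ to be independent of $\Delta_n$, which requires $X_A>0$; the $(\mathbf{B},\phi)$ and $(\mathbf{B},\mathbf{A})$ versions have $X_A=0$ and fail this requirement (and $(\mathbf{AB},\mathbf{B})$ is excluded because $\mathbf{B}$, the PIR data, cannot be user side-information). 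Finally, a minor omission: the paper handles general block length $S$ by taking $K$ messages of $SL$ symbols each (the $k$-th columns of $\mathbf{A}_1,\dots,\mathbf{A}_S$) and conditioning on $\mathbf{B}_1=\cdots=\mathbf{B}_S$, which your write-up does not address.
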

\begin{proof}
Let us first prove  the bound in  \eqref{eq:lemmamain1}, by showing that when $K\geq M$, then any SDBMM scheme where the side-information  available to the user is not\footnote{${\bf A}$ cannot be  the side-information because as noted, the transformation from SDBMM to MM-XSTPIR  interprets ${\bf A}$ as the data, which cannot be already available to the user in MM-XSTPIR. On the other hand, ${\bf B}$ may be included in the side-information because it is interpreted as the queries, which are automatically known to the user in MM-XSTPIR. Note that if ${\bf B}$ is also not included in the side-information, that just means that the user can retrieve ${\bf AB}$ from the downloads without using the knowledge of ${\bf B}$ in the resulting MM-XSTPIR scheme.}  ${\bf A}$, and where $X_B\neq 0$, automatically yields an MM-XSTPIR$(N,X_A, X_B, K, M)$ scheme with the same rate, essentially by thinking of ${\bf A}$ as the data and ${\bf B}$ as the query. Consider MM-XSTPIR with $K$ independent messages, each of which consists of $L$ i.i.d. uniform symbols in $\mathbb{F}_q$, say arranged in a column. For all $k\in[K]$, arrange these columns to form the matrix ${A}_1$ so that the $k^{th}$ column of ${ A}_1$ represents the $k^{th}$ message. Let the $M$ desired message indices be represented by the corresponding columns of the $K\times K$  identity matrix, and let these $M$ columns be arranged to form the $K\times M$ matrix ${ B}_1$.
Note that retrieving the matrix product ${ A}_1{ B}_1$ is identical to retrieving the $M$ desired messages. Now any $(X_A, X_B)$ secure SDBMM scheme with $S=1$ that does not have ${\bf A}$ as side-information, conditioned on the realizations ${\bf A}_1=A_1, {\bf B}_1=B_1$, yields an MM-XSTPIR scheme by treating $\widetilde{A}_1^n$ as the $X_A$-secure data stored at the $n^{th}$ server and $\widetilde{B}_1^n$ as the $X_B$-private query sent by the user to the $n^{th}$ server, for all $n\in[N]$. For arbitrary $S>1$ we can simply extend the data by a factor of $S$, i.e., each message is comprised of $SL$ symbols, so that the $k^{th}$ message is represented by the $k^{th}$ columns of ${ A}_1, { A}_2,\cdots, {A}_S$, treated as realizations of ${\bf  A}_1, {\bf  A}_2,\cdots, {\bf A}_S$. Thus, conditioning on the realization ${\bf B}_1={\bf B}_2=\cdots={\bf B}_S=B_1$ gives us the $S$-fold extension of the same scheme.  Furthermore, since ${\bf B}$ matrices are secure, i.e., $X_B>0$ for all SDBMM settings that appear in  \eqref{eq:lemmamain1}, it follows that ${\bf B}$ is independent of $\widetilde{B}_{[S]}^n$ for any $n\in[N]$. This in turn implies that ${\bf B}$ is independent of the download $\Delta_n$ received from Server $n$. Therefore, conditioning on  ${\bf B}$ taking values in the set that corresponds to MM-XSTPIR (i.e., $B_1$ restricted to any choice of  $M$ columns of the $K\times K$ identity matrix) does not affect the distribution of $\Delta_n$, or the entropy $H_q(\Delta_n)$. In other words, the average download of the SDBMM scheme remains unchanged  as it is specialized to yield an MM-XSTPIR scheme as described above. Now, since the number of desired $q$-ary symbols retrieved by this feasible MM-XSTPIR scheme is $SLM$, the  average download, $D$ for the SDBMM scheme cannot be less than $SLM/C_{\text{MM-XSTPIR}}(N,X_A,X_B,K,M)$. Therefore, we have a bound on the rate of the SDBMM scheme as
\begin{align}
R&= \frac{H_q({\bf AB}\mid {\bf B})}{D}\\
&\leq \frac{H_q({\bf AB})}{D}\label{eq:cond}\\
&\leq\frac{SLM}{D}\label{eq:SLM}\\
&\leq \frac{SLM}{SLM/C_{\text{MM-XSTPIR}}(N,X_A,X_B,K,M)}\\
&=C_{\text{MM-XSTPIR}}(N,X_A,X_B,K,M)
\end{align}
In \eqref{eq:cond} we used the fact that conditioning reduces entropy so that $H_q({\bf AB}\mid {\bf B})\leq H_q({\bf AB})$. In \eqref{eq:SLM} we used the fact that the matrix ${\bf AB}$ has $SLM$ elements from $\mathbb{F}_q$, and since the uniform distribution maximizes entropy, $H_q({\bf AB})\leq SLM$. {The bound in \eqref{eq:lemmamain2} is similarly shown, by treating $\widetilde{B}_1^n$ as the $X_B$-secure data stored at the $n^{th}$ server and $\widetilde{A}_1^n$ as the $X_A$-private query sent by the user to the $n^{th}$ server.}
This completes the proof of Lemma \ref{lemma:main}.
\end{proof}

\subsection{An Upperbound on the Capacity of MM-XSTPIR}
Motivated by Lemma \ref{lemma:main}, an upper bound on the capacity of MM-XSTPIR is  presented in the following theorem.

\begin{theorem}\label{thm:mmxstpir}
The capacity of MM-XSTPIR is bounded as follows.
\begin{align}
&C_{\text{MM-XSTPIR}}(N,X,T,K,M)\notag\\
\leq&\left\{
\begin{aligned}
&0,&&N\leq X,\\
&\frac{M(N-X)}{KN},&&X<N\leq X+T,\\
&\frac{N-X}{N}\left(1+\left(\frac{T}{N-X}\right)+\dots+\left(\frac{T}{N-X}\right)^{\lfloor\frac{K}{M}\rfloor-1}\right)^{-1},&&N>X+T.
\end{aligned}
\right.
\end{align}
\end{theorem}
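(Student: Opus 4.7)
The plan is to extend the decoding/security/privacy converse technique that yielded the XSTPIR bound in Jia--Sun--Jafar to the multi-message setting. Let $L$ denote the per-message entropy (in $q$-ary units), let $Q_n, S_n, A_n$ denote the query, share, and answer at server $n$, and let $\mathcal{M}\subseteq [K]$ with $|\mathcal{M}|=M$ be the desired index set; the rate is then $R = ML/D$ with $D=\sum_{n}H(A_n)$.

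For $N\leq X$, the $X$-security constraint instantiated at $\mathcal{X}=[N]$ forces $I(W_{[K]};S_{[N]})=0$, and since each $A_n$ is a function of $(Q_n,S_n)$ with queries independent of the messages, the whole transcript $A_{[N]}$ is independent of $W_{[K]}$, so $R=0$. For $X<N\leq X+T$, any $X$-subset $\mathcal{X}$ has complement of size $N-X\leq T$; by $T$-privacy the joint law of $(Q_{\bar{\mathcal{X}}}, A_{\bar{\mathcal{X}}})$ does not depend on the desired index set, so after coupling the single random variable $A_{\bar{\mathcal{X}}}$ must admit a decoder for $W_{\mathcal{M}'}$ for every $M$-subset $\mathcal{M}'$ (with $A_{\mathcal{X}}$ rendered useless by security), and therefore must determine $W_{[K]}$. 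This gives $KL\leq H(A_{\bar{\mathcal{X}}})\leq \sum_{n\in\bar{\mathcal{X}}}H(A_n)$; averaging uniformly over all $X$-subsets yields $KL\leq \tfrac{N-X}{N}D$, i.e., $R\leq M(N-X)/(KN)$.

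For $N>X+T$ this coverage trick breaks because $|\bar{\mathcal{X}}|>T$, so I would adapt the inductive XSTPIR converse by treating blocks of $M$ messages as ``super-messages''. Partitioning (part of) $[K]$ into $k=\lfloor K/M\rfloor$ disjoint $M$-subsets $\mathcal{M}_1,\dots,\mathcal{M}_k$, the goal is a recursive bound of the shape
\[
H(A_{\mathcal{S}}\mid W_{\mathcal{M}_1},\dots,W_{\mathcal{M}_j}) \;\geq\; ML + \tfrac{T}{N-X}\,H(A_{\mathcal{S}'}\mid W_{\mathcal{M}_1},\dots,W_{\mathcal{M}_{j+1}}),
\]
for appropriately chosen shrinking sets of effective servers $\mathcal{S}\supseteq\mathcal{S}'$. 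At each step, decoding extracts an additional $ML$ symbols of desired information (after using $T$-privacy to swap the target from $\mathcal{M}_j$ to $\mathcal{M}_{j+1}$), while the factor $T/(N-X)$ arises from restricting the ``active'' set from the $N-X$ effective servers (the $X$ removed being made useless by security) to the $T$-subset over which the next privacy swap is available. Unrolling the recursion $k$ times and combining with $R=ML/D$ produces the claimed geometric series bound.

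The main obstacle will be making the last case rigorous: privacy controls $T$-subsets of servers while security controls $X$-subsets of shares, so each recursive step must interleave both constraints against a growing tower of conditioning, and extracting exactly the factor $T/(N-X)$ per round requires carefully choosing which servers remain active after each swap; this is also what forces $\lfloor K/M\rfloor$ rather than $K$ into the final expression, since the recursion can be unrolled only as many times as disjoint $M$-blocks fit inside $[K]$.
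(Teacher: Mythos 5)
Your outline is sound, and the first and third cases follow the paper's converse route. The interesting divergence is the middle case $X<N\leq X+T$: the paper does \emph{not} argue directly that $A_{\overline{\mathcal{X}}}$ determines $W_{[K]}$. Instead it reuses the same recursive machinery as the third case but with \emph{overlapping} index sets $\mathcal{K}_i=\{i,\dots,i+M-1\}$, $i\in[K-M+1]$: starting from $ML\leq\sum_{n\in\overline{\mathcal{X}}}D_n-H(A_{\overline{\mathcal{X}}}^{\mathcal{K}_1}\mid S_{\mathcal{X}},Q_{[N]}^{\mathcal{K}_1},W_{\mathcal{K}_1})$, each privacy swap from $\mathcal{K}_i$ to $\mathcal{K}_{i+1}$ (enabled because $|\overline{\mathcal{X}}|\leq T$) releases exactly one new message worth $L$ symbols, so after $K-M$ iterations $\sum_{n\in\overline{\mathcal{X}}}D_n\geq KL$. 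Your coupling argument reaches the same inequality in one shot: since the joint law of $(Q_{\overline{\mathcal{X}}},A_{\overline{\mathcal{X}}},S_{[N]},W_{[K]})$ is invariant to the desired index set, the triple $(A_{\overline{\mathcal{X}}},Q_{\overline{\mathcal{X}}},S_{\mathcal{X}})$ must decode every $M$-subset and hence all of $W_{[K]}$. This is valid provided you check that the correctness constraint can be rewritten as $H(W_{\mathcal{K}}\mid A_{\overline{\mathcal{X}}},Q_{\overline{\mathcal{X}}},S_{\mathcal{X}})=0$, absorbing $A_{\mathcal{X}}$ as a function of $(Q_{\mathcal{X}},S_{\mathcal{X}})$ and dropping $Q_{\mathcal{X}}$ via the query--storage independence given $S_{\mathcal{X}}$ (the analogue of the paper's conditioning-reduction lemma). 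The payoff is a shorter, non-recursive proof for this regime.

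For $N>X+T$ your plan matches the paper's (disjoint $M$-blocks, privacy swap across a $T$-subset of $\overline{\mathcal{X}}$, Han's inequality), but the stated recursion shape is wrong: it must read $H_j\geq\frac{T}{N-X}\left(ML+H_{j+1}\right)$, not $H_j\geq ML+\frac{T}{N-X}H_{j+1}$, because the $ML$ symbols are extracted only \emph{after} the Han step has converted to the answer set tagged with $\mathcal{K}_{j+1}$ (one cannot extract $W_{\mathcal{K}_j}$ from a term already conditioned on it). Your shape would unroll to $ML\left(2+\frac{T}{N-X}+\dots+\left(\frac{T}{N-X}\right)^{\lfloor K/M\rfloor-2}\right)$, strictly larger than the claimed geometric sum when $T<N-X$, so it cannot be correct. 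Also, the ``shrinking effective server set'' picture is slightly off: the active set remains $\overline{\mathcal{X}}$ at every stage; one only temporarily descends to a $T$-subset $\mathcal{T}\subset\overline{\mathcal{X}}$ for the privacy swap, then returns to $\overline{\mathcal{X}}$ by averaging over $\mathcal{T}$ via Han's inequality, which is precisely where the $T/(N-X)$ factor comes from.
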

The proof of Theorem \ref{thm:mmxstpir} is presented in Appendix \ref{sec:proofmmxstpir}. Note that Theorem \ref{thm:mmxstpir} also works for trivial security or privacy, i.e., when $X=0$ or $T=0$.

\begin{remark} In fact a closer connection exists between SDBMM and MM-XSTPC, i.e., multi-message $X$-secure $T$-private (linear) computation problem that is an extension of the private computation problem studied in \cite{Sun_Jafar_PC}. However, we use only  the connection to MM-XSTPIR because this setting is simpler and the  connection between SDBMM and MM-XSTPIR suffices for our purpose .
\end{remark}

\subsection{Entropies of Products of Random Matrices}
The following lemma is needed to evaluate the numerator in the rate expressions for SDBMM schemes.
\begin{lemma}\label{lemma:Hab}
Let $\mathbf{A}$, $\mathbf{B}$ be random matrices independently and uniformly distributed over $\mathbb{F}_q^{L\times K}$, $\mathbb{F}_q^{K\times M}$, respectively. As $q\rightarrow\infty$, we have
\begin{align}
H_q(\mathbf{A} \mathbf{B})&=\left\{
\begin{aligned}
&LM,&&K\geq \min(L,M)\\
&LK+KM-K^2,&&K<\min(L,M)
\end{aligned}
\right.,\label{eq:Hab}\\
H_q(\mathbf{A} \mathbf{B} \mid \mathbf{A})&=\min(LM,KM),\label{eq:Haba}\\
H_q(\mathbf{A} \mathbf{B} \mid \mathbf{B})&=\min(LM,LK),\label{eq:Habb}
\end{align}
in $q$-ary units.
\end{lemma}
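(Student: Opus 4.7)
The plan is to compute these entropies by direct counting over $\mathbb{F}_q$ and then taking $q\to\infty$, exploiting the fact that a uniformly random matrix over a large field almost surely has the maximal possible rank. Since $H(\mathbf{A})=LK$ and $H(\mathbf{B})=KM$ exactly, the three formulas will follow once the rank structure of $\mathbf{A}$, $\mathbf{B}$, and $\mathbf{AB}$ is understood.

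For the conditional entropies $H(\mathbf{AB}\mid\mathbf{A})$ and $H(\mathbf{AB}\mid\mathbf{B})$, I would fix a realization $\mathbf{A}=A$ and observe that $\mathbf{B}\mapsto A\mathbf{B}$ is $\mathbb{F}_q$-linear with image of dimension $M\cdot\mathrm{rank}(A)$ and fibers of size $q^{M(K-\mathrm{rank}(A))}$. A uniform $\mathbf{B}$ therefore induces a uniform distribution on this image, giving $H(\mathbf{AB}\mid\mathbf{A}=A)=M\cdot\mathrm{rank}(A)$, hence $H(\mathbf{AB}\mid\mathbf{A})=M\cdot\mathbb{E}[\mathrm{rank}(\mathbf{A})]$. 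Since $\mathbf{A}$ is uniform in $\mathbb{F}_q^{L\times K}$, $\Pr(\mathrm{rank}(\mathbf{A})=\min(L,K))\to 1$ as $q\to\infty$, so $H(\mathbf{AB}\mid\mathbf{A})\to M\cdot\min(L,K)=\min(LM,KM)$; the statement for $H(\mathbf{AB}\mid\mathbf{B})$ is identical by transposition.

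For the unconditioned $H(\mathbf{AB})$, I would split on whether $K\ge\min(L,M)$. In the former case, WLOG $K\ge L$; then $\mathbf{A}$ has full row rank $L$ with probability $1-O(1/q)$, and on this event $\mathbf{B}\mapsto\mathbf{AB}$ is a surjection onto $\mathbb{F}_q^{L\times M}$, so $\mathbf{AB}$ is uniform on $\mathbb{F}_q^{L\times M}$; mixing in the vanishing rank-deficient event gives $H(\mathbf{AB})\to LM$ by continuity of entropy on the simplex. In the latter case, both $\mathbf{A}$ and $\mathbf{B}$ have rank exactly $K$ with probability $1-O(1/q)$, hence $\mathbf{AB}$ has rank exactly $K$. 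The key step is to show that $\mathbf{AB}$ is asymptotically uniform over the set of rank-$K$ matrices in $\mathbb{F}_q^{L\times M}$. I would do this by fixing any rank-$K$ matrix $C$ and counting pairs $(\mathbf{A},\mathbf{B})$ with $\mathbf{A}\mathbf{B}=C$: such a pair must have $\mathbf{A}$ and $\mathbf{B}$ of rank $K$ (since $\mathrm{rank}(\mathbf{A}\mathbf{B})\le\min(\mathrm{rank}(\mathbf{A}),\mathrm{rank}(\mathbf{B}))\le K$), the column space of $\mathbf{A}$ must coincide with that of $C$, and then $\mathbf{B}$ is determined by $\mathbf{A}$, with the remaining ambiguity being exactly the $GL_K(\mathbb{F}_q)$ action $(\mathbf{A},\mathbf{B})\mapsto(\mathbf{A}G^{-1},G\mathbf{B})$. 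So the number of factorizations is $|GL_K(\mathbb{F}_q)|\sim q^{K^2}$, independent of $C$, and hence $\Pr(\mathbf{AB}=C)\sim q^{K^2-LK-KM}$ is uniform in $C$; the standard $q$-analog count $\sim q^{LK+KM-K^2}$ of rank-$K$ matrices then yields $H(\mathbf{AB})\to LK+KM-K^2$.

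The main obstacle is keeping the $q\to\infty$ asymptotics clean: in every case the rank-deficient event has probability $O(1/q)$ and contributes at most $LM$ $q$-ary units of entropy, so by a Fano-type estimate this contribution is $o(1)$ in $q$-ary units as $q\to\infty$, but this should be stated explicitly to justify the limiting equalities. Beyond that, the proof is a sequence of elementary linear-algebra counts, with the $GL_K$-orbit argument in the second case of $H(\mathbf{AB})$ being the one combinatorial step that is not immediate.
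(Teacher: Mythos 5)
Your proof is correct, and it takes a genuinely different route from the one in the paper. The paper works entirely through block decompositions and entropy inequalities: it splits $\mathbf{A}$ and $\mathbf{B}$ into $K\times K$ (or $M\times M$) sub-blocks, uses the fact that these square random blocks are invertible with probability $\to 1$ as $q\to\infty$ (formalized in the paper's Lemma \ref{lemma:invert}, which says multiplication by an asymptotically-almost-surely-invertible square random matrix preserves entropy in the limit), and derives matching upper and lower bounds on $H(\mathbf{AB})$ by exhibiting, e.g., $\mathbf{AB}$ as a function of $(\mathbf{A}_2\mathbf{A}_1^{-1},\,\mathbf{A}_1\mathbf{B}_1,\,\mathbf{B}_1^{-1}\mathbf{B}_2)$ for the upper bound and conditioning on $\mathbf{B}_1$ for the lower bound. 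You instead go through the distributional structure directly: for $H(\mathbf{AB}\mid\mathbf{A})$ you observe $H(\mathbf{AB}\mid\mathbf{A}=A)=M\,\mathrm{rank}(A)$ from the image dimension of the linear map $\mathbf{B}\mapsto A\mathbf{B}$, and for $H(\mathbf{AB})$ in the non-full-rank regime you carry out the $GL_K$-orbit count to show $\mathbf{AB}$ is (conditionally) exactly uniform over the variety of rank-$K$ matrices, whose $q$-count is $\sim q^{LK+KM-K^2}$. Both routes rely on the same asymptotic input (random square matrices over $\mathbb{F}_q$ are nonsingular with probability $1-O(1/q)$) and both need the small Fano-type argument you flag to dispose of the rank-deficient tail; the tradeoff is that your counting argument makes the combinatorial origin of the exponent $LK+KM-K^2$ transparent, while the paper's decomposition argument stays purely within entropy-inequality manipulations and avoids any explicit counting.

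One small remark worth making explicit when you write this up: in the case $K\ge\min(L,M)$ the reduction to "WLOG $K\ge L$" is fine, but note that the case $K\ge M$, $K<L$ goes through the same full-row-rank argument applied to $\mathbf{B}^{T}\mathbf{A}^{T}$, i.e., it is really the surjectivity of $\mathbf{A}\mapsto\mathbf{A}\mathbf{B}$ when $\mathbf{B}$ has full row rank — so the "WLOG" is by transposition symmetry, which is exactly the symmetry you also invoke to get $H(\mathbf{AB}\mid\mathbf{B})$ from $H(\mathbf{AB}\mid\mathbf{A})$. Spelling that out makes the case analysis airtight.
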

The proof of Lemma \ref{lemma:Hab} appears in Appendix \ref{proof:lemmaHab}.

We now proceed to capacity characterizations for the various SDBMM models.
\subsection{Capacity of SDBMM$_{({\bf AB},\phi)}$}
Let us start with the basic SDBMM setting, where both matrices ${\bf A}, {\bf B}$ are $X$-secured, and there is no prior side-information available to the user. This is essentially the two-sided secure SDBMM setting considered previously in \cite{Chang_Tandon_SDMMOS, Kakar_Ebadifar_Sezgin_CSA}.
{\begin{theorem}\label{thm:cabphi}
The capacity of SDBMM$_{({\bf AB},\phi)}$, with $X_A=X_B=X$, is characterized under various settings as follows.
\begin{align}
N\leq X&&\implies& C_{({\bf AB,\phi})}=0\label{eq:cabphi1}\\
N>X, K=1&&\implies&C_{({\bf AB,\phi})}=1-\frac{X}{N}\label{eq:cabphi2}\\
2X\geq N > X, \frac{K}{\min(L,M)}\rightarrow\infty&&\implies& C_{({\bf AB,\phi})}\rightarrow 0\label{eq:cabphi3}\\
 N > 2X, \frac{K}{\min(L,M)}\rightarrow\infty&&\implies& C_{({\bf AB,\phi})}\rightarrow 1-\frac{2X}{N}\label{eq:cabphi4}\\
N>X, K\leq\min(L,M),  \frac{\max(L,M)}{K}\rightarrow\infty&&\implies&C_{({\bf AB,\phi})}\rightarrow 1-\frac{X}{N}\label{eq:cabphi5}\\
 2X\geq N>X, &&\implies&C_{({\bf AB,\phi})}\leq \left(1-\frac{X}{N}\right)\frac{\min(L,M,K)}{K}\label{eq:cabphi7}
 \end{align}
 \begin{align}
 N>2X, &&\implies&C_{({\bf AB,\phi})}\leq \left(1-\frac{X}{N}\right)\left(1+\left(\frac{X}{N-X}\right)+\dots+\left(\frac{X}{N-X}\right)^{\lfloor\frac{K}{\min(L,M,K)}\rfloor-1}\right)^{-1} \label{eq:cabphi8}
\end{align}
\end{theorem}}
Case \eqref{eq:cabphi1} is trivial because ${\bf A, B}$ are $X$-secure, and nothing is available to the user as side-information, which means that even if the user and the servers fully combine their knowledge, ${\bf A, B}$ remain a perfect secret.  The converse proof for cases \eqref{eq:cabphi2} and \eqref{eq:cabphi5} of Theorem \ref{thm:cabphi} is presented in Section \ref{sec:confs13}.  Converse proofs for  all other cases follow from 
Lemma \ref{lemma:main} and Theorem \ref{thm:mmxstpir}. For example, consider case \eqref{eq:cabphi3}. According to \eqref{eq:lemmamain1},\eqref{eq:lemmamain2}, we have $C_{({\bf AB,\phi})}\leq C_{\text{MM-XSTPIR}}(N,X,X,K,\min(L,M))$ which in turn is bounded by $\min(L,M)(N-X)/(KN)$ according to Theorem \ref{thm:mmxstpir}. Therefore, if $K/\min(L,M)\rightarrow\infty$, then we have the bound $C_{({\bf AB,\phi})}=0$. Converse bounds for other cases are found similarly.
The proof of achievability for case \eqref{eq:cabphi2} is presented in Section \ref{sec:achfs1}. All other achievability results are presented in Section \ref{sec:achfs2}.

\begin{remark}\label{rem:highlights}
{ 
Let us highlight some special cases of the capacity of SDBMM$_{({\bf AB},\phi)}$ from Theorem \ref{thm:cabphi}.
\begin{enumerate}
  \item Suppose $K=1$, i.e., ${\bf A}_s$ are column vectors and ${\bf B}_s$ are row vectors, so that the desired computations ${\bf A}_s{\bf B}_s$ are outer products. From case \eqref{eq:cabphi2}, the capacity for this setting is $\left(1-\frac{X}{N}\right)^+$. 
  \item Suppose $L=M=1$ and $K\rightarrow\infty$, i.e., ${\bf A}_s$ are  row vectors and ${\bf B}_s$ are  column vectors, so that the desired computations ${\bf A}_s{\bf B}_s$ are inner products of long $(K\rightarrow\infty)$ vectors.   From case \eqref{eq:cabphi4}, the capacity for this setting approaches $\left(1-\frac{2X}{N}\right)^+$ as the length of the vectors ($K$) approaches infinity. 
  \item Suppose $L=M=1$ and $K$ can take arbitrary values, i.e., ${\bf A}_s$ are  row vectors and ${\bf B}_s$ are  column vectors, so that the desired computations ${\bf A}_s{\bf B}_s$ are inner products of arbitrary length vectors.  For $N\leq 2X$, the capacity for this setting is $\frac{1}{K}\left(1-\frac{X}{N}\right)^+$. The converse follows from \eqref{eq:cabphi7}. The achievability is obtained by retrieving each of the $SK$ scalar products contained in the batch of inner products using the scheme in Section \ref{sec:achfs11} and computing the summations at the user locally. Therefore, the rate achieved is $\frac{1}{K}\left(1-\frac{X}{N}\right)^+$, which matches the converse bound.
  \item If the matrix product is not rank deficient, i.e., $K\geq \min(L,M)$, then the capacity is at least $\left(1-\frac{2X}{N}\right)^+$. This follows from the CSA based scheme in Section \ref{sec:csascheme} and Lemma \ref{lemma:Hab}.
  \item From cases \eqref{eq:cabphi7} and \eqref{eq:cabphi8}, when $K>2\min(L,M)$, the capacity of SDBMM$_{({\bf AB},\phi)}$ is strictly smaller than $\left(1-\frac{X}{N}\right)^+$.
\end{enumerate}
}
\end{remark}

\begin{remark}The capacity of $2$-sided SDBMM problem is characterized in \cite{Kakar_Ebadifar_Sezgin_CSA} as $\left(1-\frac{2X}{N}\right)^+$. Our capacity characterizations for cases \eqref{eq:cabphi2} and \eqref{eq:cabphi5} present a contradiction that calls into question\footnote{The information provided by the genie to the user in the converse proof of \cite{Kakar_Ebadifar_Sezgin_CSA} is subsequently considered  useless on the basis that it is independent of ${\bf AB}$. However, it turns out this independent side-information can still be useful in decoding ${\bf AB}$, just as a noise term ${\bf Z}$ that is independent of ${\bf AB}$ can still be useful in decoding ${\bf AB}$ from the  value ${\bf AB}+{\bf Z}$.}  the converse bound in \cite{Kakar_Ebadifar_Sezgin_CSA}. To further highlight the contradiction, note that in the SDBMM$_{\bf AB,\phi}$ problem, for arbitrary $L,K,M$, just by retrieving each of ${\bf A}, {\bf B}$ separately using the scheme described in Section \ref{sec:generalscheme}, it is possible to achieve a rate equal to $\frac{H_q({\bf AB})}{(LK+KM)}\left(1-\frac{X}{N}\right)\geq\frac{\min(LM, LK+KM-K^2)}{(LK+KM)}\left(1-\frac{X}{N}\right)$ which can be larger than $\left(1-\frac{2X}{N}\right)^+$. Since \cite{Kakar_Ebadifar_Sezgin_CSA} assumes $S=1$, consider for example, $L=K=M$, and $N=X+1$. Then  with $S=1$ this simple scheme achieves a rate $\frac{1}{2}\left(1-\frac{X}{N}\right)=\frac{1}{2N}$ which exceeds $\left(1-\frac{2X}{N}\right)^+=0$ for all $X>1$. On the other hand, the achievable scheme presented in \cite{Kakar_Ebadifar_Sezgin_CSA} does not\footnote{This is because $H_q({\bf AB})\neq LM$ when $K<\min(L,M)$. Instead, according to Lemma \ref{lemma:Hab}, $H_q({\bf AB})=LK+KM-M^2$. So while the download for the scheme in \cite{Kakar_Ebadifar_Sezgin_CSA} is $LMN/(N-2X)$, the rate achieved when $K<\min(L,M)$ is $H_q({\bf AB})/D=\frac{LK+KM-K^2}{LM}\left(1-\frac{2X}{N}\right)^+$, which is strictly smaller than $\left(1-\frac{2X}{N}\right)^+$ for $K<\min(L,M)$.} achieve the rate $\left(1-\frac{2X}{N}\right)^+$ when $K<\min(L,M)$, so  it remains unknown if $\left(1-\frac{2X}{N}\right)^+$ is even a lower bound on capacity in general.
\end{remark}

{
\begin{corollary}\label{cor:dimindabphi}
  The dimension-independent capacity of SDBMM$_{({\bf AB,\phi})}$ with $X_A=X_B=X$ is 
  \begin{align}
    \mathring{C}_{({\bf AB,\phi})}=\left\{\begin{aligned}
      &0, && N\leq 2X,\\
      &1-\frac{2X}{N}, && N>2X.
    \end{aligned}\right.
  \end{align}
\end{corollary}
Note that Corollary \ref{cor:dimindabphi} follows immediately from cases \eqref{eq:cabphi1}, \eqref{eq:cabphi3} and \eqref{eq:cabphi4} of Theorem \ref{thm:cabphi}. Also, Corollary \ref{cor:dimindabphi} fully characterizes the dimension-independent capacity of SDBMM$_{({\bf AB,\phi})}$.
}

\begin{corollary}\label{cor:chadabphi}
Consider a modification of the SDBMM$_{({\bf AB,\phi})}$ problem, where  instead of ${\bf AB}$, the user wants to retrieve the Hadamard product ${\bf A}\circ{\bf B}$. The capacity of this problem, i.e., the supremum of $H_q({\bf A}\circ{\bf B})/D$, as $q\rightarrow\infty$, is $1-X/N$.
\end{corollary}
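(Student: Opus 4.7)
\noindent\emph{Proof proposal.} The plan is to establish matching converse and achievability bounds of $1-X/N$: the converse via the classical two-sided secret-sharing averaging argument, and the achievability by reducing the Hadamard product to $LM$ parallel scalar multiplications. First, as $q\to\infty$ each entry $A_{ij}B_{ij}$ has entropy $1$ in $q$-ary units (this is the $L=K=M=1$ case of Lemma \ref{lemma:Hab}), and since the $LM$ entry-pairs $(A_{ij},B_{ij})$ are mutually independent, $H({\bf A}\circ{\bf B})=LM$ in the large-$q$ limit. This is the quantity against which the rate is normalized.

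For the converse I would invoke the joint security condition \eqref{eq:secure} with $X_A=X_B=X$: for every $\mathcal{X}\subset[N]$ of size $X$, the shares $\widetilde{A}_{[S]}^{\mathcal{X}},\widetilde{B}_{[S]}^{\mathcal{X}}$ are independent of $({\bf A},{\bf B})$, so the downloads $\Delta_\mathcal{X}$, being functions of these shares, are independent of ${\bf A}\circ{\bf B}$. Combining this with correctness \eqref{eq:correct} gives
\begin{align*}
H({\bf A}\circ{\bf B}) \;=\; I({\bf A}\circ{\bf B};\Delta_{[N]\setminus\mathcal{X}}\mid \Delta_\mathcal{X}) \;\leq\; H(\Delta_{[N]\setminus\mathcal{X}}) \;\leq\; \sum_{n\notin\mathcal{X}}H(\Delta_n).
\end{align*}
Averaging this inequality over all $\binom{N}{X}$ subsets $\mathcal{X}$ and using the identity $\binom{N-1}{X}/\binom{N}{X}=(N-X)/N$ yields $H({\bf A}\circ{\bf B})\leq(1-X/N)\sum_n H(\Delta_n)\leq(1-X/N)D$, so $R\leq 1-X/N$.

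For achievability I would observe that the Hadamard product is informationally equivalent to the length-$LM$ list of independent scalar products $\{A_{ij}B_{ij}\}$, which is exactly an SDMM$_{({\bf AB},\phi)}$ instance with scalar dimensions $L'=K'=M'=1$ and $S=LM$ parallel blocks (one per entry pair). Case \eqref{eq:cabphi2} of Theorem \ref{thm:cabphi} asserts capacity $1-X/N$ for arbitrary $S$ whenever $K=1$, so its constructive proof (to be given in Section \ref{sec:achfs1}) delivers a scheme with total download $LM\cdot N/(N-X)$, and combined with $H({\bf A}\circ{\bf B})=LM$ this matches the converse. The main obstacle is not the averaging converse, which is standard, but rather confirming that the ``transformation of scalar multiplication into scalar addition'' construction alluded to in the introduction indeed supports $S=LM$ parallel scalar products at a constant per-block download of $N/(N-X)$; once that is in hand the corollary follows immediately.
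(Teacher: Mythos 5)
Your proof is correct and follows essentially the same route as the paper: the converse replaces ${\bf AB}$ by ${\bf A}\circ{\bf B}$ in the two-sided-security averaging argument of Section~\ref{sec:confs13} (you perform the averaging via a direct combinatorial count over the $\binom{N}{X}$ subsets rather than invoking Han's inequality, which is an equivalent and slightly more elementary way to obtain the same $(1-X/N)\sum_n H(\Delta_n)$ bound), and the achievability reduces the Hadamard product to parallel scalar multiplications handled by the transformation-based scheme. The only minor difference is that the paper's one-line proof points directly to the pure scalar scheme of Section~\ref{sec:achfs11} (the $K=L=M=1$ case) rather than routing through the more general $K=1$ statement of case~\eqref{eq:cabphi2}, but both resolve to the identical construction for Hadamard products.
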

The converse for Corollary \ref{cor:chadabphi} follows directly from the converse proof of Theorem \ref{thm:cabphi},  case \eqref{eq:cabphi2} in Section \ref{sec:confs13} where we replace  ${\bf AB}$ with ${\bf A}\circ{\bf B}$. On the other hand, since the Hadamard product is the entrywise product of matrices, thus the scalar multiplication scheme presented in Section \ref{sec:achfs11} achieves the capacity.

\subsection{Capacity of SDBMM$_{(\bf B,A)}$}
The next SDBMM model we consider corresponds to the one-sided SDBMM problem considered in \cite{Chang_Tandon_SDMMOS}. Recall that the one-sided security model in \cite{Chang_Tandon_SDMMOS} assumes that one of the matrices is a constant matrix known to everyone. This corresponds to the ${\bf A}$ matrix in our model of  SDBMM$_{(\bf B,A)}$ because ${\bf A}$ is not secured and is available to the user as side-information. Here our capacity result is consistent with \cite{Chang_Tandon_SDMMOS}.
\begin{theorem}\label{thm:cba}
The capacity of SDBMM$_{(\bf B,A)}$ with $X_A=0, X_B=X$ is
\begin{align}
C_{(\bf B,A)}=\left\{
\begin{aligned}
&0,&&N\leq X,\\
&1-\frac{X}{N},&&N>X.
\end{aligned}
\right.
\end{align}
\end{theorem}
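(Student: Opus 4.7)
The plan is to handle the two regimes $N\leq X$ and $N>X$ separately, proving a matching converse and achievability in each. For the trivial regime $N\leq X$, the security condition \eqref{eq:secureb} applied to $\mathcal{X}=[N]$ (permitted since the footnote interprets security as ``up to $X_B$'') forces $I({\bf B};\widetilde{B}_{[S]}^{[N]})=0$. Combined with the independence of ${\bf A}$ from $({\bf B},\widetilde{B}_{[S]}^{[N]})$ guaranteed by \eqref{eq:sepencode}, this yields $I({\bf AB};\Delta_{[N]}\mid{\bf A})=0$, and correctness then forces $H({\bf AB}\mid{\bf A})=0$, which contradicts Lemma~\ref{lemma:Hab} whenever $L,K,M\geq 1$. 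Hence no scheme of positive rate exists and $C_{({\bf B},{\bf A})}=0$. For the converse in the nontrivial regime $N>X$, I would apply the same argument to every $\mathcal{X}\subset[N]$ with $|\mathcal{X}|=X$ to obtain $I({\bf AB};\Delta_{\mathcal{X}}\mid{\bf A})=0$, so that correctness and the chain rule give
\begin{align*}
H({\bf AB}\mid{\bf A}) \;=\; I({\bf AB};\Delta_{[N]\setminus\mathcal{X}}\mid\Delta_{\mathcal{X}},{\bf A}) \;\leq\; H(\Delta_{[N]\setminus\mathcal{X}}) \;\leq\; \sum_{n\notin\mathcal{X}}H(\Delta_n).
\end{align*}
Averaging this over the $\binom{N}{X}$ size-$X$ subsets yields $H({\bf AB}\mid{\bf A})\leq \tfrac{N-X}{N}\sum_n H(\Delta_n)\leq\tfrac{N-X}{N}D$, so $R\leq 1-X/N$.

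For achievability when $N>X$, I would use Shamir-style sharing of ${\bf B}$ with $S=N-X$ blocks: fix distinct $\alpha_1,\dots,\alpha_N\in\mathbb{F}_q$ and $X$ i.i.d.\ uniform $K\times M$ noise matrices ${\bf Z}_1,\dots,{\bf Z}_X$, and set
\begin{align*}
\widetilde{B}^n \;=\; \sum_{s=1}^{N-X}\alpha_n^{s-1}{\bf B}_s \;+\; \sum_{i=1}^{X}\alpha_n^{N-X+i-1}{\bf Z}_i,
\end{align*}
which is $X$-secure and admits degree-$(N-1)$ polynomial interpolation from the $N$ servers. I would then choose between two complementary response modes according to whether $\min(L,K)=L$ or $K$. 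If $L\leq K$, every server uses the public matrix ${\bf A}$ (available because $X_A=0$) to return the $L\times M$ product ${\bf A}\widetilde{B}^n$, a degree-$(N-1)$ polynomial in $\alpha_n$ whose first $N-X$ coefficients are the desired ${\bf AB}_1,\dots,{\bf AB}_{N-X}$; this gives $D=NLM$ against useful information $H({\bf AB}\mid{\bf A})=(N-X)LM$ by Lemma~\ref{lemma:Hab}, hence rate $1-X/N$. If instead $L\geq K$, each server just returns its $K\times M$ share $\widetilde{B}^n$, and the user interpolates ${\bf B}_1,\dots,{\bf B}_{N-X}$ and multiplies by the side-information ${\bf A}$ locally; then $D=NKM$ while $H({\bf AB}\mid{\bf A})=(N-X)KM$ (since now $\min(L,K)=K$), again giving rate $1-X/N$. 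The two schemes overlap at $L=K$, so all dimension regimes are covered.

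The main subtlety, and the reason a single uniform scheme is insufficient, is that by Lemma~\ref{lemma:Hab} the numerator $H({\bf AB}\mid{\bf A})$ equals $SM\min(L,K)$ rather than $SLM$, so the download-optimal scheme must match $N\min(L,K)M$ symbols in total. Using only the server-side product ${\bf A}\widetilde{B}^n$ when $L>K$, or only the user-side retrieval of ${\bf B}$ when $L<K$, wastes a factor of $\max(L,K)/\min(L,K)$ in rate; splitting into the two cases and letting the user exploit its side-information ${\bf A}$ in the regime $L\geq K$ is exactly what closes the gap to the converse. Verifying that $X$-security is preserved is immediate in both modes: in the first the server's response is a deterministic function of $({\bf A},\widetilde{B}^n)$, which on any $X$-subset of servers is independent of ${\bf B}$ via \eqref{eq:secureb} and \eqref{eq:sepencode}; in the second the shares themselves are transmitted without further processing.
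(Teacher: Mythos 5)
Your proof is essentially correct and follows the same outline as the paper (split into the $N\leq X$ and $N>X$ regimes; converse from a ``leakage-free $\mathcal{X}$'' argument plus averaging; achievability split on $\min(L,K)$), but there are two worthwhile differences and one imprecision.

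For the converse you replace the paper's use of Han's inequality with a direct counting argument: from $I(\mathbf{AB};\Delta_{\mathcal{X}}\mid\mathbf{A})=0$ you pass to $H(\mathbf{AB}\mid\mathbf{A})\leq\sum_{n\notin\mathcal{X}}H(\Delta_n)$ and then average over all $\binom{N}{X}$ choices of $\mathcal{X}$, noting each server is excluded in a fraction $(N-X)/N$ of the subsets. This is a valid and arguably more elementary route to the same $(1-X/N)\sum_n H(\Delta_n)$ bound; the paper instead writes $H(\mathbf{AB}\mid\mathbf{A})\leq H(\Delta_{[N]}\mid\mathbf{A})-H(\Delta_{\mathcal{X}}\mid\mathbf{A})$ and invokes Han's inequality. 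For achievability in the regime $K\geq L$ the paper uses the cross-subspace-alignment scheme (Cauchy-type coefficients $1/(f_s+\alpha_n)$), whereas you propose a plain Reed--Solomon polynomial share with the server multiplying by the public $\mathbf{A}$; since $X_A=0$ the full machinery of CSA is not needed here and your polynomial version is a genuine simplification, achieving the same download $NLM$.

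The one thing you should fix is the structure of the $L\leq K$ response. You write a single aggregate share $\widetilde{B}^n=\sum_s\alpha_n^{s-1}\mathbf{B}_s+\sum_i\alpha_n^{N-X+i-1}\mathbf{Z}_i$ and have the server return $\mathbf{A}\widetilde{B}^n$, but in this model $\mathbf{A}=(\mathbf{A}_1,\dots,\mathbf{A}_S)$ is a \emph{sequence} of matrices, not a single matrix, and the correctness target is $\mathbf{A}_s\mathbf{B}_s$ rather than $\mathbf{A}\mathbf{B}_s$. Multiplying the aggregated $\widetilde{B}^n$ by any one matrix cannot unmix the per-$s$ products. The correct form is to keep separate shares $\widetilde{B}_s^n=\alpha_n^{s-1}\mathbf{B}_s+\sum_{i\in[X]}\alpha_n^{N-X+i-1}\mathbf{Z}_{si}$ (this also satisfies the paper's per-$s$ independent-encoding requirement) and have the server return $\Delta_n=\sum_s\mathbf{A}_s\widetilde{B}_s^n$. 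Because the noise exponents $N-X,\dots,N-1$ are common to all $s$, this collapses to
\begin{align*}
\Delta_n=\sum_{s\in[S]}\alpha_n^{s-1}\mathbf{A}_s\mathbf{B}_s+\sum_{i\in[X]}\alpha_n^{N-X+i-1}\Bigl(\sum_{s\in[S]}\mathbf{A}_s\mathbf{Z}_{si}\Bigr),
\end{align*}
a degree-$(N-1)$ polynomial in $\alpha_n$ whose low-order coefficients are precisely the $\mathbf{A}_s\mathbf{B}_s$. With this correction the download remains $LM$ per server and the rate $1-X/N$ as you claim; the rest of your argument, including the $L\geq K$ branch, is fine.
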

Theorem \ref{thm:cba} fully characterizes the capacity of SDBMM$_{(\bf B,A)}$. The converse for Theorem \ref{thm:cba} follows along the same lines as the converse presented in \cite{Chang_Tandon_SDMMOS}, but for the sake of completeness we present the converse in Section \ref{sec:concba}. The proof of achievability  provided in  \cite{Chang_Tandon_SDMMOS} is tight only\footnote{ The achievable scheme for the one-sided secure setting in \cite{Chang_Tandon_SDMMOS} always downloads $NLM/(N-X)$ $q$-ary symbols, whereas the capacity achieving scheme needs to download only $NKM/(N-X)$ $q$-ary symbols when $K<L$.} if $K\geq L$ and  $L$ is a multiple of $N-X$. Therefore,  a complete proof of achievability is needed for Theorem \ref{thm:cba}. Such a proof is presented in Section \ref{sec:achcba}. {Also note that the capacity of SDBMM$_{(\bf B,A)}$ is {\it independent of} matrix dimensions $L, K$ and $M$.}

\subsection{Capacity of SDBMM$_{(\bf B,B)}$}
\begin{theorem}\label{thm:cbb}
The capacity of SDBMM$_{({\bf B},{\bf B})}$, with $X_A=0, X_B=X$, is characterized under various settings as follows.
\begin{align}
K\leq M&&\implies& C_{(\bf B,{\bf B})}=1&&~\label{eq:cbb1}\\
K> M, N\leq X&&\implies&C_{(\bf B,B)}=\frac{M}{K}&&\label{eq:cbb2}\\
 N>X, \frac{K}{M}\rightarrow\infty&&\implies&C_{(\bf B,B)}\rightarrow  1-\frac{X}{N}&&\label{eq:cbb3}\\
K>M, N>X&&\implies&C_{(\bf B,B)}\leq \left(1+\left(\frac{X}{N}\right)+\dots+\left(\frac{X}{N}\right)^{\lfloor\frac{K}{M}\rfloor-1}\right)^{-1}&&\label{eq:cbb4}
\end{align}
\end{theorem}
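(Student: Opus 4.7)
The proof of Theorem~\ref{thm:cbb} combines converse bounds derived from Lemma~\ref{lemma:main} and Theorem~\ref{thm:mmxstpir} with achievable schemes of increasing sophistication. My plan is to dispatch case~\eqref{eq:cbb1} by a trivial entropy argument, cases~\eqref{eq:cbb2} and \eqref{eq:cbb4} directly via the MM-XSTPIR connection, and case~\eqref{eq:cbb3} as the $K/M\to\infty$ limit of case~\eqref{eq:cbb4}. On the achievability side, cases~\eqref{eq:cbb1} and \eqref{eq:cbb2} succumb to the trivial scheme that simply downloads $\mathbf{A}$, while case~\eqref{eq:cbb3} requires a cross-subspace alignment scheme.

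For the converse of case~\eqref{eq:cbb1}, the correctness constraint \eqref{eq:correct} gives $H(\mathbf{AB}\mid\mathbf{B})\leq H(\Delta_{[N]}\mid\mathbf{B})\leq H(\Delta_{[N]})\leq D$, so $R\leq 1$. For cases~\eqref{eq:cbb2}, \eqref{eq:cbb3}, and \eqref{eq:cbb4}, the hypothesis $K>M$ satisfies $K\geq M$, so Lemma~\ref{lemma:main} eq.~\eqref{eq:lemmamain1} yields $C_{(\mathbf{B},\mathbf{B})}\leq C_{\text{MM-XSTPIR}}(N,0,X,K,M)$. Substituting storage security $0$ and query privacy $X$ into Theorem~\ref{thm:mmxstpir}: when $N\leq X$ the middle branch applies and yields $M(N-0)/(KN)=M/K$, proving case~\eqref{eq:cbb2}; when $N>X$ the third branch applies and yields $\bigl(1+(X/N)+\cdots+(X/N)^{\lfloor K/M\rfloor-1}\bigr)^{-1}$, proving case~\eqref{eq:cbb4}. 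Case~\eqref{eq:cbb3} then follows by letting $K/M\to\infty$ in this last bound: since $X/N<1$, the partial geometric sum converges to $N/(N-X)$, whose reciprocal is $1-X/N$.

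For the achievability of cases~\eqref{eq:cbb1} and \eqref{eq:cbb2}, I exploit $X_A=0$ by taking $\widetilde{A}_s^n=\mathbf{A}_s$, so that any single server alone holds $\mathbf{A}$. The user downloads all of $\mathbf{A}$ from one designated server, incurring $D=LK$ $q$-ary symbols, and computes $\mathbf{AB}$ locally from the known $\mathbf{B}$. By Lemma~\ref{lemma:Hab}, the achieved rate is $H(\mathbf{AB}\mid\mathbf{B})/D=\min(LM,LK)/LK=\min(M,K)/K$, which equals $1$ when $K\leq M$ and $M/K$ when $K>M$. For case~\eqref{eq:cbb3}, the target rate $1-X/N$ is approached by a cross-subspace alignment scheme in the spirit of \cite{Jia_Sun_Jafar_XSTPIR}: the $K$ columns of $\mathbf{A}$ serve as messages, the $X$-secure Lagrange-style shares $\widetilde{B}^n$ serve as private queries for the $M$ desired linear combinations, and each server returns a single bilinear expression in $\widetilde{A}^n,\widetilde{B}^n$ so that across the $N$ servers the interference aligns into at most $\lfloor K/M\rfloor-1$ common subspaces. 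Using knowledge of $\mathbf{B}$, the user cancels the interference and recovers the desired block of products, achieving rate $\bigl(1+(X/N)+\cdots+(X/N)^{\lfloor K/M\rfloor-1}\bigr)^{-1}$, which tends to $1-X/N$ as $K/M\to\infty$.

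The main obstacle is the construction and verification of the case~\eqref{eq:cbb3} scheme: one must simultaneously preserve $X$-security of $\mathbf{B}$ against any $X$ colluding servers, avoid spending download bandwidth on the encoding of $\mathbf{A}$ (exploiting $X_A=0$), and align interference across $\lfloor K/M\rfloor-1$ subspaces so that the user can cancel it using only the side-information $\mathbf{B}$. Decodability ultimately reduces to the invertibility of a Cauchy- or Vandermonde-type matrix built from the evaluation points, which is assured in the limit $q\to\infty$.
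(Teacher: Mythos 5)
Your converse is correct and follows the paper's route exactly: case~\eqref{eq:cbb1} is the trivial $R\leq 1$ bound, cases~\eqref{eq:cbb2} and \eqref{eq:cbb4} come from Lemma~\ref{lemma:main} with $(X_A,X_B)=(0,X)$ and the appropriate branch of Theorem~\ref{thm:mmxstpir}, and case~\eqref{eq:cbb3} is the $K/M\to\infty$ limit of~\eqref{eq:cbb4}. Your achievability for cases~\eqref{eq:cbb1} and \eqref{eq:cbb2} (download $\mathbf{A}$ from a single server for $LK$ symbols, compute locally using the side-information $\mathbf{B}$, rate $\min(LM,LK)/LK$) also matches the paper's scheme.

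The gap is in your achievability claim for case~\eqref{eq:cbb3}. You assert that a cross-subspace alignment scheme aligns interference into ``$\lfloor K/M\rfloor-1$ common subspaces'' and \emph{achieves} the rate $\bigl(1+(X/N)+\cdots+(X/N)^{\lfloor K/M\rfloor-1}\bigr)^{-1}$. That rate is the converse bound of~\eqref{eq:cbb4}; the paper never shows it to be achievable, and if it were achievable for general $K>M$, $N>X$, the capacity in~\eqref{eq:cbb4} would be an equality rather than an upper bound, closing a gap the paper explicitly leaves open. The alignment structure you describe is also not how the paper's CSA scheme works: with $X_A=0$, $X_B=X$ and $S=N-X$, server $n$ returns $\Delta_n=\sum_{s\in[S]}\widetilde{A}_s^n\widetilde{B}_s^n$, in which the $S$ desired products $\mathbf{A}_s\mathbf{B}_s$ ride on the Cauchy terms $1/(f_s+\alpha_n)$ and the interference (from the $X$ noise matrices protecting $\mathbf{B}$) occupies the $X$ Vandermonde terms $1,\alpha_n,\dots,\alpha_n^{X-1}$, giving $N=S+X$ total dimensions; the quantity $\lfloor K/M\rfloor$ plays no role in the scheme's alignment. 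This scheme downloads $NLM$ symbols and, since $H(\mathbf{AB}\mid\mathbf{B})=L\min(K,M)=LM$ when $K\geq M$ by Lemma~\ref{lemma:Hab}, achieves rate $SLM/(NLM)=1-X/N$ \emph{directly} for any $K>M$ (no limit is needed in the scheme itself); the $K/M\to\infty$ hypothesis is only required for the \emph{converse} bound~\eqref{eq:cbb4} to collapse to $1-X/N$. Replacing your over-claimed sketch with this direct calculation repairs the argument.
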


The converse for $K\leq M$ is trivial because the capacity by definition cannot exceed $1$. The converse for the remaining cases  follows directly from Lemma \ref{lemma:main} and Theorem \ref{thm:mmxstpir}. For example, consider the case \eqref{eq:cbb2}. According to Lemma \ref{lemma:main}, C$_{\bf B,B}\leq C_{\text{MM-XSTPIR}}(N,0,X,K,M)$ which is bounded by $M/K$ according to Theorem \ref{thm:mmxstpir}. Other cases follow similarly. The proof of achievability for Theorem \ref{thm:cbb} is provided in Section \ref{sec:achosrsi12}.

{
\begin{corollary}\label{cor:dimindbb}
  The dimension-independent capacity of SDBMM$_{({\bf B,B})}$ with $X_A=0, X_B=X$ is 
  \begin{align}
    \mathring{C}_{({\bf B,B})}=\left\{\begin{aligned}
      &0, && N\leq X,\\
      &1-\frac{X}{N}, && N>X.
    \end{aligned}\right.
  \end{align}
\end{corollary}
Note that Corollary \ref{cor:dimindbb} follows immediately from cases \eqref{eq:cbb2} (set $K/M\rightarrow\infty$) and \eqref{eq:cbb3} of Theorem \ref{thm:cbb}. Besides, Corollary \ref{cor:dimindbb} fully characterizes the dimension-independent capacity of SDBMM$_{({\bf B,B})}$.
}

\subsection{Capacity of SDBMM$_{({\bf B},\phi)}$}
\begin{theorem}\label{thm:cbphi}
The capacity of SDBMM$_{({\bf B},\phi)}$, with $X_A=0, X_B=X$, is characterized under various settings as follows.
\begin{align}
N\leq X&&\implies& C_{(\bf B,\phi)}=0&&~\label{eq:cbphi1}\\
K\geq L, N>X&&\implies&C_{(\bf B,\phi)}=\left(1-\frac{X}{N}\right)\frac{H_q({\bf AB})}{H_q({\bf AB}\mid {\bf A})}=\left(1-\frac{X}{N}\right)&&\label{eq:cbphi2}\\
K<L, N>X, \frac{K}{M}\rightarrow\infty&&\implies&C_{(\bf B,\phi)}\rightarrow 1-\frac{X}{N}&&\label{eq:cbphi3}\\
K\leq M, N>X, \frac{L}{M}\rightarrow\infty&&\implies& C_{(\bf B,\phi)}\rightarrow 1&&\label{eq:cbphi4}\\
K<L, N>X, \frac{M}{L}\rightarrow\infty&&
\implies&C_{(\bf B,\phi)}\rightarrow \left(1-\frac{X}{N}\right)\frac{H_q({\bf AB})}{H_q({\bf AB}\mid {\bf A})}=\left(1-\frac{X}{N}\right)&&\label{eq:cbphi5}\\
L>K\geq M, N>X&&\implies& C_{(\bf B,\phi)}\leq \left(1+\left(\frac{X}{N}\right)+\dots+\left(\frac{X}{N}\right)^{\lfloor\frac{K}{M}\rfloor-1}\right)^{-1}&&\label{eq:cbphi6}
\end{align}
\end{theorem}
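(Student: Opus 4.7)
My plan is to prove Theorem \ref{thm:cbphi} by combining three converse ingredients --- the MM-XSTPIR bound (Lemma \ref{lemma:main} with Theorem \ref{thm:mmxstpir}), a side-information comparison to Theorem \ref{thm:cba}, and trivial entropy arguments --- with CSA-style achievable schemes.

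For the converse, case \eqref{eq:cbphi1} is immediate: when $X\geq N$ the joint shares $\widetilde{B}_{[S]}^{[N]}$ are independent of $\mathbf{B}$, so the downloads carry no information about $\mathbf{B}$ and the correctness constraint $H(\mathbf{AB}\mid\Delta_{[N]})=0$ cannot be satisfied for generic $\mathbf{A},\mathbf{B}$. Case \eqref{eq:cbphi4} uses only the universal bound $R\leq H(\mathbf{AB})/D\leq 1$. Cases \eqref{eq:cbphi3} and \eqref{eq:cbphi6} satisfy $K\geq M$ (asymptotically for \eqref{eq:cbphi3}), so I would plug $(X_A,X_B)=(0,X)$ into \eqref{eq:lemmamain1} to obtain $C_{(\mathbf{B},\phi)}\leq C_{\text{MM-XSTPIR}}(N,0,X,K,M)$, then invoke the $N>X$ branch of Theorem \ref{thm:mmxstpir} to get $\bigl(1+(X/N)+\cdots+(X/N)^{\lfloor K/M\rfloor-1}\bigr)^{-1}$, whose $K/M\to\infty$ limit is $1-X/N$.

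For cases \eqref{eq:cbphi2} and \eqref{eq:cbphi5} I would introduce a side-information comparison: any scheme for SDMM$_{(\mathbf{B},\phi)}$ with download $D$ is automatically a valid scheme for SDMM$_{(\mathbf{B},\mathbf{A})}$, so Theorem \ref{thm:cba} forces $H(\mathbf{AB}\mid\mathbf{A})/D\leq 1-X/N$, equivalently
\[
C_{(\mathbf{B},\phi)}\leq \left(1-\frac{X}{N}\right)\cdot\frac{H(\mathbf{AB})}{H(\mathbf{AB}\mid\mathbf{A})}.
\]
By Lemma \ref{lemma:Hab} the ratio equals $1$ when $K\geq L$ (both sides equal $LM$), and equals $(L+M-K)/M\to 1$ when $K<L$ and $M/L\to\infty$, yielding the asserted bounds in both cases.

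For achievability of rate $1-X/N$ in cases \eqref{eq:cbphi2}, \eqref{eq:cbphi3}, and \eqref{eq:cbphi5}, I would use a cross-subspace-alignment construction: each server, exploiting its free access to $\mathbf{A}$ (since $X_A=0$), computes $\mathbf{A}\widetilde{B}^n$ from its $X$-secure Lagrange-style share $\widetilde{B}^n$ of $\mathbf{B}$, and coding across $S=N-X$ blocks aligns the $X$ interference coefficients on a common subspace so that the user linearly recovers $\mathbf{AB}$ from the $N$ downloads without ever needing $\mathbf{A}$ separately. Rate $1$ in case \eqref{eq:cbphi4} follows from a simpler asymptotic scheme that downloads essentially one symbol per entry of $\mathbf{AB}$ as $L/M\to\infty$. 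The main technical obstacle is the non-asymptotic achievability in case \eqref{eq:cbphi2}, where $K\geq L$ is arbitrary and leaves no room for asymptotic slack: the alignment of the $X$ interference coefficients across $S$ blocks must be exact, and the construction must guarantee decodability of $\mathbf{AB}$ despite both the user's lack of $\mathbf{A}$ and the $X$-security of $\mathbf{B}$.
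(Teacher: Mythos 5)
Your converse is correct and follows the paper's route exactly: \eqref{eq:cbphi1} and \eqref{eq:cbphi4} are trivial, \eqref{eq:cbphi3} and \eqref{eq:cbphi6} come from Lemma \ref{lemma:main} and Theorem \ref{thm:mmxstpir}, and the side-information comparison $C_{(\mathbf{B},\phi)}\leq \left(1-\frac{X}{N}\right)\frac{H(\mathbf{AB})}{H(\mathbf{AB}\mid\mathbf{A})}$ for \eqref{eq:cbphi2} and \eqref{eq:cbphi5} is exactly the argument used (with the ratio evaluated via Lemma \ref{lemma:Hab}).

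There is, however, a genuine gap on the achievability side, concentrated in case \eqref{eq:cbphi5}. The CSA scheme you propose has each server return an $L\times M$ matrix, so the total download is $NLM$ and the rate is $H(\mathbf{AB})/(NLM)$. In case \eqref{eq:cbphi5} we have $K<L$ and $M/L\to\infty$, hence $K<\min(L,M)$, and by Lemma \ref{lemma:Hab}, $H(\mathbf{AB})=S(LK+KM-K^2)$, which is strictly less than $SLM$. The CSA rate therefore tends to
\[
\frac{S(LK+KM-K^2)}{NLM}\;\xrightarrow[M/L\to\infty]{}\;\left(1-\frac{X}{N}\right)\frac{K}{L}\;<\;1-\frac{X}{N},
\]
so CSA does not meet the converse here. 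The paper instead uses the general scheme of Section \ref{sec:generalscheme} for both \eqref{eq:cbphi4} and \eqref{eq:cbphi5}: retrieve $\mathbf{A}_{[S]}$ from any $S$ servers (unsecured, cost $SLK$) and $\mathbf{B}_{[S]}$ in $X$-secure form from all $N$ servers (cost $NKM$), compute $\mathbf{AB}$ locally, giving rate $\frac{S(LK+KM-K^2)}{SLK+NKM}$, which tends to $1-\frac{X}{N}$ as $M/L\to\infty$ and to $1$ as $L/M\to\infty$ with $K\leq M$. Your description of \eqref{eq:cbphi4} as downloading ``one symbol per entry of $\mathbf{AB}$'' is also off: the download is $\approx SLK\ll SLM$; the rate is $1$ because both the download and $H(\mathbf{AB})$ are dominated by the $SLK$ term, not because you match the entry count of $\mathbf{AB}$. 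Finally, you flag \eqref{eq:cbphi2} as the ``main technical obstacle,'' but that case is in fact clean: with $K\geq L$ one has $H(\mathbf{AB})=SLM$ exactly, and the CSA scheme with $X_A=0$ hits $1-\frac{X}{N}$ with no asymptotic slack required. The real delicacy is recognizing that CSA is wasteful when $K<\min(L,M)$ and must be replaced by retrieve-then-multiply.
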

The case $N\leq X$ is trivial because  ${\bf B}$ must be $X$-secure and there is no side-information at the user, which means that neither the user, nor all servers together have any knowledge of ${\bf B}$. The converse for \eqref{eq:cbphi2} and \eqref{eq:cbphi5} follows from the fact that any SDBMM$_{\bf B,\phi}$ scheme is also a valid SDBMM$_{\bf B,A}$ scheme, so the  download, say $D_{\bf B,\phi}$ for the best SDBMM$_{\bf B,\phi}$ scheme cannot be less than the  download, say $D_{\bf B,A}$ for the best SDBMM$_{\bf B, A}$ scheme. For \eqref{eq:cbphi3} the converse follows directly from Lemma \ref{lemma:main} and Theorem \ref{thm:mmxstpir}. The converse for \eqref{eq:cbphi4} is trivial because the capacity can never be more than $1$ by definition.
Finally, the converse for \eqref{eq:cbphi6} also follows from Lemma \ref{lemma:main} and Theorem \ref{thm:mmxstpir}. The achievability results for Theorem \ref{thm:cbphi} are proved in Section \ref{sec:achos23}.

{
\begin{corollary}\label{cor:dimindbphi}
  The dimension-independent capacity of SDBMM$_{({\bf B,\phi})}$ with $X_A=0, X_B=X$ is 
  \begin{align}
    \mathring{C}_{({\bf B,\phi})}=\left\{\begin{aligned}
      &0, && N\leq X,\\
      &1-\frac{X}{N}, && N>X.
    \end{aligned}\right.
  \end{align}
\end{corollary}
Note that Corollary \ref{cor:dimindbphi} follows immediately from cases \eqref{eq:cbphi1}, \eqref{eq:cbphi2}, \eqref{eq:cbphi3} and \eqref{eq:cbphi5} of Theorem \ref{thm:cbphi}. Corollary \ref{cor:dimindbphi} fully characterizes the dimension-independent capacity of SDBMM$_{({\bf B,\phi})}$.
}

\subsection{Capacity of SDBMM$_{{\bf AB},{\bf B}}$}
\begin{theorem}\label{thm:cabb}
The capacity of SDBMM$_{({\bf AB},{\bf B})}$, with $X_A=X_B=X$, is characterized under various settings as follows.
\begin{align}
N\leq X&&\implies& C_{({\bf AB,B})}=0\label{eq:cabb1}\\
N>X, K\leq M&&\implies&C_{({\bf AB,B})}=1-\frac{X}{N}\label{eq:cabb2}\\
2X\geq N > X, K>M&&\implies& C_{({\bf AB,B})}=\frac{M(N-X)}{KN}\label{eq:cabb3}\\
N>2X, K>M, \frac{K}{M}\rightarrow\infty&&\implies&C_{({\bf AB,B})}\rightarrow 1-\frac{2X}{N}\label{eq:cabb4}\\
N>2X, K>M&&\implies &C_{({\bf AB,B})}\leq \frac{N-X}{N}\left(1+\left(\frac{X}{N-X}\right)+\dots+\left(\frac{X}{N-X}\right)^{\lfloor\frac{K}{M}\rfloor-1}\right)^{-1}\label{eq:cabb5}
\end{align}
\end{theorem}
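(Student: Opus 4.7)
The plan is to handle the cases according to which converse tool applies. Case \eqref{eq:cabb1} is immediate: when $N\le X_A=X$, the collection $\widetilde{A}^{[N]}$ reveals nothing about ${\bf A}$ by \eqref{eq:securea}, and ${\bf B}$ is independent of ${\bf A}$, so no scheme can recover ${\bf AB}$, giving $C_{({\bf AB,B})}=0$. For cases \eqref{eq:cabb3}, \eqref{eq:cabb4}, and \eqref{eq:cabb5}, all of which assume $K>M$, I invoke Lemma \ref{lemma:main} equation \eqref{eq:lemmamain1} with $X_A=X_B=X$, giving $C_{({\bf AB,B})}\le C_{\text{MM-XSTPIR}}(N,X,X,K,M)$, and apply Theorem \ref{thm:mmxstpir} with $T=X$: the range $X<N\le 2X$ falls in the second branch and yields $M(N-X)/(KN)$, proving \eqref{eq:cabb3}; the range $N>2X$ falls in the third branch and yields the geometric sum of \eqref{eq:cabb5}; and \eqref{eq:cabb4} follows by letting $K/M\to\infty$ in \eqref{eq:cabb5}, since $\sum_{i\ge 0}\bigl(X/(N-X)\bigr)^i=(N-X)/(N-2X)$ collapses the bound to $(N-2X)/N = 1-2X/N$.

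For the converse of case \eqref{eq:cabb2} ($K\le M$, $N>X$), Lemma \ref{lemma:main} does not apply, so I argue directly. By Lemma \ref{lemma:Hab}, $H({\bf AB}\mid{\bf B})=LK=H({\bf A})$, and for $K\le M$ the random matrix ${\bf B}$ has full row rank with probability tending to one as $q\to\infty$; on that event ${\bf A}={\bf AB}\,{\bf B}^{+}$, so $H({\bf A}\mid{\bf AB},{\bf B})=0$. Combined with the correctness condition \eqref{eq:correct}, this gives $H({\bf A}\mid\Delta_{[N]},{\bf B})\to 0$, i.e., the user essentially decodes ${\bf A}$. The $X$-security \eqref{eq:securea} together with the independence \eqref{eq:sepencode} yields $I({\bf A};\Delta_{\mathcal X})=0$ for every $\mathcal X\subset[N]$ with $|\mathcal X|\le X$ (since $\Delta_{\mathcal X}$ is a function of $\widetilde{A}^{\mathcal X}_{[S]},\widetilde{B}^{\mathcal X}_{[S]}$, the first is independent of ${\bf A}$ by \eqref{eq:securea}, and the second is independent of ${\bf A}$ by \eqref{eq:sepencode}). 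A standard averaging over all $X$-subsets of servers, of the type used in XSTPIR converses, then gives $D\ge NH({\bf A})/(N-X)$, hence $R=H({\bf AB}\mid{\bf B})/D\le 1-X/N$. Matching achievability is straightforward: encode ${\bf A}$ via an $(N,N-X)$ MDS secret-sharing, have the user download one share from each of the $N$ servers (total $NLK/(N-X)$ symbols), reconstruct ${\bf A}$, and locally compute ${\bf AB}$ using the known ${\bf B}$.

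Case \eqref{eq:cabb3} admits the very same ${\bf A}$-retrieval scheme: the download is again $NLK/(N-X)$, but now $H({\bf AB}\mid{\bf B})=LM$ by Lemma \ref{lemma:Hab} (since $K\ge M$), so $R=M(N-X)/(KN)$, meeting the converse. The main obstacle is the achievability of case \eqref{eq:cabb4}: as $K/M\to\infty$ the naive ${\bf A}$-retrieval rate $M(N-X)/(KN)$ tends to $0$, whereas the target rate is $1-2X/N>0$. I would therefore partition the contraction dimension into $\lfloor K/M\rfloor$ blocks of $M$ columns of ${\bf A}$ and matching rows of ${\bf B}$, so that ${\bf AB}=\sum_j{\bf A}^{(j)}{\bf B}^{(j)}$, and design a cross-subspace-alignment scheme across $S$ independent instances in which both the ${\bf A}^{(j)}$- and ${\bf B}^{(j)}$-shares are encoded as polynomials in a shared evaluation variable $\alpha_n$, so that after the server-side multiplications the $X$-security noise from \eqref{eq:securea}--\eqref{eq:secureb} collapses into a low-dimensional subspace that can be stripped off by Lagrange interpolation. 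The scheme should achieve the full expression $(N-X)/N\cdot\bigl(1+X/(N-X)+\cdots+(X/(N-X))^{\lfloor K/M\rfloor-1}\bigr)^{-1}$ of \eqref{eq:cabb5}, matching the converse term by term and limiting to $1-2X/N$. The hardest step will be arranging the polynomial structure so that the combined security noise from the ${\bf A}$- and ${\bf B}$-encodings aligns into a subspace whose dimension grows only like a geometric sum in $\lfloor K/M\rfloor$, rather than linearly in the number of blocks, and then verifying invertibility of the Vandermonde-type decoding system for $q\to\infty$.
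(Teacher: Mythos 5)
Your converse arguments are essentially correct. Cases \eqref{eq:cabb1}, \eqref{eq:cabb3}, \eqref{eq:cabb4}, \eqref{eq:cabb5} go through Lemma \ref{lemma:main} and Theorem \ref{thm:mmxstpir} exactly as the paper does. For case \eqref{eq:cabb2} you give a direct first-principles converse (decode ${\bf A}$ from ${\bf AB}$ and the full-row-rank ${\bf B}$, then average over $X$-subsets via Han's inequality); the paper instead relaxes the ${\bf B}$-security to reduce to $C_{({\bf A,B})}=C_{({\bf B,A})}=1-X/N$ via symmetry and Theorem \ref{thm:cba}. Both are valid; the paper's route is shorter and reuses an established result, yours is more self-contained. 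Your achievability for \eqref{eq:cabb2} and \eqref{eq:cabb3} --- download an $X$-secured ${\bf A}$ and compute locally with the side information ${\bf B}$ --- is the same idea as the paper's, which implements it as the General Scheme of Section \ref{sec:generalscheme} with $S=N-X$ and the ${\bf B}$-rows of the download dropped.

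The genuine gap is in the achievability of case \eqref{eq:cabb4}. You propose partitioning the inner dimension into $\lfloor K/M\rfloor$ blocks and designing a new cross-subspace-alignment scheme that attains the \emph{full} geometric bound $(N-X)/N\cdot\bigl(1+X/(N-X)+\cdots+(X/(N-X))^{\lfloor K/M\rfloor-1}\bigr)^{-1}$ of \eqref{eq:cabb5} term by term. No such scheme is constructed here, and indeed the paper itself does not claim this rate is achievable: case \eqref{eq:cabb5} is stated only as a converse inequality, so your plan --- if it worked --- would prove a strictly stronger result that the paper explicitly leaves open. The actual achievability of \eqref{eq:cabb4} is much simpler and requires no partitioning at all: apply the CSA scheme of Section \ref{sec:csascheme} directly with $S=N-X_A-X_B=N-2X$. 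Each server returns one $L\times M$ block, so $D=NLM$, and since $K\geq M$, Lemma \ref{lemma:Hab} gives $H({\bf AB}\mid{\bf B})=SLM$, yielding $R=SLM/(NLM)=1-2X/N$, which matches the converse in the $K/M\to\infty$ limit. The side information ${\bf B}$ is not even used by the decoder in this regime. You have identified the wrong obstacle --- the difficulty is not designing a scheme that tracks the geometric converse; it is that the CSA rate $1-2X/N$ and the converse simply coincide only in the limit, and for finite $K/M>1$ the capacity remains open.
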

The case \eqref{eq:cabb1} with $N\leq X$ is trivial because the ${\bf A}$ is $X$-secure and not available to the user as side-information, which means that it is unknown to both the user and all servers. The converse for \eqref{eq:cabb2} follows from the observation that relaxing the security constraint for ${\bf B}$ cannot hurt, so $C_{({\bf AB,B})}$ is bounded above by $C_{({\bf A,B})}$, which is equal to $C_{({\bf B,A})}=1-\frac{X}{N}$ by the symmetry of the problem and the result of Theorem \ref{thm:cba}. The converse for \eqref{eq:cabb3}, \eqref{eq:cabb4}, \eqref{eq:cabb5} follows from Lemma \ref{lemma:main} and Theorem \ref{thm:mmxstpir}.
The proof of achievability  for Theorem \ref{thm:cabb} appears in Section \ref{sec:achfssi12}.

{
\begin{corollary}\label{cor:dimindabb}
  The dimension-independent capacity of SDBMM$_{({\bf AB,B})}$ with $X_A=X_B=X$ is 
  \begin{align}
    \mathring{C}_{({\bf AB,B})}=\left\{\begin{aligned}
      &0, && N\leq 2X,\\
      &1-\frac{2X}{N}, && N>2X.
    \end{aligned}\right.
  \end{align}
\end{corollary}
Note that Corollary \ref{cor:dimindabb} follows immediately from cases \eqref{eq:cabb1}, \eqref{eq:cabb3} (set $K/M\rightarrow\infty$) and \eqref{eq:cabb4} of Theorem \ref{thm:cabb}. Besides, Corollary \ref{cor:dimindabb} fully characterizes the dimension-independent capacity of SDBMM$_{({\bf AB,B})}$.
}

\section{Converse}
\subsection{Proof of Converse for Theorem \ref{thm:cba}}\label{sec:concba}
The case $N\leq X$ is trivial because  ${\bf B}$ must be $X$-secure and not available to the user as side-information, which means that neither the user, nor all servers together have any knowledge of ${\bf B}$. Now let us consider the case $N>X$. Let $\mathcal{X}$ denote any subset of $[N]$ such that $|\mathcal{X}|=X$. We start with the following lemma.
\begin{lemma}\label{lemma:conoslsi1}
$I_q\left(\Delta_{\mathcal{X}};\mathbf{AB}\mid\mathbf{A}\right)=0$.
\end{lemma}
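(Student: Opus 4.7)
The plan is to show that conditioned on the side-information $\mathbf{A}$, the downloads obtained from any $X$ colluding servers are independent of $\mathbf{AB}$. The intuition is simple: given $\mathbf{A}$, the only uncertainty in $\mathbf{AB}$ comes from $\mathbf{B}$, which is $X$-secure and therefore perfectly hidden from any $X$ shares; since the download from a server is a deterministic function of the shares stored at that server, no group of $X$ servers' downloads can reveal anything new about $\mathbf{B}$ (and hence $\mathbf{AB}$) once $\mathbf{A}$ is known.

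First I would replace $\mathbf{AB}$ by $\mathbf{B}$ inside the mutual information. Since $\mathbf{AB}$ is a deterministic function of $(\mathbf{A}, \mathbf{B})$, the chain rule gives
\begin{align}
I(\Delta_{\mathcal{X}}; \mathbf{AB}\mid \mathbf{A})
&\leq I(\Delta_{\mathcal{X}}; \mathbf{AB}, \mathbf{B}\mid \mathbf{A})
= I(\Delta_{\mathcal{X}}; \mathbf{B}\mid \mathbf{A}) + \underbrace{I(\Delta_{\mathcal{X}}; \mathbf{AB}\mid \mathbf{A},\mathbf{B})}_{=0}
= I(\Delta_{\mathcal{X}}; \mathbf{B}\mid \mathbf{A}).
\end{align}
So the task reduces to proving $I(\Delta_{\mathcal{X}}; \mathbf{B}\mid \mathbf{A})=0$.

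Next I would use the fact that each $\Delta_n$ is a function of $(\widetilde{A}_{[S]}^n,\widetilde{B}_{[S]}^n)$, so $\Delta_{\mathcal{X}}$ is determined by $(\widetilde{A}_{[S]}^{\mathcal{X}},\widetilde{B}_{[S]}^{\mathcal{X}})$, giving
\begin{align}
I(\Delta_{\mathcal{X}}; \mathbf{B}\mid \mathbf{A}) \leq I\bigl(\widetilde{A}_{[S]}^{\mathcal{X}},\widetilde{B}_{[S]}^{\mathcal{X}}; \mathbf{B}\mid \mathbf{A}\bigr).
\end{align}
Then I would expand this via the chain rule as
\begin{align}
I\bigl(\widetilde{A}_{[S]}^{\mathcal{X}},\widetilde{B}_{[S]}^{\mathcal{X}}; \mathbf{B}\mid \mathbf{A}\bigr)
= I\bigl(\widetilde{A}_{[S]}^{\mathcal{X}}; \mathbf{B}\mid \mathbf{A}\bigr)
+ I\bigl(\widetilde{B}_{[S]}^{\mathcal{X}}; \mathbf{B}\mid \mathbf{A},\widetilde{A}_{[S]}^{\mathcal{X}}\bigr),
\end{align}
and kill both terms using \eqref{eq:sepencode} and \eqref{eq:secureb}. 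The first term vanishes because the separate-encoding constraint \eqref{eq:sepencode} implies that $(\mathbf{A},\widetilde{A}_{[S]}^{\mathcal{X}})$ is independent of $(\mathbf{B},\widetilde{B}_{[S]}^{\mathcal{X}})$, so in particular $\widetilde{A}_{[S]}^{\mathcal{X}}$ is conditionally independent of $\mathbf{B}$ given $\mathbf{A}$. The same independence lets me drop the conditioning on $(\mathbf{A},\widetilde{A}_{[S]}^{\mathcal{X}})$ in the second term, reducing it to $I(\widetilde{B}_{[S]}^{\mathcal{X}};\mathbf{B})$, which is zero by the $X_B=X$ security constraint \eqref{eq:secureb} (since $|\mathcal{X}|=X$).

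I do not anticipate any real obstacle; this is a standard secret-sharing manipulation. The only subtlety worth highlighting is that one must invoke the separate encoding independence \eqref{eq:sepencode} on the full joint variables $(\mathbf{A},\widetilde{A}_{[S]}^{\mathcal{X}})$ and $(\mathbf{B},\widetilde{B}_{[S]}^{\mathcal{X}})$, not merely between $\mathbf{A}$ and $\mathbf{B}$, in order to commute the conditioning with the mutual information before applying the $X$-security of $\mathbf{B}$.
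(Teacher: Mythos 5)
Your proposal is correct and follows essentially the same route as the paper's proof: both reduce $I(\Delta_{\mathcal{X}};\mathbf{AB}\mid\mathbf{A})$ to $I(\Delta_{\mathcal{X}};\mathbf{B}\mid\mathbf{A})$ via the determinism of $\mathbf{AB}$, then pass to the shares $\widetilde{B}_{[S]}^{\mathcal{X}}$ by data processing, and finish by invoking the separate-encoding independence \eqref{eq:sepencode} together with the $X$-security constraint \eqref{eq:secureb}. Your version is marginally more careful in that it bounds $\Delta_{\mathcal{X}}$ as a function of $(\widetilde{A}_{[S]}^{\mathcal{X}},\widetilde{B}_{[S]}^{\mathcal{X}})$ rather than implicitly treating $\widetilde{A}_{[S]}^{\mathcal{X}}$ as a deterministic function of $\mathbf{A}$ when $X_A=0$, as the paper does in step \eqref{eq:lmconoslsi4}; otherwise the arguments coincide.
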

\begin{proof}
\begin{align}
&I_q\left(\Delta_{\mathcal{X}};\mathbf{AB}\mid\mathbf{A}\right)\notag\\
&=H_q(\Delta_{\mathcal{X}}\mid\mathbf{A})-H_q\left(\Delta_{\mathcal{X}}\mid \mathbf{AB},\mathbf{A}\right)\label{eq:lmconoslsi1}\\
&\leq H_q(\Delta_{\mathcal{X}}\mid\mathbf{A})-H_q\left(\Delta_{\mathcal{X}}\mid\mathbf{A},\mathbf{B}\right)\label{eq:lmconoslsi2}\\
&=I_q\left(\Delta_{\mathcal{X}};\mathbf{B}\mid\mathbf{A}\right)\label{eq:lmconoslsi3}\\
&\leq I_q\left(\widetilde{B}_{[S]}^{\mathcal{X}},\mathbf{A};\mathbf{B}\mid\mathbf{A}\right)\label{eq:lmconoslsi4}\\
&\leq I_q\left(\widetilde{B}_{[S]}^{\mathcal{X}},\mathbf{A};\mathbf{B}\right)\label{eq:lmconoslsi5}\\
&= I_q\left(\widetilde{B}_{[S]}^{\mathcal{X}}; {\bf B}\right)+I_q\left(\mathbf{A};{\bf B}\mid \widetilde{B}_{[S]}^{\mathcal{X}}\right)\label{eq:lmconoslsi6}\\
&\leq I_q\left(\widetilde{B}_{[S]}^{\mathcal{X}}; {\bf B}\right)+I_q\left(\mathbf{A};\mathbf{B},\widetilde{B}_{[S]}^{\mathcal{T}}\right)\label{eq:lmconoslsi7}\\
&=0.\label{eq:lmconoslsi8}
\end{align}
The steps in the proof are justified as follows. Step \eqref{eq:lmconoslsi1} applies the definition of mutual information. Step \eqref{eq:lmconoslsi2} follows from the fact that $\left({\bf AB},\mathbf{A}\right)$ is function of $\left(\mathbf{A},\mathbf{B}\right)$ and conditioning reduces entropy. Step \eqref{eq:lmconoslsi3} applies the definition of mutual information, and \eqref{eq:lmconoslsi4} holds because $\Delta_{\mathcal{X}}$ is function of $\left(\widetilde{B}_{[S]}^{\mathcal{T}},\mathbf{A}\right)$. In \eqref{eq:lmconoslsi5}, \eqref{eq:lmconoslsi6} and \eqref{eq:lmconoslsi7}, we repeatedly used the chain rule and non-negativity of mutual information. The last step follows from the security constraint defined in \eqref{eq:secureb}  and separate encoding of matrices \eqref{eq:sepencode}. The proof is  completed by the non-negativity of mutual information.
\end{proof}

The proof of converse of Theorem \ref{thm:cba} is now presented as follows.
\begin{align}
&H_q\left({\bf AB}\mid{\bf A}\right)\notag\\
&=H_q\left({\bf AB}\mid{\bf A}\right)-H_q\left({\bf AB}|\Delta_{[N]},\mathbf{A}\right)+H_q\left({\bf AB}\mid\Delta_{[N]},\mathbf{A}\right)\label{eq:conoslsi1}\\
&=H_q\left({\bf AB}\mid{\bf A}\right)-H_q\left({\bf AB}\mid\Delta_{[N]},\mathbf{A}\right)\label{eq:conoslsi2}\\
&=I_q\left({\bf AB};\Delta_{[N]}\mid \mathbf{A}\right)\label{eq:conoslsi3}\\
&=H_q(\Delta_{[N]}\mid\mathbf{A})-H_q\left(\Delta_{[N]}\mid{\bf AB},\mathbf{A}\right)\label{eq:conoslsi4}\\
&\leq H_q(\Delta_{[N]}\mid \mathbf{A})-H_q\left(\Delta_{\mathcal{X}}\mid \mathbf{AB},\mathbf{A}\right)\label{eq:conoslsi5}\\
&=H_q(\Delta_{[N]}\mid\mathbf{A})-H_q\left(\Delta_{\mathcal{X}}\mid\mathbf{A}\right).\label{eq:conoslsi6}
\end{align}
Steps are justified as follows. \eqref{eq:conoslsi1} subtracts and adds the same term so nothing changes. \eqref{eq:conoslsi2} follows from the correctness constraint,\eqref{eq:correct}. Steps \eqref{eq:conoslsi3}  and \eqref{eq:conoslsi4} follow from the definition of mutual information. In \eqref{eq:conoslsi5}, we used the fact that dropping terms reduces entropy. The last step holds from Lemma \ref{lemma:conoslsi1}.

Averaging \eqref{eq:conoslsi6} over all choices of $\mathcal{X}$ and applying Han's inequality (Theorem 17.6.1 in \cite{Cover_Thomas}), we have
\begin{align}
&H_q\left(\mathbf{AB}\mid\mathbf{A}\right)\notag\\
&\leq H_q(\Delta_{[N]}\mid \mathbf{A})-\frac{X}{N}H_q\left(\Delta_{[N]}\mid\mathbf{A}\right)\\
&=\left(1-\frac{X}{N}\right)H_q(\Delta_{[N]}|\mathbf{A})\\
&\leq\left(1-\frac{X}{N}\right)H_q(\Delta_{[N]})\\
&\leq \left(1-\frac{X}{N}\right)\sum_{n\in[N]}H_q(\Delta_{n}).
\end{align}
Thus we obtain
\begin{align}
C_{\bf B,A}&=\sup \frac{H_q\left(\mathbf{AB}\mid\mathbf{A}\right)}{D}\\
&\leq\sup \frac{H_q\left(\mathbf{AB}\mid \mathbf{A}\right)}{\sum_{n\in[N]}H_q(\Delta_{n})}\\
&\leq 1-\frac{X}{N}.
\end{align}

\subsection{Converse of Theorem \ref{thm:cabphi}: \eqref{eq:cabphi2},\eqref{eq:cabphi5}}\label{sec:confs13}
Let $\mathcal{X}$ denote any subset of $[N]$ such that $|\mathcal{X}|=X$. The proof of converse is as follows.
\begin{align}
&H_q({\bf AB})\notag\\
&=H_q({\bf AB})-H_q\left({\bf AB}\mid \Delta_{[N]}\right)+H_q\left({\bf AB}\mid \Delta_{[N]}\right)\label{eq:confssi1}\\
&=H_q({\bf AB})-H_q\left({\bf AB}\mid \Delta_{[N]}\right)\label{eq:confssi2}\\
&=I_q\left({\bf AB};\Delta_{[N]}\right)\label{eq:confssi3}\\
&=H_q\left(\Delta_{[N]}\right)-H_q\left(\Delta_{[N]}\mid {\bf AB}\right)\label{eq:confssi4}\\
&\leq H_q(\Delta_{[N]})-H_q\left(\Delta_{[N]}\mid {\bf A,B}\right)\label{eq:confssi5}\\
&\leq H_q(\Delta_{[N]})-H_q\left(\Delta_{\mathcal{X}}\mid\mathbf{A},\mathbf{B}\right)\label{eq:confssi6}\\
&=H_q(\Delta_{[N]})-H_q\left(\Delta_{\mathcal{X}}\right).\label{eq:confssi7}
\end{align}
Steps are justified as follows. \eqref{eq:confssi1} subtracts and adds the same term so nothing changes. \eqref{eq:confssi2} follows from \eqref{eq:correct}, while \eqref{eq:confssi3} and \eqref{eq:confssi4} follow from the definition of mutual information. \eqref{eq:confssi5} holds because adding conditioning reduces entropy and ${\bf AB}$ is function of $(\mathbf{A},\mathbf{B})$. \eqref{eq:confssi6} holds because dropping terms reduces entropy. The last step simply follows from the following fact,
\begin{align}
0&=I_q(\mathbf{A}, \mathbf{B};\widetilde{A}_{[S]}^{\mathcal{X}}, \widetilde{B}_{[S]}^{\mathcal{X}})\label{eq:confssi8}\\
&=I_q(\mathbf{A}, \mathbf{B};\Delta_{\mathcal{X}})\label{eq:confssi9}\\
&=H_q(\Delta_{\mathcal{X}})-H_q(\Delta_{\mathcal{X}}\mid \mathbf{A}, \mathbf{B})\label{eq:confssi10}
\end{align}
where \eqref{eq:confssi8} is the security constraint defined in \eqref{eq:secure}. \eqref{eq:confssi9} follows from non-negativity of mutual information and the fact that $\Delta_{\mathcal{X}}$ is function of $(\widetilde{A}_{[S]}^{\mathcal{X}}, \widetilde{B}_{[S]}^{\mathcal{X}})$. \eqref{eq:confssi10} is the definition of mutual information.

Averaging \eqref{eq:confssi7} over all choices of $\mathcal{X}$ and applying Han's inequality, we have
\begin{align}
H_q\left({\bf AB}\right)&\leq H_q\left(\Delta_{[N]}\right)-\frac{X}{N}H_q\left(\Delta_{[N]}\right)\\
&=\left(1-\frac{X}{N}\right)H_q\left(\Delta_{[N]}\right)\\
&\leq\left(1-\frac{X}{N}\right)\sum_{n\in[N]}H_q(\Delta_{n}).
\end{align}
Thus we obtain
\begin{align}
C_{({\bf AB,\phi})}&=\sup\frac{H_q({\bf AB})}{D}\\
&\leq \sup\frac{H_q({\bf AB})}{\sum_{n\in[N]}H_q(\Delta_{n})}\\
&\leq 1-\frac{X}{N}.
\end{align}

\section{Achievability}
Let us present two  basic schemes that are essential ingredients of the proofs of achievability.
\subsection{A General Scheme}\label{sec:generalscheme}
This scheme allows the user to retrieve all ${\bf A}, {\bf B}$, after which  he can locally compute ${\bf AB}$.
Let $S=N-X$, and let $\mathbf{Z}_{sx}, \mathbf{Z}'_{sx'}$, $s\in[S], x\in[X_A], x'\in[X_B]$ be uniformly distributed random matrices over $\mathbb{F}_q^{L\times K}$ and $\mathbb{F}_q^{K\times M}$ respectively. Note that $X_A,X_B\in\{0,X\}$. The independence of these random matrices and matrices $\mathbf{A}_{[S]}, \mathbf{B}_{[S]}$ is specified as follows.
\begin{align}
&H_q\left((\mathbf{Z}_{sx},\mathbf{Z}'_{sx'})_{s\in[S],x\in[X_A],x'\in[X_B]},\mathbf{A}_{[S]}, \mathbf{B}_{[S]}\right)\notag\\
=&\sum_{s\in[S],x\in[X_A],x'\in[X_B]}H_q(\mathbf{Z}_{sx})+H_q(\mathbf{Z}'_{sx'})+\sum_{s\in[S]}H_q(\mathbf{A}_{s})+H_q(\mathbf{B}_{s}).
\end{align}
Let $\alpha_n, n\in[N]$ be  $N$ distinct elements from $\mathbb{F}_q$. The construction of securely encoded matrices $\widetilde{A}_s^n$ and $\widetilde{B}_s^n$ for any $s\in[S]$ and $n\in[N]$ is provided below.
\begin{align}
\widetilde{A}_s^n&=\alpha_n^s\mathbf{A}_s+\sum_{x\in[X_A]}\alpha_n^{S+x}\mathbf{Z}_{xt}\\
&=\alpha_n^s\mathbf{A}_s+\alpha_n^{S+1}\mathbf{Z}_{s1}+\dots+\alpha_n^{S+X_A}\mathbf{Z}_{sX_A}\\
\widetilde{B}_s^n&=\alpha_n^s\mathbf{B}_s+\sum_{x'\in[T_B]}\alpha_n^{S+x'}\mathbf{Z}'_{sx'}\\
&=\alpha_n^s\mathbf{B}_s+\alpha_n^{S+1}\mathbf{Z}'_{s1}+\dots+\alpha_n^{S+X_B}\mathbf{Z}'_{sX_B}.
\end{align}
The answer from the $n$-th server is specified as follows
\begin{align}
\Delta_n&=
\begin{bmatrix}
\widetilde{A}_1^n+\dots+\widetilde{A}_S^n\\
\widetilde{B}_1^n+\dots+\widetilde{B}_S^n
\end{bmatrix}\\
&=\begin{bmatrix}
\alpha_n\mathbf{A}_1+\dots+\alpha_n^S\mathbf{A}_S+\alpha_n^{S+1}\sum_{s\in[S]}\mathbf{Z}_{s1}+\dots+\alpha_n^{S+X_A}\sum_{s\in[S]}\mathbf{Z}_{sX_A}\\
\alpha_n\mathbf{B}_1+\dots+\alpha_n^S\mathbf{B}_S+\alpha_n^{S+1}\sum_{s\in[S]}\mathbf{Z}'_{s1}+\dots+\alpha_n^{S+X_B}\sum_{s\in[S]}\mathbf{Z}'_{sX_B}
\end{bmatrix}.
\end{align}
Note that the desired matrices and the random matrices are coded with an RS code. Therefore, from the answers provided by all $N$ servers, the user is able to decode all matrices $\mathbf{A}_{[S]},\mathbf{B}_{[S]}$, and then determine $\mathbf{AB}=(\mathbf{A}_s\times\mathbf{B}_s)_{s\in[S]}$. Note that $X_A$-security is guaranteed for matrices $\mathbf{A}_s$ because they are protected by the $X_A$ noise matrices ${\bf Z}_{sx}, x\in[X_A]$, that are i.i.d. uniform and coded according to $\text{MDS}(X_A,N)$. Similarly, $X_B$-security is guaranteed for matrices $\mathbf{B}_s$.

\subsection{Cross Subspace Alignment Based Scheme}\label{sec:csascheme}
For this scheme, let us set
\begin{equation}
S=N-X_A-X_B.
\end{equation}
And let $\mathbf{Z}_{sx}, \mathbf{Z}'_{sx'}$, $s\in[1:S], x\in[T_A], x'\in[T_B]$ be uniformly distributed random matrices over $\mathbb{F}_q^{L\times K}$ and $\mathbb{F}_q^{K\times M}$ respectively. The independence of random matrices and matrices $\mathbf{A}_{[S]}, \mathbf{B}_{[S]}$ is specified as follows.
\begin{align}
&H_q\left((\mathbf{Z}_{sx},\mathbf{Z}'_{sx'})_{s\in[S],x\in[T_A],x'\in[T_B]},\mathbf{A}_{[S]}, \mathbf{B}_{[S]}\right)\notag\\
=&\sum_{s\in[S],x\in[X_A]}H_q(\mathbf{Z}_{sx})+\sum_{s\in[S],x'\in[X_B]}H_q(\mathbf{Z}'_{sx'})+\sum_{s\in[S]}H_q(\mathbf{A}_{s})+\sum_{s\in[S]}H_q(\mathbf{B}_{s}).
\end{align}
For the construction of this scheme, we will need  $S$ distinct constants $f_s\in\mathbb{F}_q, s\in[S]$, and  $N$ distinct constants $\alpha_n, n\in[N]$ that are elements\footnote{As in the construction of CSA codes in the subsequent work \cite{Jia_Jafar_CDBC}, it is equivalent to choose $S+N$ arbitrary distinct values $(\alpha'_1,\alpha'_2,\cdots,\alpha'_N,f'_1,f'_2,\cdots,f'_S)$ from $\mathbb{F}_q$, and set $\alpha_n=\alpha'_n, \forall n\in[N]$, $f_s=-f'_s, \forall s\in[S]$.} of $\mathbb{G}$,
\begin{equation}\label{eq:distinct}
\mathbb{G}=\{\alpha\in\mathbb{F}_q:\alpha+f_s\neq0, \forall s\in[S]\}.
\end{equation}
The securely encoded matrix $\widetilde{A}_s^n$ for any $s\in[S]$ and $n\in[N]$ is provided below.
\begin{align}
\widetilde{A}_s^n&=\mathbf{A}_s+\sum_{x\in[X_A]}(f_s+\alpha_n)^x\mathbf{Z}_{sx}\\
&=\mathbf{A}_s+(f_s+\alpha_n)\mathbf{Z}_{s1}+\dots+(f_s+\alpha_n)^{X_A}\mathbf{Z}_{sX_A}.
\end{align}
Similarly, the securely encoded matrix $\widetilde{B}_s^n$ for any $s\in[S]$ and $n\in[N]$ is as follows,
\begin{align}
\widetilde{B}_s^n&=\frac{1}{f_s+\alpha_n}\left(\mathbf{B}_s+\sum_{x'\in[X_B]}(f_s+\alpha_n)^{x'}\mathbf{Z}'_{sx'}\right)\\
&=\frac{1}{f_s+\alpha_n}\left(\mathbf{B}_s+(f_s+\alpha_n)\mathbf{Z}'_{s1}+\dots+(f_s+\alpha_n)^{X_B}\mathbf{Z}'_{sX_B}\right).
\end{align}
The download from  any server $n$, $n\in[N]$ is constructed as follows.
\begin{align}
\Delta_n&=\sum_{s\in[S]}\widetilde{A}_s^n\widetilde{B}_s^n\\
&=\sum_{s\in[S]}\left(\frac{1}{f_s+\alpha_n}\mathbf{A}_s\mathbf{B}_s\right)+\sum_{s\in[S]}\sum_{x\in[X_A]}(f_s+\alpha_n)^{x-1}\mathbf{Z}_{sx}\mathbf{B}_s\notag\\
&\quad+\sum_{s\in[S]}\sum_{x'\in[X_B]}(f_s+\alpha_n)^{x'-1}\mathbf{A}_s\mathbf{Z}'_{sx'}+\sum_{s\in[S]}\sum_{x\in[X_A],x'\in[X_B]}(f_s+\alpha_n)^{x+x'-1}\mathbf{Z}_{sx}\mathbf{Z}'_{sx'}.
\end{align}
Each of the last three terms can be expanded into weighted sums of terms of the form $\alpha_n^t$, ${t\in\{0,1,\dots,X_A+X_B-1\}}$. Thus, upon receiving all $N$ answers from servers, the user is able to decode all $S$ desired product matrices $(\mathbf{A}_s\times\mathbf{B}_s)_{s\in[S]}$, as long as the following $N\times N$ matrix is invertible,
\begin{align}
\mathbf{M}_N=\begin{bmatrix}
\frac{1}{f_1+\alpha_1}&\cdots&\frac{1}{f_S+\alpha_1}&1&\alpha_1&\cdots&\alpha_1^{X_A+X_B-1}\\
\frac{1}{f_1+\alpha_2}&\cdots&\frac{1}{f_S+\alpha_2}&1&\alpha_2&\cdots&\alpha_2^{X_A+X_B-1}\\
\vdots&\vdots&\vdots&\vdots&\vdots&\vdots&\vdots&\\
\frac{1}{f_1+\alpha_N}&\cdots&\frac{1}{f_S+\alpha_N}&1&\alpha_N&\cdots&\alpha_N^{X_A+X_B-1}\\
\end{bmatrix},
\end{align}
which is shown to be true by Lemma \ref{lemma:mninv}. The proof of Lemma \ref{lemma:mninv},  which is based on the proof of Lemma 5 in \cite{Jia_Sun_Jafar_XSTPIR}, is presented in Appendix \ref{app:mn}. $X_A$-security is guaranteed for matrices $\mathbf{A}_s$ because they are protected by the $X_A$ noise matrices ${\bf Z}_{sx}, x\in[X_A]$, that are i.i.d. uniform and coded according to $\text{MDS}(X_A,N)$ codes. $X_B$-security is similarly guaranteed for the matrices $\mathbf{B}_s, s\in[S]$.

\subsection{Proofs of Achievability}
Throughout these proofs, we will allow $q\rightarrow\infty$. Furthermore, we will use Lemma \ref{lemma:Hab} to calculate the entropy of random matrices.
\subsubsection{Achievability Proof of Theorem \ref{thm:cba}}\label{sec:achcba}
First, let us consider the setting when $K\geq L$. For this setting, let us apply the cross subspace alignment based scheme presented in Section \ref{sec:csascheme}. Note that $X_A=0, X_B=X$, and the total number of downloaded $q$-ary symbols is $NLM$, so the rate achieved is
\begin{align}
R&=\frac{H_q\left(\mathbf{AB}\mid \mathbf{A}\right)}{NLM}\\
&=\frac{SLM}{NLM}\\
&=1-\frac{X}{N},
\end{align}
which matches the capacity for this setting. On the other hand, when $K<L$, let us apply the general scheme presented in Section \ref{sec:generalscheme}. Since $X_A=0, X_B=X$, we have 
\begin{align}
\Delta_n&=\begin{bmatrix}
\alpha_n\mathbf{A}_1+\dots+\alpha_n^S\mathbf{A}_S\\
\alpha_n\mathbf{B}_1+\dots+\alpha_n^S\mathbf{B}_S+\alpha_n^{S+1}\sum_{s\in[S]}\mathbf{Z}'_{s1}+\dots+\alpha_n^{S+X}\sum_{s\in[S]}\mathbf{Z}'_{sX}
\end{bmatrix}.
\end{align}
But note that left matrices $\mathbf{A}_{[S]}$ are already available to  the user as side information, so it is not necessary to download $(\alpha_n\mathbf{A}_1+\dots+\alpha_n^S\mathbf{A}_S)$ terms. Therefore, the total number of downloaded $q$-ary symbols is $NKM$, and the rate achieved is
\begin{align}
R=&\frac{H_q\left({\bf AB}\mid {\bf A}\right)}{NKM}\\
=&\frac{SKM}{NKM}\\
=&1-\frac{X}{N},
\end{align}
which matches the capacity for this setting. This completes the achievability proof of Theorem \ref{thm:cba}.

\subsubsection{Achievability Proof for Theorem \ref{thm:cbb}}\label{sec:achosrsi12}
First consider the trivial scheme with $S=1$ that downloads the matrices $\mathbf{A}_{[S]}$ directly from any one out of $N$ servers, since there is no security constraint on these matrices $(X_A=0)$. Since $\mathbf{B}_{[S]}$ is already available as side information, downloading $\mathbf{A}_{[S]}$ allows the user to compute $\mathbf{AB}$ locally. The rate achieved with this scheme is
\begin{align}
R&=\frac{H_q\left(\mathbf{AB}\mid \mathbf{B}\right)}{LK}\\
&=\frac{\min(LM,LK)}{LK}\\
&=\left\{\begin{aligned}
&1,\quad&&K\leq M,\\
&\frac{M}{K},\quad&&K>M.
\end{aligned}\right.
\end{align}
Thus, this simple scheme is optimal  for $K\leq M$ and for $(K>M, N\leq X)$. 

Next let us consider $N>X$ as $K/M\rightarrow\infty$. For this, let us apply the cross subspace alignment based scheme presented in Section \ref{sec:csascheme} with $S=N-X$. Note that $X_A=0, X_B=X$, and the total number of downloaded $q$-ary symbols is $NLM$, so the rate achieved is
\begin{align}
R=&\frac{H_q\left({\bf AB}\mid \mathbf{B}\right)}{NLM}\\
=&\frac{SLM}{NLM}\\
=&1-\frac{X}{N},
\end{align}
which matches the capacity for this setting. This completes the achievability proof of Theorem \ref{thm:cbb}.

\subsubsection{Achievability Proof for Theorem \ref{thm:cbphi} }\label{sec:achos23}
For the cases \eqref{eq:cbphi2}, \eqref{eq:cbphi3}, let us apply the cross subspace alignment based scheme presented in Section \ref{sec:csascheme} with $S=N-X$. Since $X_A=0, X_B=X$, and the total number of downloaded $q$-ary symbol is $NLM$, the rate achieved is
\begin{align}
R&=\frac{H_q\left({\bf AB}\right)}{NLM}\\
&=\frac{SLM}{NLM}\\
&=1-\frac{X}{N}.
\end{align}
This completes the achievability proof of Theorem \ref{thm:cbphi} for the cases \eqref{eq:cbphi2}, \eqref{eq:cbphi3}.

Now consider the cases \eqref{eq:cbphi4} and \eqref{eq:cbphi5}. For these cases, let us apply the general scheme presented in Section \ref{sec:generalscheme}. Since $X_A=0, X_B=X$, we have 
\begin{align}
\Delta_n&=\begin{bmatrix}
\alpha_n\mathbf{A}_1+\dots+\alpha_n^S\mathbf{A}_S\\
\alpha_n\mathbf{B}_1+\dots+\alpha_n^S\mathbf{B}_S+\alpha_n^{S+1}\sum_{s\in[S]}\mathbf{Z}'_{s1}+\dots+\alpha_n^{S+X}\sum_{s\in[S]}\mathbf{Z}'_{sX}
\end{bmatrix}.
\end{align}
Note that from the downloads $\Delta_n$ of any $S$ servers, we are able to recover the matrices $\mathbf{A}_{[S]}$, so we can eliminate the first part from the remaining $N-S$ redundant downloads while preserving decodability. Therefore, the total number of downloaded $q$-ary symbols is $SLK+NKM$. Thus, as $q\rightarrow\infty$, the rate achieved is
\begin{align}
R&=\frac{H_q\left({\bf AB}\right)}{SLK+NKM}\\
&=\frac{S(LK+KM-K^2)}{SLK+NKM}
\end{align}
As $L/M\rightarrow\infty$ and when $M\geq K$, we have $R=1$. This completes the proof of achievability of \eqref{eq:cbphi4}. On the other hand, when $M/L\rightarrow\infty$ and $K<L$, we have
\begin{eqnarray}
R&=&\frac{S(LK+KM-K^2)}{SLK+NKM}\\
&\overset{M/L\rightarrow\infty}{=}&\frac{S}{N}\\
&{=}&1-\frac{X}{N}
\end{eqnarray}
This proves achievability for \eqref{eq:cbphi5}, thus completing the proof of achievability for Theorem \ref{thm:cbphi}.

\subsubsection{Achievability Proof of Theorem \ref{thm:cabb}}\label{sec:achfssi12}
Let us start with the cases \eqref{eq:cabb2} and \eqref{eq:cabb3}, for which we apply the general scheme presented in Section \ref{sec:generalscheme}. Since $X_A=X_B=X$, we have 
\begin{align}
\Delta_n&=\begin{bmatrix}
\alpha_n\mathbf{A}_1+\dots+\alpha_n^S\mathbf{A}_S+\alpha_n^{S+1}\sum_{s\in[S]}\mathbf{Z}_{s1}+\dots+\alpha_n^{S+X}\sum_{s\in[S]}\mathbf{Z}_{sX}\\
\alpha_n\mathbf{B}_1+\dots+\alpha_n^S\mathbf{B}_S+\alpha_n^{S+1}\sum_{s\in[S]}\mathbf{Z}'_{s1}+\dots+\alpha_n^{S+X}\sum_{s\in[S]}\mathbf{Z}'_{sX}
\end{bmatrix}.
\end{align}
Now we note that since the matrices $\mathbf{B}_{[S]}$ are available to user as side information, it is not necessary to download the second part of $\Delta_n$. Therefore, the total number of downloaded $q$-ary symbols is $NLK$, and the rate achieved is
\begin{align}
R&=\frac{H_q\left({\bf AB}\mid {\bf B}\right)}{NLK}\\
&=\frac{S\min(LK,LM)}{NLK}\\
&=\left\{\begin{aligned}
&1-\frac{X}{N},\quad&&K\leq M\\
&\frac{M(N-X)}{KN},\quad&&K>M
\end{aligned}\right..
\end{align}
This completes the achievability proof for cases \eqref{eq:cabb2} and \eqref{eq:cabb3}.

Next, let us consider case \eqref{eq:cabb4}, and for this setting let us apply the cross subspace alignment based scheme presented in Section \ref{sec:csascheme}. Note that $X_A=X_B=X$, and the total number of downloaded $q$-ary symbols is $NLM$, so the rate achieved is
\begin{align}
R&=\frac{H_q\left({\bf AB}\mid {\bf B}\right)}{NLM}\\
&=\frac{SLM}{NLM}\\
&=1-\frac{2X}{N}
\end{align}
which matches the capacity for this setting.

\subsubsection{Achievability Proof of Theorem \ref{thm:cabphi} }\label{sec:achfs2}
Let us start with case \eqref{eq:cabphi4}, for which we apply the cross subspace alignment based scheme that was presented in Section \ref{sec:csascheme}.  Note that $X_A=X_B=X$, and the total number of downloaded $q$-ary symbols is $NLM$, so the rate achieved is 
\begin{align}
R&=\frac{H_q\left({\bf AB}\right)}{NLM}\\
&=\frac{SLM}{NLM}\\
&=1-\frac{2X}{N},
\end{align}
which matches the capacity for this setting.

Next, consider case \eqref{eq:cabphi5}. 
For this setting, let us apply the general scheme presented in Section \ref{sec:generalscheme}. Since $X_A=X_B=X$, we have 
\begin{align}
\Delta_n&=\begin{bmatrix}
\alpha_n\mathbf{A}_1+\dots+\alpha_n^S\mathbf{A}_S+\alpha_n^{S+1}\sum_{s\in[S]}\mathbf{Z}_{s1}+\dots+\alpha_n^{S+X}\sum_{s\in[S]}\mathbf{Z}_{sX}\\
\alpha_n\mathbf{B}_1+\dots+\alpha_n^S\mathbf{B}_S+\alpha_n^{S+1}\sum_{s\in[S]}\mathbf{Z}'_{s1}+\dots+\alpha_n^{S+X}\sum_{s\in[S]}\mathbf{Z}'_{sX}
\end{bmatrix}.
\end{align}
Thus the total number of downloaded $q$-ary symbols is $N(LK+KM)$. Therefore,  when $K\leq \min(L,M)$, we have
\begin{eqnarray}
R&=&\frac{H_q\left({\bf AB}\right)}{N(LK+KM)}\\
&=&\frac{S(LK+KM-K^2)}{N(LK+KM)}\\
&\overset{\max(L,M)/K\rightarrow\infty}{=}&1-\frac{X}{N}.
\end{eqnarray}
This completes the achievability proof of case \eqref{eq:cabphi5}.

\subsection{Achievability Proof of Theorem \ref{thm:cabphi}: Case \eqref{eq:cabphi2}}\label{sec:achfs1}
\subsubsection{$K=L=M=1$}\label{sec:achfs11}
Let us first consider the setting where $K=L=M=1$, and let us set $S=N-X$. Note that in this setting, $\mathbf{A}_{s}$, $\mathbf{B}_{s}$ are  independent scalars drawn uniformly from the finite field $\mathbb{F}_q$. Let us first present a solution based on the assumption that  $\mathbf{A}_{s}$, $\mathbf{B}_{s}$ take only non-zero values for all $s\in[S]$. It is well-known that the multiplicative group $\mathbb{F}_q^{\times}=\mathbb{F}_q\setminus\{0\}$ is a cyclic group. Moreover, every finite cyclic group of order $q-1$ is isomorphic to the additive group of $\mathbb{Z}/(q-1)\mathbb{Z}$ (i.e., addition modulo $(q-1)$). Therefore it is possible to translate scalar multiplication over $\mathbb{F}_q^{\times}$ into addition modulo $(q-1)$. However, the additive group of $\mathbb{Z}/(q-1)\mathbb{Z}$ is not a field, and our scheme will further require the properties of a field. This problem is circumvented by using a prime field $\mathbb{F}_p$ for a prime $p$ such that $p>2(q-1)$ and noting that for any two integers $a,b\in\{0,1,\dots,q-2\}$, we have
\begin{align}
(a+b)\mod(q-1)=((a+b)\mod p)\mod(q-1).
\end{align}
In other words, suppose the isomorphism between the multiplicative group $\mathbb{F}_q^\times$ and the additive group  $\mathbb{Z}/(q-1)\mathbb{Z}$ maps all $a\in\mathbb{F}_q^\times$ to $f(a)\in\mathbb{Z}/(q-1)\mathbb{Z}$. Then for all $a,b,c\in\mathbb{F}_q^\times$ such that $c=a\times b$, we have $f(c)=f(a)+f(b)$ in $\mathbb{Z}/(q-1)\mathbb{Z}$, and furthermore, under the natural interpretation of all $f(a)$ as elements of $\mathbb{F}_p$, we have $c'=f(a)+f(b)$ in $\mathbb{F}_p$ such that $f(c)=c'$ mod $(q-1)$.
Thus, we are able to transform the problem of scalar multiplication in $\mathbb{F}_q^\times$ to scalar addition over $\mathbb{F}_p$, i.e., instead of $c=a\times b\in \mathbb{F}_q^\times$, the user will retrieve $c'=f(a)+f(b)\in\mathbb{F}_p$ from which he can compute $f(c)$ by a mod $q-1$ operation, and then from $f(c)$ the user can compute $c$ by inverting the isomorphic mapping.

To account for potential zero values of ${\bf A}_s, {\bf B}_s\in\mathbb{F}_q$, let us define $f(0)=0$. 
In light of this discussion, let us assume $f({\mathbf{A}}_{s}), f({\mathbf{B}}_{s})$ are scalars in $\mathbb{F}_p$ and the user wishes to retrieve $f({\mathbf{A}}_s)+f({\mathbf{B}}_s)\in\mathbb{F}_p$ for all those $s\in[S]$ where ${\bf A}_s, {\bf B}_s$ are both non-zero, and he wishes to retrieve the answer $0$  for all those $s\in[S]$ where either one of ${\bf A}_s, {\bf B}_s$  is zero. Now let us present a scheme to achieve this task. For this scheme let us choose $p$  to be the minimum prime number such that $p>2(q-1)$. Let  $\mathbf{Z}_{sx}, \mathbf{Z}'_{sx}$, $s\in[S], x\in[X]$ be uniformly distributed random (noise) scalars over $\mathbb{F}_p$. The independence of these random scalars and the scalars $f(\mathbf{A}_{s}), f(\mathbf{B}_{s})$ is specified as follows.
\begin{align}
&H_q\left((\mathbf{Z}_{sx},\mathbf{Z}'_{sx})_{s\in[S],x\in[X]},(f(\mathbf{A}_{s}), f(\mathbf{B}_{s}))_{s\in[S]}\right)\notag\\
=&\sum_{s\in[S],x\in[X]}H_q(\mathbf{Z}_{sx})+\sum_{s\in[S],x\in[X]}H_q(\mathbf{Z}'_{sx})+\sum_{s\in[S]}H_q(f(\mathbf{A}_{s}))+\sum_{s\in[S]}H_q(f(\mathbf{B}_{s})).
\end{align}

Let $\alpha_n, n\in[N]$ be $N$ distinct  elements from $\mathbb{F}_p$. The construction of $\widetilde{A}_s^n$ and $\widetilde{B}_s^n$ for any $s\in[S]$ and $n\in[N]$ is provided as follows.
\begin{align}
\widetilde{A}_s^n&=\alpha_n^sf(\mathbf{A}_s)+\sum_{x\in[X]}\alpha_n^{S+x}\mathbf{Z}_{sx}\\
&=\alpha_n^sf(\mathbf{A}_s)+\alpha_n^{S+1}\mathbf{Z}_{s1}+\dots+\alpha_n^{S+X}\mathbf{Z}_{sX}\\
\widetilde{B}_s^n&=\alpha_n^sf(\mathbf{B}_s)+\sum_{x\in[X]}\alpha_n^{S+x}\mathbf{Z}'_{sx}\\
&=\alpha_n^sf(\mathbf{B}_s)+\alpha_n^{S+1}\mathbf{Z}'_{s1}+\dots+\alpha_n^{S+X}\mathbf{Z}'_{sX}.
\end{align}
The answer from the $n$-th server is obtained as follows
\begin{align}
\Delta_n&=\widetilde{A}_1^n+\dots+\widetilde{A}_S^n+\widetilde{B}_1^n+\dots+\widetilde{B}_S^n
\\
&=
\alpha_n(f(\mathbf{A}_1)+f(\mathbf{B}_1))+\dots+\alpha_n^S(f(\mathbf{A}_S)+f(\mathbf{B}_S))\notag\\
&\quad+\alpha_n^{S+1}\sum_{s\in[S]}(\mathbf{Z}_{s1}+\mathbf{Z}'_{s1})+\dots+\alpha_n^{S+X}\sum_{s\in[S]}(\mathbf{Z}_{sX}+\mathbf{Z}'_{sX}).
\end{align}
$X$-security is guaranteed because matrices $\mathbf{A}_s, \mathbf{B}_s$ are protected by  noise terms that are i.i.d. uniform and coded according to $\text{MDS}(X,N)$ codes. Note that desired scalars and random scalars are coded with an RS code, and $S+X=N$. Therefore, from the answers provided by all $N$ servers, the user is able to decode $(\mathbf{A}_s+\mathbf{B}_s)_{s\in[S]}$. By the transformation argument, correctness is guaranteed for all those $s\in[S]$, where  $\mathbf{A}_s\neq 0$ and $\mathbf{B}_s\neq 0$. However, correctness is not yet guaranteed for those $s\in[S]$ where either $\mathbf{A}_s= 0$ or $\mathbf{B}_s= 0$. For this we will implement a separate mechanism to let the user know which $\mathbf{A}_s$ and $\mathbf{B}_s$ are equal to zero, so he can infer correctly that ${\bf A}_s{\bf B}_s=0$ for those instances. Specifically, for each scalar  $\mathbf{A}_{s}$ and $\mathbf{B}_{s}$, let us define  binary symbols $\eta_{{\bf A}_s}, \eta_{{\bf B}_s}$ that indicate whether or not $\mathbf{A}_{s}, \mathbf{B}_{s}$ are equal to zero. These $\eta_{{\bf A}_s}, \eta_{{\bf B}_s}$ are  also secret-shared among the $N$ servers in an $X$-secure fashion, and retrieved by the user at negligible increase in download cost  as $q\rightarrow\infty$.  Now, by the Bertrand-Chebyshev theorem, for every integer $\nu>1$ there is always at least one prime $p'$ such that $\nu<p'<2\nu$, thus we must have $p$ such that $2(q-1)<p<4(q-1)$. Therefore, as $q\rightarrow\infty$, the rate achieved is
\begin{align}
R&=\frac{H_q\left(\mathbf{A}\times\mathbf{B}\right)}{D}\\
&\geq  \frac{H_q\left({\bf AB}\mid {\bf B}\right)}{N\log_q(p)+2SN\log_q(2)}\\
&\geq\frac{S (\frac{q-1}{q})}{N\log_q(4(q-1))+2SN\log_q(2)}\\
&=\frac{(N-X)(\frac{q-1}{q})}{N\log_q(4(q-1))+2(N-X)N\log_q(2)}
\end{align}
which approaches $1-X/N$ as $q\rightarrow\infty$.
\subsubsection{$K=1$, Arbitrary $L,M$}
Now let us consider the setting with arbitrary $L,M$, and with $K=1$, i.e., the user wishes to compute the outer product of vectors ${\bf A}_s, {\bf B}_s$ for all $s\in[S]$. As before, let us set $S=N-X$, and choose $p$ to be the smallest prime such that $p>2(q-1)$. We will allow the user to download a normalized version of each ${\bf A}_s$ and ${\bf B}_s$ vector, along with the product of the normalizing factors, from which the user can construct ${\bf A}_s{\bf B}_s$. To this end, let us define $i_s, j_s$ as the index of the first non-zero element in ${\bf A}_s, {\bf B}_s$, respectively, and normalize 
 each vector $\mathbf{A}_{s}$ by it's $i_s^{th}$ element ${A}_s(i_s)$, each vector $\mathbf{B}_{s}$ by it's $j_s^{th}$ element ${B}_s(j_s)$ $\forall s\in[S]$.  Now $\forall s\in[S]$ such that $\mathbf{A}_s$ is not the zero vector, we have
\begin{align}
\mathbf{A}_s=A_s(i_s)\underbrace{[0,\cdots,0,1,A'_s(i_s+1),\dots,A'_s(L)]'}_{\mathbf{A}'_s},
\end{align}
where the vector ${\bf A}_s'$ is the normalized vector. Similarly, $\forall s\in[S]$ such that $\mathbf{B}_s$ is not the zero vector, we have
\begin{align}
\mathbf{B}_s=B_s(j_s)\underbrace{[0,\cdots, 0,1,B'_s(j_s+1),\dots,B'_s(M)]}_{\mathbf{B}'_s}.
\end{align}
Note that if ${\bf A}_s$ or ${\bf B}_s$ is the zero vector, then we simply set $i_s=0, j_s=0$ and ${\bf A}_s'=0, {\bf B}_s'=0$,  and $A_s(i_s)=1, B_s(j_s)=1$, respectively.
The scheme is constructed as follows. Separate $X$-secure secret sharing schemes are used to distribute the secrets $i_s, j_s, A_s(i_s), B_s(j_s), \bar{\bf A}_s', \bar{\bf B}_s'$, among the $N$ servers, where $\bar{\bf A}_s', \bar{\bf B}_s'$ are length $L-1, M-1$ vectors respectively, obtained by eliminating the leading $1$ term from each of ${\bf A}_s', {\bf B}_s'$ (or one of the zeros if ${\bf A}_s', {\bf B}_s'$ are zero vectors).  The vectors $\bar{\bf A}_s', \bar{\bf B}_s'$ are  retrieved  by the user according to the scheme presented in Section \ref{sec:generalscheme}, with total download cost equal to $N(L-1)+N(M-1)=N(L+M-2)$ $q$-ary symbols. The  indices $i_s, j_s$ are retrieved according to same scheme presented in Section \ref{sec:generalscheme}, with total download cost not exceeding $N\log_q(L+1)+N\log_q(M+1)$ in units of $q$-ary symbols because  the alphabet size for the indices $i_s, j_s$ is $L+1, M+1$, respectively. 
The scheme presented in Section \ref{sec:achfs11} is utilized by the user to retrieve the scalar products $A_s(i_s)B_s(j_s)$ with total download from $N$ servers not exceeding $N\log_q(4(q-1))$.  Note that since $A_s(i_s)B_s(j_s)$ is always non-zero there is no need for  downloading additional indicators needed to identify zero values. The correctness follows from the fact that
\begin{align}
\mathbf{A}_s\times\mathbf{B}_s=(A_s(i_s)B_s(j_s))\mathbf{A}'_s\times\mathbf{B}'_s,
\end{align}
and by the construction of the scheme, $(A_s(i_s)B_s(j_s))$ and $\mathbf{A}'_s\times\mathbf{B}'_s$ are recoverable for all $s\in[S]$.
Therefore,  the rate achieved is
\begin{align}
R&=\frac{H_q\left((\mathbf{A}_s\times\mathbf{B}_s)_{s\in[S]}\right)}{D}\\
&\geq\frac{H_q\left((\mathbf{A}_s\times\mathbf{B}_s)_{s\in[S]}\right)}{N(L+M-2)+N\log_q(L+1)+N\log_q(M+1)+N\log_q(4(q-1))}\\
&=\frac{S(L+M-1)}{N(L+M-2)+N\log_q(L+1)+N\log_q(M+1)+N\log_q(4(q-1))}\\
&=\frac{(N-X)(L+M-1)}{N(L+M-2)+N\log_q(L+1)+N\log_q(M+1)+N\log_q(4(q-1))}
\end{align}
which approaches $1-\frac{X}{N}$ as $q\rightarrow\infty$.
This proves achievability of case  \eqref{eq:cabphi2}, and completes the achievability proof of  Theorem \ref{thm:cabphi}.

\section{Conclusion}
A class of Secure Distributed Batch Matrix Multiplication (SDBMM) problems was defined in this work, and its capacity characterized in various parameter regimes depending on the security level $X$, number of servers $N$, matrix dimensions $L, M, K$ and the set of matrices that are secured or available to the users as side-information. Notable aspects include connections between SDBMM and a form of PIR known as MM-XSTPIR that led us to various converse bounds, and cross-subspace alignment schemes along with monomorphic transformations from scalar multiplication to scalar addition that formed the basis of some of the achievable schemes. 
Note that  most of the achievable schemes in this work can also be adapted to the one-shot matrix multiplication framework of \cite{Kakar_Ebadifar_Sezgin_CSA, Chang_Tandon_SDMMOS}, say where $L, K, M$ all approach infinity and the ratios $L/K$, $K/M$ are fixed constants. The converse parts follow directly, for the achievability parts, we can adapt our schemes to one-shot matrix multiplication based on matrix partitioning and zero-padding. For example, consider the \emph{cross subspace alignment} based scheme in Section \ref{sec:csascheme}. Let us define $L_1=\lfloor L/S'\rfloor S'$, where $S'=N-X_A-X_B$. Now let us partition matrix $\mathbf{A}$ as follows.
\begin{equation}
\mathbf{A}=[\mathbf{A}_{1,1} \quad \dots \quad \mathbf{A}_{1,S'} \quad \mathbf{A}_{2,1} \quad \dots \quad \mathbf{A}_{2,(L \mod S')}]^T,
\end{equation} 
where $(\mathbf{A}_{1,s'})_{s\in[S']}$ are $L_1/S'\times K$ matrices, $(\mathbf{A}_{2,s'})_{s\in[L\mod S']}$ are $1\times K$ vectors. Now with the schemes presented in this work, for all $s'\in[S']$, by setting $\mathbf{A}_{s'}=\mathbf{A}_{1,s'}$ and $\mathbf{B}_{s'}=\mathbf{B}$, one can recover $(\mathbf{A}_{1,s'}\mathbf{B})_{s\in[S']}$. On the other hand, we can then set $\mathbf{A}_{s'}=\mathbf{A}_{2,s'}, s'\in[L\mod S']$, $\mathbf{A}_{s'}=\mathbf{0}, s\in\{(L\mod S')+1,\dots,S'\}$ and also $\mathbf{B}_{s'}=\mathbf{B}, s\in[S']$ to recover $(\mathbf{A}_{2,s'}\mathbf{B})_{s\in[K\mod S']}$. Note that the extra cost of zero-padding is upper bounded by $KS'/LK=S'/L$, which goes to zero as $L\rightarrow\infty$. Thus the desired rates are still achievable. In terms of future work, open problems that merit immediate attention include the many cases of SDBMM where the capacity remains open. For example,  the capacity of the basic SDBMM$_{(\bf AB,\phi)}$ setting, previously believed to be solved in \cite{Kakar_Ebadifar_Sezgin_CSA} is shown to be still open in general, including the  important case of square matrices $L=K=M>1$ with sufficiently many servers $N>X$. From the case $L=K=M=1$ that is already solved in this work, it seems that the generalization could require expanding the scope of constructions based on non-trivial monomorphic transformations, which presents an interesting research avenue. In terms of the connection to PIR, this work  highlights the importance of finding the capacity characterizations for MM-XSTPIR, as well as MM-XSTPC. Evidently solutions to these PIR problems would not only add to the growing literature on PIR that already includes many successful capacity characterizations \cite{Sun_Jafar_PIR, Sun_Jafar_TPIR, Sun_Jafar_SPIR, Banawan_Ulukus, Jia_Sun_Jafar_XSTPIR, Sun_Jafar_PC, Mirmohseni_Maddah}, but also have a ripple effect on important problems  that are intimately connected to PIR. Similar to PIR, the models of SDBMM could also be further enriched to include privacy of retrieved information, coded storage \cite{Banawan_Ulukus, FREIJ_HOLLANTI, Sun_Jafar_MDSTPIR}, storage size and repair constraints \cite{Tandon_Amuru_Clancy_Buehrer, Attia_Kumar_Tandon, Wei_Banawan_Ulukus} and generalized forms of side-information \cite{Kadhe_Garcia_Heidarzadeh_Rouayheb_Sprintson, Tandon_CPIR, Chen_Wang_Jafar}. Thus, just like PIR, SDBMM offers a fertile research landscape for discovering new coding structures and  converse arguments.

\appendix
\section{Upper Bound of the Capacity of Multi-Message XSTPIR}\label{sec:proofmmxstpir}
To prove Theorem \ref{thm:mmxstpir}, we need following lemmas.
\begin{lemma}\label{lemma:mmxstpir1}
For all $\mathcal{K},\mathcal{K}'\subset[K]$, $|\mathcal{K}|=|\mathcal{K}'|=M$, $\forall\mathcal{T}\subset[N], |\mathcal{T}|=T$, we have
\begin{align}
\left(Q_{\mathcal{T}}^{\mathcal{K}},A_{\mathcal{T}}^{\mathcal{K}},S_{[N]},W_{[K]}\right)\sim\left(Q_{\mathcal{T}}^{\mathcal{K}'},A_{\mathcal{T}}^{\mathcal{K}'},S_{[N]},W_{[K]}\right)
\end{align}
\end{lemma}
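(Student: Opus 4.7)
The plan is to reduce the claim to a statement about the marginal distribution of $Q_{\mathcal{T}}^{\mathcal{K}}$, and then invoke the $T$-privacy constraint \eqref{eq:mmxstpirpriv} together with the query--storage independence \eqref{eq:mmxstpirqsind}. By \eqref{eq:mmxstpiransfunc}, for each $n\in[N]$ there is a deterministic map $a_n$ such that $A_n^{\mathcal{K}}=a_n(Q_n^{\mathcal{K}},S_n)$ regardless of the realization of $\mathcal{K}$; applying these maps coordinatewise over $n\in\mathcal{T}$ is a fixed deterministic function of $(Q_{\mathcal{T}}^{\mathcal{K}},S_{\mathcal{T}})$, so it suffices to establish the weaker statement
\[
(Q_{\mathcal{T}}^{\mathcal{K}},S_{[N]},W_{[K]})\sim (Q_{\mathcal{T}}^{\mathcal{K}'},S_{[N]},W_{[K]}).
\]
Once this is proved, pushing forward through the common map $(q,s)\mapsto (q,a_{\mathcal{T}}(q,s),s,w)$ recovers the lemma.

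The first main step is to note that, by \eqref{eq:mmxstpirqsind}, the pair $(\mathcal{K},Q_{[N]}^{\mathcal{K}})$ is independent of $S_{[N]}$, and since $W_{[K]}$ is a function of $S_{[N]}$ by \eqref{eq:mmxstpirmsgfunc}, the pair $(\mathcal{K},Q_{\mathcal{T}}^{\mathcal{K}})$ is also independent of $(S_{[N]},W_{[K]})$. Consequently, for any realization $k$ of $\mathcal{K}$, the conditional joint distribution factorizes as
\[
P_{Q_{\mathcal{T}}^{\mathcal{K}},S_{[N]},W_{[K]}\mid\mathcal{K}=k}
=P_{Q_{\mathcal{T}}^{\mathcal{K}}\mid\mathcal{K}=k}\otimes P_{S_{[N]},W_{[K]}},
\]
so the only quantity that can depend on $k$ is the marginal of $Q_{\mathcal{T}}^{\mathcal{K}}$.

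The second main step is to kill this dependence on $k$ using $T$-privacy. Equation \eqref{eq:mmxstpirpriv} gives $I(Q_{\mathcal{T}}^{\mathcal{K}},S_{\mathcal{T}};\mathcal{K})=0$, and dropping $S_{\mathcal{T}}$ from the first argument (which can only decrease mutual information) yields $I(Q_{\mathcal{T}}^{\mathcal{K}};\mathcal{K})=0$. Hence the conditional marginal $P_{Q_{\mathcal{T}}^{\mathcal{K}}\mid\mathcal{K}=k}$ coincides with the unconditional marginal of $Q_{\mathcal{T}}^{\mathcal{K}}$ for every $k$, and in particular takes the same value at $k=\mathcal{K}$ and at $k=\mathcal{K}'$. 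Combining the factorization of the previous step with this invariance gives the reduced statement, and then applying $a_{\mathcal{T}}$ coordinatewise completes the proof.

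I do not anticipate a serious obstacle here: the argument is essentially bookkeeping. The one point that needs care is the notational overloading of $\mathcal{K}$ as both a random variable (inside the mutual information expressions in \eqref{eq:mmxstpirqsind} and \eqref{eq:mmxstpirpriv}) and as a fixed index set (in the lemma statement); the bridge is provided by the standard fact that $I(U;V)=0$ is equivalent to $P_{U\mid V=v}=P_U$ for almost every $v$, which is what lets us translate the privacy and independence constraints into statements about conditional distributions.
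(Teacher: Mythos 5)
Your proof is correct and relies on exactly the same ingredients as the paper's — the answer-determinism \eqref{eq:mmxstpiransfunc}, the functional dependence of $W_{[K]}$ on $S_{[N]}$, query--storage independence \eqref{eq:mmxstpirqsind}, and $T$-privacy \eqref{eq:mmxstpirpriv}. The paper establishes the statement by showing $I\bigl(Q_{\mathcal{T}}^{\mathcal{K}},A_{\mathcal{T}}^{\mathcal{K}},S_{[N]},W_{[K]};\mathcal{K}\bigr)=0$ via a chain-rule expansion, whereas you peel off the answers as a deterministic pushforward and then factorize the conditional law, but this is the same argument phrased in probabilistic rather than entropy-chain language.
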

\begin{proof}
It suffices to prove $I_q\left(Q_{\mathcal{T}}^{\mathcal{K}},A_{\mathcal{T}}^{\mathcal{K}},S_{[N]},W_{[K]};\mathcal{K}\right)=0$. The proof is presented as follows.
\begin{align}
&I_q\left(Q_{\mathcal{T}}^{\mathcal{K}},A_{\mathcal{T}}^{\mathcal{K}},S_{[N]},W_{[K]};\mathcal{K}\right)\\
=&I_q(Q_{\mathcal{T}}^{\mathcal{K}};\mathcal{K})+I_q(S_{[N]},W_{[K]};\mathcal{K}|Q_{\mathcal{T}}^{\mathcal{K}})+I_q(A_{\mathcal{T}}^{\mathcal{K}};\mathcal{K}|Q_{\mathcal{T}}^{\mathcal{K}},S_{[N]},W_{[K]})\label{eq:lemmammxstpir11}\\
=&I_q(Q_{\mathcal{T}}^{\mathcal{K}};\mathcal{K})+I_q(S_{[N]},W_{[K]};\mathcal{K}|Q_{\mathcal{T}}^{\mathcal{K}})\label{eq:lemmammxstpir12}\\
=&I_q(Q_{\mathcal{T}}^{\mathcal{K}};\mathcal{K})+I_q(S_{[N]};\mathcal{K}|Q_{\mathcal{T}}^{\mathcal{K}})\label{eq:lemmammxstpir13}\\
\leq&I_q(Q_{\mathcal{T}}^{\mathcal{K}};\mathcal{K})+I_q(S_{[N]};\mathcal{K},Q_{\mathcal{T}}^{\mathcal{K}})\label{eq:lemmammxstpir14}\\
=&0.\label{eq:lemmammxstpir15}
\end{align}
Steps are justified as follows. \eqref{eq:lemmammxstpir11} is the chain rule of mutual information. \eqref{eq:lemmammxstpir12} holds from the fact that $A_{\mathcal{T}}^{\mathcal{K}}$ is function of $(Q_{\mathcal{T}}^{\mathcal{K}},S_{[N]})$ according to \eqref{eq:mmxstpiransfunc}. \eqref{eq:lemmammxstpir13} follows because $W_{[K]}$ is function of $S_{[N]}$, according to \eqref{eq:mmxstpirmsgfunc}. \eqref{eq:lemmammxstpir14} follows from the chain rule and non-negativity of mutual information. In last step, we simply used \eqref{eq:mmxstpirqsind} and \eqref{eq:mmxstpirpriv}. This completes the proof of Lemma \ref{lemma:mmxstpir1}.
\end{proof}

\begin{lemma}\label{lemma:mmxstpir2}
For all $\mathcal{T},\mathcal{X}\subset[N]$, $|\mathcal{T}|=T, |\mathcal{X}|=X$, $\forall\mathcal{K},\mathcal{K}'\subset[K]$, $|\mathcal{K}|=|\mathcal{K}'|=M$, $\forall\kappa\subset[K]$, we have
\begin{align}
H_q(A_{\mathcal{T}}^{\mathcal{K}}|S_{\mathcal{X}},Q_{[N]}^{\mathcal{K}},W_{\kappa})=H_q(A_{\mathcal{T}}^{\mathcal{K}}|S_{\mathcal{X}},Q_{\mathcal{T}}^{\mathcal{K}},W_{\kappa}).
\end{align}
\end{lemma}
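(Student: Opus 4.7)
The plan is to show the equivalent statement that $I(A_{\mathcal{T}}^{\mathcal{K}};Q_{[N]\setminus\mathcal{T}}^{\mathcal{K}}\mid S_{\mathcal{X}},Q_{\mathcal{T}}^{\mathcal{K}},W_{\kappa})=0$. Since conditioning reduces entropy automatically yields the inequality $H(A_{\mathcal{T}}^{\mathcal{K}}\mid S_{\mathcal{X}},Q_{[N]}^{\mathcal{K}},W_{\kappa})\leq H(A_{\mathcal{T}}^{\mathcal{K}}\mid S_{\mathcal{X}},Q_{\mathcal{T}}^{\mathcal{K}},W_{\kappa})$, the content of the lemma is the reverse inequality, which is exactly the statement that the queries to the $N-T$ servers outside $\mathcal{T}$ contribute no additional information about $A_{\mathcal{T}}^{\mathcal{K}}$ once $(S_{\mathcal{X}},Q_{\mathcal{T}}^{\mathcal{K}},W_{\kappa})$ is known.

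To establish this, I would first enlarge the variable whose mutual information with $Q_{[N]\setminus\mathcal{T}}^{\mathcal{K}}$ we track. Using the chain rule and non-negativity of mutual information,
\begin{align*}
I\bigl(A_{\mathcal{T}}^{\mathcal{K}};Q_{[N]\setminus\mathcal{T}}^{\mathcal{K}}\mid S_{\mathcal{X}},Q_{\mathcal{T}}^{\mathcal{K}},W_{\kappa}\bigr)
&\leq I\bigl(A_{\mathcal{T}}^{\mathcal{K}},S_{\mathcal{X}},W_{\kappa};Q_{[N]\setminus\mathcal{T}}^{\mathcal{K}}\mid Q_{\mathcal{T}}^{\mathcal{K}}\bigr).
\end{align*}
Next, invoke the structural facts from the problem statement: by \eqref{eq:mmxstpiransfunc}, $A_{\mathcal{T}}^{\mathcal{K}}$ is a deterministic function of $(S_{\mathcal{T}},Q_{\mathcal{T}}^{\mathcal{K}})$, and by \eqref{eq:mmxstpirmsgfunc}, $W_{\kappa}$ and $S_{\mathcal{X}}$ are deterministic functions of $S_{[N]}$. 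Hence the tuple $(A_{\mathcal{T}}^{\mathcal{K}},S_{\mathcal{X}},W_{\kappa})$ is a function of $(S_{[N]},Q_{\mathcal{T}}^{\mathcal{K}})$, so the previous quantity is upper bounded by
\begin{align*}
I\bigl(S_{[N]};Q_{[N]\setminus\mathcal{T}}^{\mathcal{K}}\mid Q_{\mathcal{T}}^{\mathcal{K}}\bigr).
\end{align*}

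The final step is to show this conditional mutual information vanishes. This follows from \eqref{eq:mmxstpirqsind}, which states $I(S_{[N]};\mathcal{K},Q_{[N]}^{\mathcal{K}})=0$ and in particular $S_{[N]}\perp Q_{[N]}^{\mathcal{K}}$. Marginal independence of $S_{[N]}$ from the pair $(Q_{\mathcal{T}}^{\mathcal{K}},Q_{[N]\setminus\mathcal{T}}^{\mathcal{K}})$ immediately implies conditional independence $S_{[N]}\perp Q_{[N]\setminus\mathcal{T}}^{\mathcal{K}}\mid Q_{\mathcal{T}}^{\mathcal{K}}$, so $I(S_{[N]};Q_{[N]\setminus\mathcal{T}}^{\mathcal{K}}\mid Q_{\mathcal{T}}^{\mathcal{K}})=0$, and the chain of inequalities closes.

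The proof is essentially a bookkeeping exercise once the right chain-rule expansion is chosen. The only real pitfall is resisting the temptation to condition on $(S_{\mathcal{X}},W_{\kappa})$ when invoking \eqref{eq:mmxstpirqsind}: the independence of storage and queries is a marginal (unconditional) statement, and conditioning on functions of $S_{[N]}$ could in principle break it, so the argument must first discharge those conditioning variables into the ``target'' side of the mutual information (as done above) and only then apply \eqref{eq:mmxstpirqsind}.
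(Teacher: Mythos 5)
Your proof is correct and follows essentially the same route as the paper's: rewrite the gap as a conditional mutual information, use the chain rule to move $(S_{\mathcal{X}},W_{\kappa})$ out of the conditioning, reduce to $I(S_{[N]};\,\cdot\,)$ via the functional dependencies from \eqref{eq:mmxstpiransfunc} and \eqref{eq:mmxstpirmsgfunc}, and finish with \eqref{eq:mmxstpirqsind}. The only differences are cosmetic: you compress the paper's steps \eqref{eq:lemmammxstpir23}--\eqref{eq:lemmammxstpir26} into a single data-processing bound, and you close with the observation that $S_{[N]}\perp(Q_{\mathcal{T}}^{\mathcal{K}},Q_{[N]\setminus\mathcal{T}}^{\mathcal{K}})$ directly gives the conditional independence, where the paper instead drops the conditioning by chain rule and non-negativity.
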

\begin{proof}
\begin{align}
&H_q(A_{\mathcal{T}}^{\mathcal{K}}|S_{\mathcal{X}},Q_{\mathcal{T}}^{\mathcal{K}},W_{\kappa})-H_q(A_{\mathcal{T}}^{\mathcal{K}}|S_{\mathcal{X}},Q_{[N]}^{\mathcal{K}},W_{\kappa})\\
&=I_q(A_{\mathcal{T}}^{\mathcal{K}};Q_{[N]}^{\mathcal{K}}|S_{\mathcal{X}},Q_{\mathcal{T}}^{\mathcal{K}},W_{\kappa})\label{eq:lemmammxstpir21}\\
&\leq I_q(A_{\mathcal{T}}^{\mathcal{K}},S_{\mathcal{X}},W_{\kappa};Q_{[N]}^{\mathcal{K}}|Q_{\mathcal{T}}^{\mathcal{K}})\label{eq:lemmammxstpir22}\\
&\leq I_q(A_{\mathcal{T}}^{\mathcal{K}},S_{[N]},W_{\kappa};Q_{[N]}^{\mathcal{K}}|Q_{\mathcal{T}}^{\mathcal{K}})\label{eq:lemmammxstpir23}\\
&=I_q(A_{\mathcal{T}}^{\mathcal{K}},S_{[N]};Q_{[N]}^{\mathcal{K}}|Q_{\mathcal{T}}^{\mathcal{K}})\label{eq:lemmammxstpir24}\\
&=I_q(S_{[N]};Q_{[N]}^{\mathcal{K}}|Q_{\mathcal{T}}^{\mathcal{K}})+I_q(A_{\mathcal{T}}^{\mathcal{K}};Q_{[N]}^{\mathcal{K}}|Q_{\mathcal{T}}^{\mathcal{K}},S_{[N]})\label{eq:lemmammxstpir25}\\
&=I_q(S_{[N]};Q_{[N]}^{\mathcal{K}}|Q_{\mathcal{T}}^{\mathcal{K}})\label{eq:lemmammxstpir26}\\
&\leq I_q(S_{[N]};Q_{[N]}^{\mathcal{K}})\label{eq:lemmammxstpir27}\\
&=0.\label{eq:lemmammxstpir28}
\end{align}
Steps are justified as follows. \eqref{eq:lemmammxstpir21} is the definition of mutual information. \eqref{eq:lemmammxstpir22} follows from the chain rule and non-negativity of mutual information. In \eqref{eq:lemmammxstpir23}, we added terms in mutual information. \eqref{eq:lemmammxstpir24} holds from the fact that $W_{[K]}$ is function of $S_{[N]}$, according to \eqref{eq:mmxstpirmsgfunc}. \eqref{eq:lemmammxstpir25} is the chain rule of mutual information. \eqref{eq:lemmammxstpir26} follows from the fact that $A_{\mathcal{T}}^{\mathcal{K}}$ is fully determined by $(Q_{\mathcal{T}}^{\mathcal{K}},S_{[N]})$ according to \eqref{eq:mmxstpiransfunc}. \eqref{eq:lemmammxstpir27} follows from the chain rule and non-negativity of mutual information, while the last step holds from \eqref{eq:mmxstpirqsind}. This completes the proof of Lemma \ref{lemma:mmxstpir2}.
\end{proof}

\begin{lemma}\label{lemma:mmxstpir3}
Denote $D_n$ the expected number of $q$-ary symbols downloaded from the $n$-th server. For all $\mathcal{X}\subset[N]$, $|\mathcal{X}|=X$, $\overline{\mathcal{X}}=[N]\setminus\mathcal{X}$, $\forall\mathcal{K}_1\subset[K]$, $|\mathcal{K}_1|=M$, we have
\begin{align}
ML\leq\sum_{n\in\overline{\mathcal{X}}}D_n-H_q(A_{\overline{\mathcal{X}}}^{\mathcal{K}_1}|S_{\mathcal{X}},Q_{[N]}^{\mathcal{K}_1},W_{\mathcal{K}_1}).
\end{align}
\end{lemma}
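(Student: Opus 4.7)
The plan is to start from the message entropy $H(W_{\mathcal{K}_1})=ML$ and route it, via a chain of entropy-preserving and entropy-reducing steps, to the two quantities that appear on the right-hand side: the per-server download budget for $\overline{\mathcal{X}}$ and the conditional entropy $H(A_{\overline{\mathcal{X}}}^{\mathcal{K}_1}\mid S_{\mathcal{X}},Q_{[N]}^{\mathcal{K}_1},W_{\mathcal{K}_1})$.

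First I would use $X$-security \eqref{eq:mmxstpirsecur} together with query independence \eqref{eq:mmxstpirqsind} to argue that $(S_{\mathcal{X}},Q_{[N]}^{\mathcal{K}_1})$ is jointly independent of $W_{\mathcal{K}_1}$. Concretely, \eqref{eq:mmxstpirqsind} makes $S_{[N]}$ (and hence any function of it, including $W_{[K]}$ and $S_{\mathcal{X}}$) independent of $Q_{[N]}^{\mathcal{K}_1}$, so conditioning on $Q_{[N]}^{\mathcal{K}_1}$ preserves the security statement $I(S_{\mathcal{X}};W_{[K]})=0$. This gives $H(W_{\mathcal{K}_1})=H(W_{\mathcal{K}_1}\mid S_{\mathcal{X}},Q_{[N]}^{\mathcal{K}_1})$. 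Next I would invoke decodability \eqref{eq:mmxstpirdecode}, which implies $H(W_{\mathcal{K}_1}\mid A_{[N]}^{\mathcal{K}_1},Q_{[N]}^{\mathcal{K}_1},S_{\mathcal{X}})=0$, so I can rewrite the above as $I(W_{\mathcal{K}_1};A_{[N]}^{\mathcal{K}_1}\mid S_{\mathcal{X}},Q_{[N]}^{\mathcal{K}_1})$.

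Then the answer-function property \eqref{eq:mmxstpiransfunc} lets me drop the answers from the colluding set: $A_{\mathcal{X}}^{\mathcal{K}_1}$ is determined by $(Q_{\mathcal{X}}^{\mathcal{K}_1},S_{\mathcal{X}})$, both of which lie in the conditioning, so the mutual information collapses to $I(W_{\mathcal{K}_1};A_{\overline{\mathcal{X}}}^{\mathcal{K}_1}\mid S_{\mathcal{X}},Q_{[N]}^{\mathcal{K}_1})$. Expanding this as a difference of entropies yields
\begin{align*}
ML = H(A_{\overline{\mathcal{X}}}^{\mathcal{K}_1}\mid S_{\mathcal{X}},Q_{[N]}^{\mathcal{K}_1}) - H(A_{\overline{\mathcal{X}}}^{\mathcal{K}_1}\mid S_{\mathcal{X}},Q_{[N]}^{\mathcal{K}_1},W_{\mathcal{K}_1}).
\end{align*}
Finally, dropping conditioning on the first term and sub-additivity of entropy give $H(A_{\overline{\mathcal{X}}}^{\mathcal{K}_1}\mid S_{\mathcal{X}},Q_{[N]}^{\mathcal{K}_1}) \le \sum_{n\in\overline{\mathcal{X}}} H(A_n^{\mathcal{K}_1}) \le \sum_{n\in\overline{\mathcal{X}}} D_n$, completing the bound.

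The step I expect to be the most delicate is the first one, namely verifying $H(W_{\mathcal{K}_1})=H(W_{\mathcal{K}_1}\mid S_{\mathcal{X}},Q_{[N]}^{\mathcal{K}_1})$, because it simultaneously exploits two distinct constraints (storage $X$-security and the independence of queries from storage). Everything afterwards is a straightforward chain-rule plus sub-additivity argument; the bookkeeping just has to respect the fact that $\mathcal{K}_1$ is a fixed deterministic subset here (no averaging over $\mathcal{K}$ is required in this lemma), which keeps the conditional entropies well-defined.
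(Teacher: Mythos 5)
Your proof is correct and follows essentially the same route as the paper: both exploit $X$-security \eqref{eq:mmxstpirsecur}, query–storage independence \eqref{eq:mmxstpirqsind}, decodability \eqref{eq:mmxstpirdecode}, and the answer-function property \eqref{eq:mmxstpiransfunc} to express $ML$ via $I(W_{\mathcal{K}_1};A_{\overline{\mathcal{X}}}^{\mathcal{K}_1}\mid S_{\mathcal{X}},Q_{[N]}^{\mathcal{K}_1})$, then expand and apply sub-additivity. The only difference is cosmetic: you observe that $I(W_{\mathcal{K}_1};S_{\mathcal{X}},Q_{[N]}^{\mathcal{K}_1})=0$ upfront, which turns the paper's chain of one-sided inequalities in \eqref{eq:lemmammxstpir32}--\eqref{eq:lemmammxstpir36} into an exact identity $ML=H(A_{\overline{\mathcal{X}}}^{\mathcal{K}_1}\mid S_{\mathcal{X}},Q_{[N]}^{\mathcal{K}_1})-H(A_{\overline{\mathcal{X}}}^{\mathcal{K}_1}\mid S_{\mathcal{X}},Q_{[N]}^{\mathcal{K}_1},W_{\mathcal{K}_1})$ before the final bound; both derivations yield the same lemma.
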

\begin{proof}
\begin{align}
ML=H_q(W_{\mathcal{K}_1})&=I_q(W_{\mathcal{K}_1};A_{[N]}^{\mathcal{K}_1}|Q_{[N]}^{\mathcal{K}_1})\label{eq:lemmammxstpir31}\\
&\leq I_q(W_{\mathcal{K}_1};A_{[N]}^{\mathcal{K}_1},S_{\mathcal{X}}|Q_{[N]}^{\mathcal{K}_1})\label{eq:lemmammxstpir32}\\
&=I_q(W_{\mathcal{K}_1};S_{\mathcal{X}}|Q_{[N]}^{\mathcal{K}_1})+I_q(W_{\mathcal{K}_1};A_{[N]}^{\mathcal{K}_1}|S_{\mathcal{X}},Q_{[N]}^{\mathcal{K}_1})\label{eq:lemmammxstpir33}\\
&\leq I_q(W_{\mathcal{K}_1},Q_{[N]}^{\mathcal{K}_1};S_{\mathcal{X}})+I_q(W_{\mathcal{K}_1};A_{[N]}^{\mathcal{K}_1}|S_{\mathcal{X}},Q_{[N]}^{\mathcal{K}_1})\label{eq:lemmammxstpir34}\\
&=I_q(W_{\mathcal{K}_1};S_{\mathcal{X}})+I_q(Q_{[N]}^{\mathcal{K}_1};S_{\mathcal{X}}\mid W_{\mathcal{K}_1})+I_q(W_{\mathcal{K}_1};A_{[N]}^{\mathcal{K}_1}|S_{\mathcal{X}},Q_{[N]}^{\mathcal{K}_1})\label{eq:lemmammxstpir35}\\
&\leq I_q(Q_{[N]}^{\mathcal{K}_1};W_{\mathcal{K}_1},S_{\mathcal{X}})+I_q(W_{\mathcal{K}_1};A_{[N]}^{\mathcal{K}_1}|S_{\mathcal{X}},Q_{[N]}^{\mathcal{K}_1})\label{eq:lemmammxstpir36}\\
&=I_q(W_{\mathcal{K}_1};A_{\mathcal{X}}^{\mathcal{K}_1},A_{\overline{\mathcal{X}}}^{\mathcal{K}_1}|S_{\mathcal{X}},Q_{[N]}^{\mathcal{K}_1})\label{eq:lemmammxstpir37}\\
&=I_q(W_{\mathcal{K}_1};A_{\overline{\mathcal{X}}}^{\mathcal{K}_1}|S_{\mathcal{X}},Q_{[N]}^{\mathcal{K}_1})\label{eq:lemmammxstpir38}\\
&=H_q(A_{\overline{\mathcal{X}}}^{\mathcal{K}_1}|S_{\mathcal{X}},Q_{[N]}^{\mathcal{K}_1})-H_q(A_{\overline{\mathcal{X}}}^{\mathcal{K}_1}|W_{\mathcal{K}_1},S_{\mathcal{X}},Q_{[N]}^{\mathcal{K}_1})\label{eq:lemmammxstpir39}\\
&\leq\sum_{n\in\overline{\mathcal{X}}}D_n-H_q(A_{\overline{\mathcal{X}}}^{\mathcal{K}_1}|S_{\mathcal{X}},Q_{[N]}^{\mathcal{K}_1},W_{\mathcal{K}_1}).\label{eq:lemmammxstpir310}
\end{align}
Steps are justified as follows. \eqref{eq:lemmammxstpir31} follows from \eqref{eq:mmxstpirdecode}, while in \eqref{eq:lemmammxstpir32}, we add terms in mutual information. In \eqref{eq:lemmammxstpir33}, \eqref{eq:lemmammxstpir34}, \eqref{eq:lemmammxstpir35} and \eqref{eq:lemmammxstpir36}, we repeatedly used  the chain rule and non-negativity of mutual information, while \eqref{eq:lemmammxstpir36} and \eqref{eq:lemmammxstpir37} holds from the independence of query and storage, according to \eqref{eq:mmxstpirqsind}. \eqref{eq:lemmammxstpir38} holds from the fact that $A_{\mathcal{X}}^{\mathcal{K}_1}$ is fully determined by $(Q_{[N]}^{\mathcal{K}_1},S_{\mathcal{X}})$ according to \eqref{eq:mmxstpiransfunc}. \eqref{eq:lemmammxstpir39}
 is the definition of mutual information, while \eqref{eq:lemmammxstpir310} follows from the fact that dropping conditions can not reduce entropy. This completes the proof of Lemma \ref{lemma:mmxstpir3}.
\end{proof}

\begin{lemma}\label{lemma:mmxstpir4}
For all $\mathcal{X}\subset[N]$, $|\mathcal{X}|=X$, $\forall\mathcal{K}\subset[K]$, $|\mathcal{K}|=M$, $\forall \kappa\subset[K]$, we have
\begin{align}
I_q(W_{\mathcal{K}};S_{\mathcal{X}},Q_{[N]}^{\mathcal{K}},W_{\kappa})=|\mathcal{K}\cap\kappa|L.
\end{align}
\end{lemma}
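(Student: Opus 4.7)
The plan is to split the mutual information by the chain rule, separate the ``message overlap'' term (which produces the $|\mathcal{K}\cap\kappa|L$) from an ``external information'' term (which should vanish because storage and queries are jointly independent of the messages), and then carefully establish that joint independence from the three ingredients already available: message independence \eqref{eq:mmxstpirmsgind}, $X$-security of the storage \eqref{eq:mmxstpirsecur}, and the independence of the queries and storage \eqref{eq:mmxstpirqsind}.

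First I would write
\begin{align*}
I(W_{\mathcal{K}};S_{\mathcal{X}},Q_{[N]}^{\mathcal{K}},W_{\kappa})
= I(W_{\mathcal{K}};W_{\kappa}) + I\bigl(W_{\mathcal{K}};S_{\mathcal{X}},Q_{[N]}^{\mathcal{K}}\mid W_{\kappa}\bigr).
\end{align*}
The first term is immediate from the independence of the messages \eqref{eq:mmxstpirmsgind}: only the components indexed by $\mathcal{K}\cap\kappa$ contribute, giving $I(W_{\mathcal{K}};W_{\kappa})=H(W_{\mathcal{K}\cap\kappa})=|\mathcal{K}\cap\kappa|L$.

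The second (and main) step is to show that $I(W_{\mathcal{K}};S_{\mathcal{X}},Q_{[N]}^{\mathcal{K}}\mid W_{\kappa})=0$. My approach is to prove the stronger statement that $(S_{\mathcal{X}},Q_{[N]}^{\mathcal{K}})$ is \emph{jointly} independent of $W_{[K]}$, i.e., $I(W_{[K]};S_{\mathcal{X}},Q_{[N]}^{\mathcal{K}})=0$, from which the conditional claim follows by two applications of the chain rule together with non-negativity of mutual information (condition on $W_{\kappa}$, then bound $W_{\mathcal{K}}$ and $W_{\kappa}$ by $W_{[K]}$). To get joint independence, expand
\begin{align*}
I(W_{[K]};S_{\mathcal{X}},Q_{[N]}^{\mathcal{K}})
= I(W_{[K]};Q_{[N]}^{\mathcal{K}}) + I\bigl(W_{[K]};S_{\mathcal{X}}\mid Q_{[N]}^{\mathcal{K}}\bigr).
\end{align*}
Because $W_{[K]}$ is a function of $S_{[N]}$ by \eqref{eq:mmxstpirmsgfunc}, the first term is upper bounded by $I(S_{[N]};Q_{[N]}^{\mathcal{K}})\le I(S_{[N]};\mathcal{K},Q_{[N]}^{\mathcal{K}})=0$ using \eqref{eq:mmxstpirqsind}.

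The remaining term $I(W_{[K]};S_{\mathcal{X}}\mid Q_{[N]}^{\mathcal{K}})$ is where I expect the main subtlety. I plan to handle it by the symmetric form of the chain rule: it equals $I(Q_{[N]}^{\mathcal{K}};S_{\mathcal{X}}\mid W_{[K]})$, which in turn equals $I(Q_{[N]}^{\mathcal{K}};S_{\mathcal{X}},W_{[K]})-I(Q_{[N]}^{\mathcal{K}};W_{[K]})$. Both pieces are $0$: the first because $I(Q_{[N]}^{\mathcal{K}};S_{\mathcal{X}},W_{[K]})\le I(Q_{[N]}^{\mathcal{K}};S_{[N]})=0$ (again using \eqref{eq:mmxstpirmsgfunc} and \eqref{eq:mmxstpirqsind}), and the second by the argument already given. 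Substituting back yields $I(W_{[K]};S_{\mathcal{X}},Q_{[N]}^{\mathcal{K}})=0$, which completes the proof. The only non-routine ingredient is the need to upgrade the two marginal independence statements ($X$-security of $S_{\mathcal{X}}$ and independence of $Q_{[N]}^{\mathcal{K}}$ from $S_{[N]}$) into joint independence of the pair $(S_{\mathcal{X}},Q_{[N]}^{\mathcal{K}})$ from the messages; everything else is bookkeeping.
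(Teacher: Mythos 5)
Your overall plan is sound and lands on the correct answer, and in fact it mirrors the paper's proof (same chain-rule split, first term by message independence, second term shown to vanish). However, one step in your justification is not valid as stated. You claim that $I(W_{[K]};S_{\mathcal{X}}\mid Q_{[N]}^{\mathcal{K}})$ ``equals $I(Q_{[N]}^{\mathcal{K}};S_{\mathcal{X}}\mid W_{[K]})$'' by ``the symmetric form of the chain rule.'' That is not an identity. The chain rule's symmetry gives $I(A;B)+I(A;C\mid B)=I(A;C)+I(A;B\mid C)$; it never lets you cyclically permute the three arguments the way you do. A quick counterexample: take $W$ a uniform bit, and $Q=S$ a uniform bit independent of $W$. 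Then $I(W;S\mid Q)=0$ but $I(Q;S\mid W)=H(Q)=1$. In the present setting both sides happen to equal zero (because $Q_{[N]}^{\mathcal{K}}$ is jointly independent of $(W_{[K]},S_{\mathcal{X}})$ and $I(W_{[K]};S_{\mathcal{X}})=0$ by $X$-security), so your conclusion is correct, but the stated reason is not.

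The repair is simply to decompose in the other order, which is exactly what the paper does. Write
\begin{align*}
I(W_{[K]};S_{\mathcal{X}},Q_{[N]}^{\mathcal{K}})
= I(W_{[K]};S_{\mathcal{X}}) + I\bigl(W_{[K]};Q_{[N]}^{\mathcal{K}}\mid S_{\mathcal{X}}\bigr).
\end{align*}
The first term is $0$ directly by $X$-security \eqref{eq:mmxstpirsecur}. The second is bounded by
\begin{align*}
I\bigl(W_{[K]};Q_{[N]}^{\mathcal{K}}\mid S_{\mathcal{X}}\bigr)
\le I\bigl(W_{[K]},S_{\mathcal{X}};Q_{[N]}^{\mathcal{K}}\bigr)
\le I\bigl(S_{[N]};Q_{[N]}^{\mathcal{K}}\bigr) = 0,
\end{align*}
using that $W_{[K]}$ and $S_{\mathcal{X}}$ are functions of $S_{[N]}$ \eqref{eq:mmxstpirmsgfunc} and the query--storage independence \eqref{eq:mmxstpirqsind}. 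This gives $I(W_{[K]};S_{\mathcal{X}},Q_{[N]}^{\mathcal{K}})=0$ cleanly, and the rest of your argument (deriving the conditional statement from this joint independence via chain rule and non-negativity) goes through. Modulo this fix, your route of first proving joint independence of $(S_{\mathcal{X}},Q_{[N]}^{\mathcal{K}})$ from all of $W_{[K]}$ is a mild but nice repackaging of the paper's bound, which instead directly bounds the conditional term $I(W_{\mathcal{K}};S_{\mathcal{X}},Q_{[N]}^{\mathcal{K}}\mid W_{\kappa})\le I(S_{\mathcal{X}},Q_{[N]}^{\mathcal{K}};W_{\mathcal{K}},W_{\kappa})$ and then splits on $S_{\mathcal{X}}$ first.
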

\begin{proof}
\begin{align}
&I_q(W_{\mathcal{K}};S_{\mathcal{X}},Q_{[N]}^{\mathcal{K}},W_{\kappa})\notag\\
&=I_q(W_{\mathcal{K}};W_{\kappa})+I_q(W_{\mathcal{K}};S_{\mathcal{X}},Q_{[N]}^{\mathcal{K}}|W_{\kappa})\label{eq:lemmammxstpir41}\\
&=|\mathcal{K}\cap\kappa|L+I_q(W_{\mathcal{K}};S_{\mathcal{X}},Q_{[N]}^{\mathcal{K}}|W_{\kappa}).\label{eq:lemmammxstpir42}
\end{align}
\eqref{eq:lemmammxstpir41} is the chain rule of mutual information, and \eqref{eq:lemmammxstpir42} follows from \eqref{eq:mmxstpirmsgentropy} and \eqref{eq:mmxstpirmsgind}. Let us consider the RHS term, we have
\begin{align}
&I_q(W_{\mathcal{K}};S_{\mathcal{X}},Q_{[N]}^{\mathcal{K}}|W_{\kappa})\notag\\
&\leq I_q(S_{\mathcal{X}},Q_{[N]}^{\mathcal{K}};W_{\mathcal{K}},W_{\kappa})\label{eq:lemmammxstpir43}\\
&=I_q(S_{\mathcal{X}};W_{\mathcal{K}},W_{\kappa})+I_q(Q_{[N]}^{\mathcal{K}};W_{\mathcal{K}},W_{\kappa}|S_{\mathcal{X}})\label{eq:lemmammxstpir44}\\
&=I_q(Q_{[N]}^{\mathcal{K}};W_{\mathcal{K}},W_{\kappa}|S_{\mathcal{X}})\label{eq:lemmammxstpir45}\\
&\leq I_q(Q_{[N]}^{\mathcal{K}};W_{\mathcal{K}},W_{\kappa},S_{\mathcal{X}})\label{eq:lemmammxstpir46}\\
&\leq I_q(Q_{[N]}^{\mathcal{K}};S_{[N]})\label{eq:lemmammxstpir47}\\
&=0.\label{eq:lemmammxstpir48}
\end{align}
Steps are justified as follows. \eqref{eq:lemmammxstpir43} and \eqref{eq:lemmammxstpir44} follows from the chain rule and non-negativity of mutual information, while \eqref{eq:lemmammxstpir45} follows from the $X$-secure constraint in \eqref{eq:mmxstpirsecur}. \eqref{eq:lemmammxstpir46} holds from the chain rule and non-negativity of mutual information, and \eqref{eq:lemmammxstpir47} follows from the fact that $(W_{\mathcal{K}},W_{\kappa},S_{\mathcal{X}})$ is function of $S_{[N]}$. The last step follows from \eqref{eq:mmxstpirqsind}. This completes the proof of Lemma \ref{lemma:mmxstpir4}.
\end{proof}
Now we are ready to formally present the proof of Theorem \ref{thm:mmxstpir}.
\begin{proof}
First, let us consider $X<N\leq X+T$. For this setting, let us assume that $\mathcal{K}_i=[i:i+M-1], i\in[K-M+1]$. Note that by the selection of $\mathcal{K}_i$'s, $\forall i\in[K-M]$, we have
\begin{align}\label{eq:mmxstpirk1}
|\mathcal{K}_{i+1}\cap(\mathcal{K}_1\cup\cdots\cup\mathcal{K}_{i})|=(M-1).
\end{align}
Now let us consider the RHS term in \eqref{eq:lemmammxstpir310}. For all $i\in[K-M]$, we have
\begin{align}
&H_q(A_{\overline{\mathcal{X}}}^{\mathcal{K}_i}|S_{\mathcal{X}},Q_{[N]}^{\mathcal{K}_i},W_{\mathcal{K}_1\cup\cdots\cup\mathcal{K}_{i}})\notag\\
&=H_q(A_{\overline{\mathcal{X}}}^{\mathcal{K}_i}|S_{\mathcal{X}},Q_{\overline{\mathcal{X}}}^{\mathcal{K}_i},W_{\mathcal{K}_1\cup\cdots\cup\mathcal{K}_{i}})\label{eq:thmmmxstpir11}\\
&=H_q(A_{\overline{\mathcal{X}}}^{\mathcal{K}_{i+1}}|S_{\mathcal{X}},Q_{\overline{\mathcal{X}}}^{\mathcal{K}_{i+1}},W_{\mathcal{K}_1\cup\cdots\cup\mathcal{K}_{i}})\label{eq:thmmmxstpir12}\\
&=H_q(A_{\overline{\mathcal{X}}}^{\mathcal{K}_{i+1}}|S_{\mathcal{X}},Q_{[N]}^{\mathcal{K}_{i+1}},W_{\mathcal{K}_1\cup\cdots\cup\mathcal{K}_{i}})\label{eq:thmmmxstpir13}\\
&=H_q(W_{\mathcal{K}_{i+1}},A_{\overline{\mathcal{X}}}^{\mathcal{K}_{i+1}}|S_{\mathcal{X}},Q_{[N]}^{\mathcal{K}_{i+1}},W_{\mathcal{K}_1\cup\cdots\cup\mathcal{K}_{i}})\label{eq:thmmmxstpir14}\\
&=H_q(W_{\mathcal{K}_{i+1}}|S_{\mathcal{X}},Q_{[N]}^{\mathcal{K}_{i+1}},W_{\mathcal{K}_1\cup\cdots\cup\mathcal{K}_{i}})+H_q(A_{\overline{\mathcal{X}}}^{\mathcal{K}_{i+1}}|S_{\mathcal{X}},Q_{[N]}^{\mathcal{K}_{i+1}},W_{\mathcal{K}_1\cup\cdots\cup\mathcal{K}_{i+1}})\label{eq:thmmmxstpir15}\\
&=L+H_q(A_{\overline{\mathcal{X}}}^{\mathcal{K}_{i+1}}|S_{\mathcal{X}},Q_{[N]}^{\mathcal{K}_{i+1}},W_{\mathcal{K}_1\cup\cdots\cup\mathcal{K}_{i+1}}).\label{eq:thmmmxstpir16}
\end{align}
Steps are justified as follows. \eqref{eq:thmmmxstpir11} follows from Lemma \ref{lemma:mmxstpir2}, while \eqref{eq:thmmmxstpir12} follows from Lemma \ref{lemma:mmxstpir1}. \eqref{eq:thmmmxstpir13} again follows from Lemma \ref{lemma:mmxstpir2}. \eqref{eq:thmmmxstpir14} follows from \eqref{eq:mmxstpiransfunc} and \eqref{eq:mmxstpirdecode}. \eqref{eq:thmmmxstpir15} is the chain rule of entropy, while the last step follows from Lemma \ref{lemma:mmxstpir4} and \eqref{eq:mmxstpirk1}. Applying \eqref{eq:thmmmxstpir16} repeatedly for $i=1,2,\dots,K-M$, we have
\begin{align}
ML&\leq\sum_{n\in\overline{\mathcal{X}}}D_n-H_q(A_{\overline{\mathcal{X}}}^{\mathcal{K}_1}|S_{\mathcal{X}},Q_{[N]}^{\mathcal{K}_1},W_{\mathcal{K}_1})\\
&=\sum_{n\in\overline{\mathcal{X}}}D_n-L-H_q(A_{\overline{\mathcal{X}}}^{\mathcal{K}_2}|S_{\mathcal{X}},Q_{[N]}^{\mathcal{K}_2},W_{\mathcal{K}_1\cup\mathcal{K}_2})\\
&=\cdots\\
&=\sum_{n\in\overline{\mathcal{X}}}D_n-(K-M)L.
\end{align}
Averaging over all $\mathcal{X}$, we have
\begin{align}
D=\sum_{n\in[N]}D_n\geq\frac{N}{N-X}KL.
\end{align}
Therefore we have
\begin{align}
R=\frac{ML}{D}\leq\frac{M(N-X)}{KN}.
\end{align}
Thus
\begin{align}
C_{\text{MM-XSTPIR}}(N,X,T,K,M)\leq\frac{M(N-X)}{KN}, \quad X<N\leq X+T.
\end{align}
Next, let us consider $N>X+T$. For this setting, let us assume that $\mathcal{K}_i=\{M(i-1)+1,M(i-1)+2,\dots,Mi\}$, $\forall i\in[\lfloor\frac{K}{M}\rfloor]$. Note that $\mathcal{K}_i$'s are disjoint sets. Similarly, let us consider the RHS term in \eqref{eq:lemmammxstpir310}. Consider any set $\mathcal{T}\subset\overline{\mathcal{X}}$, $|\mathcal{T}|=T$, For all $i,i+1\in[\lfloor\frac{K}{M}\rfloor]$, we have
\begin{align}
&H_q(A_{\overline{\mathcal{X}}}^{\mathcal{K}_i}|S_{\mathcal{X}},Q_{[N]}^{\mathcal{K}_i},W_{\mathcal{K}_1\cup\cdots\cup\mathcal{K}_{i}})\notag\\
&\geq H_q(A_{\mathcal{T}}^{\mathcal{K}_i}|S_{\mathcal{X}},Q_{[N]}^{\mathcal{K}_i},W_{\mathcal{K}_1\cup\cdots\cup\mathcal{K}_{i}})\label{eq:thmmmxstpir21}\\
&=H_q(A_{\mathcal{T}}^{\mathcal{K}_i}|S_{\mathcal{X}},Q_{\mathcal{T}}^{\mathcal{K}_i},W_{\mathcal{K}_1\cup\cdots\cup\mathcal{K}_{i}})\label{eq:thmmmxstpir22}\\
&=H_q(A_{\mathcal{T}}^{\mathcal{K}_{i+1}}|S_{\mathcal{X}},Q_{\mathcal{T}}^{\mathcal{K}_{i+1}},W_{\mathcal{K}_1\cup\cdots\cup\mathcal{K}_{i}})\label{eq:thmmmxstpir23}\\
&=H_q(A_{\mathcal{T}}^{\mathcal{K}_{i+1}}|S_{\mathcal{X}},Q_{[N]}^{\mathcal{K}_{i+1}},W_{\mathcal{K}_1\cup\cdots\cup\mathcal{K}_{i}}).\label{eq:thmmmxstpir24}
\end{align}
Steps are justified as follows. \eqref{eq:thmmmxstpir21} follows from the fact that dropping terms can not increase entropy. \eqref{eq:thmmmxstpir22} follows from Lemma \ref{lemma:mmxstpir2}. \eqref{eq:thmmmxstpir23} follows from Lemma \ref{lemma:mmxstpir1}, while \eqref{eq:thmmmxstpir24} again follows from Lemma \ref{lemma:mmxstpir2}. Now let us average \eqref{eq:thmmmxstpir24} over all $\mathcal{T}$ and apply Han's inequality. 
\begin{align}
&H_q(A_{\overline{\mathcal{X}}}^{\mathcal{K}_i}|S_{\mathcal{X}},Q_{[N]}^{\mathcal{K}_i},W_{\mathcal{K}_1\cup\cdots\cup\mathcal{K}_{i}})\notag\\
&\geq \frac{T}{N-X}H_q(A_{\overline{\mathcal{X}}}^{\mathcal{K}_{i+1}}|S_{\mathcal{X}},Q_{[N]}^{\mathcal{K}_{i+1}},W_{\mathcal{K}_1\cup\cdots\cup\mathcal{K}_{i}})\label{eq:thmmmxstpir25}\\
&=\frac{T}{N-X}H_q(W_{\mathcal{K}_{i+1}},A_{\overline{\mathcal{X}}}^{\mathcal{K}_{i+1}}|S_{\mathcal{X}},Q_{[N]}^{\mathcal{K}_{i+1}},W_{\mathcal{K}_1\cup\cdots\cup\mathcal{K}_{i}})\label{eq:thmmmxstpir26}\\
&=\frac{T}{N-X}\left(H_q(W_{\mathcal{K}_{i+1}}|S_{\mathcal{X}},Q_{[N]}^{\mathcal{K}_{i+1}},W_{\mathcal{K}_1\cup\cdots\cup\mathcal{K}_{i}})+H_q(A_{\overline{\mathcal{X}}}^{\mathcal{K}_{i+1}}|S_{\mathcal{X}},Q_{[N]}^{\mathcal{K}_{i+1}},W_{\mathcal{K}_1\cup\cdots\cup\mathcal{K}_{i+1}})\right)\label{eq:thmmmxstpir27}\\
&=\frac{T}{N-X}\left(ML+H_q(A_{\overline{\mathcal{X}}}^{\mathcal{K}_{i+1}}|S_{\mathcal{X}},Q_{[N]}^{\mathcal{K}_{i+1}},W_{\mathcal{K}_1\cup\cdots\cup\mathcal{K}_{i+1}})\right).\label{eq:thmmmxstpir28}
\end{align}
\eqref{eq:thmmmxstpir25} follows from the Han's inequality, and \eqref{eq:thmmmxstpir26} follows from \eqref{eq:mmxstpiransfunc} and \eqref{eq:mmxstpirdecode}. \eqref{eq:thmmmxstpir26} is the chain rule of entropy, while the last step holds from Lemma \ref{lemma:mmxstpir4} and the fact that $\mathcal{K}_i$'s are disjoint sets. Now let us apply \eqref{eq:thmmmxstpir28} repeatedly for $i=1,2,\dots,\lfloor\frac{K}{M}\rfloor-1$, we have
\begin{align}
ML&\leq\sum_{n\in\overline{\mathcal{X}}}D_n-H_q(A_{\overline{\mathcal{X}}}^{\mathcal{K}_1}|S_{\mathcal{X}},Q_{[N]}^{\mathcal{K}_1},W_{\mathcal{K}_1})\\
&\leq\sum_{n\in\overline{\mathcal{X}}}D_n-\frac{T}{N-X}\left(ML+H_q(A_{\overline{\mathcal{X}}}^{\mathcal{K}_2}|S_{\mathcal{X}},Q_{[N]}^{\mathcal{K}_2},W_{\mathcal{K}_1\cup\mathcal{K}_2})\right)\\
&\leq\dots\\
&\leq\sum_{n\in\overline{\mathcal{X}}}D_n-ML\left(\left(\frac{T}{N-X}\right)+\dots+\left(\frac{T}{N-X}\right)^{\lfloor\frac{K}{M}\rfloor-1}\right).
\end{align}
Thus we have
\begin{align}
\sum_{n\in\overline{\mathcal{X}}}D_n\geq ML\left(1+\left(\frac{T}{N-X}\right)+\dots+\left(\frac{T}{N-X}\right)^{\lfloor\frac{K}{M}\rfloor-1}\right).
\end{align}
Averaging over all $\mathcal{X}$, we have
\begin{align}
D=\sum_{n\in[N]}D_n\geq ML\frac{N}{N-X}\left(1+\left(\frac{T}{N-X}\right)+\dots+\left(\frac{T}{N-X}\right)^{\lfloor\frac{K}{M}\rfloor-1}\right).
\end{align}
Therefore,
\begin{align}
R=\frac{ML}{D}\leq \frac{N-X}{N}\left(1+\left(\frac{T}{N-X}\right)+\dots+\left(\frac{T}{N-X}\right)^{\lfloor\frac{K}{M}\rfloor-1}\right)^{-1}.
\end{align}
So we have,
\begin{align}
&C_{\text{MM-XSTPIR}}(N,X,T,K,M)\notag\\
&\leq \frac{N-X}{N}\left(1+\left(\frac{T}{N-X}\right)+\dots+\left(\frac{T}{N-X}\right)^{\lfloor\frac{K}{M}\rfloor-1}\right)^{-1},\quad N>X+T.
\end{align}
This completes the proof of Theorem \ref{thm:mmxstpir}.
\end{proof}

\begin{remark}Note that when $X=0$, i.e., the basic multi-message $T$-private information retrieval problem where storage is not secure, the proof of Theorem \ref{thm:mmxstpir} follows directly, and the resulting upper bound is obtained by setting $X=0$.
\end{remark}

\begin{remark}Note that when $T=0$, i.e., the problem with $X$-secure storage and no privacy requirement, we have
\begin{align}
&C_{\text{MM-XSTPIR}}(N,X,T=0,K,M)\notag\\
\leq&\left\{
\begin{aligned}
&0,&&N\leq X,\\
&\frac{N-X}{N},&&N>X.
\end{aligned}
\right.
\end{align}
\end{remark}

\section{Proof of Lemma \ref{lemma:Hab}}\label{proof:lemmaHab}
To prove Lemma \ref{lemma:Hab}, we need the following  lemmas. \begin{lemma}\label{lemma:invert}
For independent random matrices $\bar{\bf A}\in\mathbb{F}_q^{l\times k}, \bar{\bf B}\in\mathbb{F}_q^{k\times k}$, if the elements of $\bar{\bf B}$ are i.i.d. uniform then
\begin{align}
\lim_{q\rightarrow\infty}H_q({\bf \bar{A}\bar{B}}\mid \bar{\bf B})&= H_q(\bar{\bf A})
\end{align}
in $q$-ary units.
\end{lemma}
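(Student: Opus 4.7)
The plan is to condition on the value of $\bar{\bf B}$ and exploit the fact that for a fixed matrix $B$, the map $\bar{\bf A}\mapsto \bar{\bf A}B$ is an $\mathbb{F}_q$-linear function on $\mathbb{F}_q^{l\times k}$, and hence a bijection precisely when $B$ is invertible. Since $\bar{\bf A}\bar{\bf B}$ is a deterministic function of $(\bar{\bf A},\bar{\bf B})$ and $\bar{\bf A}\perp\bar{\bf B}$, we always have the upper bound
\begin{align}
H(\bar{\bf A}\bar{\bf B}\mid \bar{\bf B})\leq H(\bar{\bf A}\mid \bar{\bf B})=H(\bar{\bf A})=lk
\end{align}
in $q$-ary units, so the work is entirely on the matching lower bound.

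For the lower bound I would expand the conditional entropy as $H(\bar{\bf A}\bar{\bf B}\mid\bar{\bf B})=\sum_{B}\Pr(\bar{\bf B}=B)\,H(\bar{\bf A}B)$, drop all terms with singular $B$, and note that whenever $B$ is invertible the map $\bar{\bf A}\mapsto\bar{\bf A}B$ is a bijection so that $H(\bar{\bf A}B)=H(\bar{\bf A})=lk$. This gives $H(\bar{\bf A}\bar{\bf B}\mid\bar{\bf B})\geq lk\cdot\Pr(\bar{\bf B}\text{ invertible})$.

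The final step is to count invertible matrices: since $\bar{\bf B}$ is uniform over $\mathbb{F}_q^{k\times k}$, the probability that its rows form a basis is $\prod_{i=1}^{k}(1-q^{-i})$, which tends to $1$ as $q\to\infty$. Combining the two bounds squeezes $\lim_{q\to\infty}H(\bar{\bf A}\bar{\bf B}\mid\bar{\bf B})$ to exactly $lk=H(\bar{\bf A})$.

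There is no real obstacle here; the only subtlety is keeping entropies measured in $q$-ary units (so that $H(\bar{\bf A})=lk$ is a $q$-independent constant) and being explicit that $\bar{\bf A}$ remains uniform on $\mathbb{F}_q^{l\times k}$ after conditioning on $\bar{\bf B}$ by independence, so that $H(\bar{\bf A}B)$ for invertible $B$ is genuinely $lk$ rather than something smaller. The rank-deficient contributions can simply be lower-bounded by zero; a sharper calculation would note $H(\bar{\bf A}B)=l\cdot\mathrm{rank}(B)$ and give $H(\bar{\bf A}\bar{\bf B}\mid\bar{\bf B})=l\cdot\mathbb{E}[\mathrm{rank}(\bar{\bf B})]$, but for the limiting statement this refinement is unnecessary.
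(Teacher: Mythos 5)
Your proof is correct and follows essentially the same route as the paper: condition on whether $\bar{\bf B}$ is invertible, use that $\bar{\bf A}\mapsto\bar{\bf A}B$ is a bijection for invertible $B$ together with the independence of $\bar{\bf A}$ and $\bar{\bf B}$, invoke the count of invertible matrices so that the singular event has vanishing probability as $q\to\infty$, and take the limit. The paper packages this with an explicit indicator variable $\sigma$ and computes the weighted average directly; your two-sided sandwich (trivial upper bound $H(\bar{\bf A}\bar{\bf B}\mid\bar{\bf B})\leq H(\bar{\bf A})$, lower bound $H(\bar{\bf A})\cdot\Pr(\bar{\bf B}\text{ invertible})$) is an equivalent and arguably cleaner way to organize the same ingredients.

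One small caution: the lemma only assumes that $\bar{\bf B}$ has i.i.d.\ uniform entries; $\bar{\bf A}$ is an arbitrary random matrix independent of $\bar{\bf B}$. In two places you write $H(\bar{\bf A})=lk$ and say that $\bar{\bf A}$ ``remains uniform,'' which silently imports an assumption not in the statement. The argument does not need it---for invertible $B$ the bijection $\bar{\bf A}\mapsto\bar{\bf A}B$ preserves entropy for \emph{any} distribution of $\bar{\bf A}$, and the sandwich $H(\bar{\bf A})\Pr(\text{inv})\leq H(\bar{\bf A}\bar{\bf B}\mid\bar{\bf B})\leq H(\bar{\bf A})$, together with $H(\bar{\bf A})\leq lk$ to control the gap, already closes the limit---so just keep $H(\bar{\bf A})$ abstract throughout rather than replacing it by $lk$. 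This matters for the paper's downstream use of the lemma, where it is applied to non-uniform products such as $({\bf A}_1{\bf B}_1,{\bf A}_1{\bf B}_2)$.
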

\begin{proof}
Define $\sigma$ as $0$ if $\bar{\bf B}$ is singular, and $1$ otherwise. Then we have
\begin{align}
H_q({\bf \bar{A}\bar{B}}\mid \bar{\bf B})&=H_q({\bf \bar{A}\bar{B}}\mid {\bf \bar{B}},\sigma)\\
&=H_q({\bf \bar{A}\bar{B}}\mid {\bf \bar{B}}, \sigma=1)P(\sigma=1)+H_q({\bf \bar{A}\bar{B}}\mid {\bf \bar{B}}, \sigma=0)P(\sigma=0)\\
&=H_q({\bf \bar{A}}\mid {\bf \bar{B}}, \sigma=1)P(\sigma=1)+H_q({\bf \bar{A}\bar{B}}\mid {\bf \bar{B}}, \sigma=0)P(\sigma=0)\label{eq:invertB}\\
&=H_q({\bf \bar{A}})P(\sigma=1)+H_q({\bf \bar{A}\bar{B}}\mid {\bf \bar{B}}, \sigma=0)P(\sigma=0)\\
&=H_q({\bf \bar{A}})\prod_{i=1}^k(1-q^{-i})+H_q({\bf \bar{A}\bar{B}}\mid {\bf \bar{B}}, \sigma=0)\left(1-\prod_{i=1}^k(1-q^{-i})\right)\label{eq:pzero}
\end{align}
In \eqref{eq:invertB} we used the fact that given a square non-singular (invertible) matrix $\bar{\bf B}$, the matrix $\bar{\bf A}\bar{\bf B}$ is an invertible function of the matrix $\bar{\bf A}$. In \eqref{eq:pzero} we used the result from \cite{Waterhouse} that the probability of a matrix $\bar{\bf B}$ drawn uniformly from $\mathbb{F}_q^{k\times k}$ being singular is exactly $1-\prod_{i=1}^k(1-q^{-1})$. Now, since $H_q({\bf \bar{A}\bar{B}}\mid {\bf \bar{B}}, \sigma=0)$ is a finite value bounded between $0$ and $lk$, as $q\rightarrow\infty$ we have $H_q({\bf \bar{A}\bar{B}}\mid \bar{\bf B})= H_q(\bar{\bf A})$.
\end{proof}
The random matrices  ${\bf A}, {\bf B}$ in the next two lemmas are as defined in Lemma \ref{lemma:Hab}. Note that we assume that $q\rightarrow\infty$ throughout the remainder of this section.

\begin{lemma}\label{lemma:hablm1}
When $K\geq M$, let us express $\mathbf{A}$ as
\begin{align}
\mathbf{A}=[\underbrace{(\mathbf{A}_1)_{L\times M}}_{\text{First $M$ columns}}|\underbrace{(\mathbf{A}_2)_{L\times (K-M)}}_{\text{Last $K-M$ columns}}].
\end{align}
Similarly, let us express $\mathbf{B}$ as
\begin{align}
\mathbf{B}=\begin{bmatrix}
(\mathbf{B}_1)_{M\times M}\\
(\mathbf{B}_2)_{(K-M)\times M}
\end{bmatrix}
\begin{array}{l}
\}\text{\scriptsize First $M$ rows}\\
\}\text{\scriptsize Last $K-M$ rows.}
\end{array}
\end{align}
Then we have
\begin{align}
H_q(\mathbf{A}\mathbf{B}\mid\mathbf{B}_1,\mathbf{B}_2,\mathbf{A}_2)=H_q(\mathbf{A}_1).
\end{align}
\end{lemma}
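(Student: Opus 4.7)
The idea is to split the product $\mathbf{AB}$ according to the block partition, isolate the random contribution, and then invoke Lemma \ref{lemma:invert}. Specifically, by the stated partition,
\begin{align}
\mathbf{A}\mathbf{B} \;=\; \mathbf{A}_1\mathbf{B}_1 \;+\; \mathbf{A}_2\mathbf{B}_2.
\end{align}
Conditioned on $(\mathbf{B}_1,\mathbf{B}_2,\mathbf{A}_2)$, the term $\mathbf{A}_2\mathbf{B}_2$ is fully determined, so translation by a constant does not affect entropy and we get
\begin{align}
H(\mathbf{A}\mathbf{B}\mid\mathbf{B}_1,\mathbf{B}_2,\mathbf{A}_2)
\;=\; H(\mathbf{A}_1\mathbf{B}_1\mid\mathbf{B}_1,\mathbf{B}_2,\mathbf{A}_2).
\end{align}

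Next I would use the fact that $\mathbf{A}_1$ is jointly independent of $(\mathbf{B}_2,\mathbf{A}_2)$, since all submatrices arise from partitioning the independent matrices $\mathbf{A}$ and $\mathbf{B}$ whose entries are i.i.d.\ uniform. Because $\mathbf{A}_1\mathbf{B}_1$ is a function of $(\mathbf{A}_1,\mathbf{B}_1)$ alone, conditioning on the additional independent quantities $(\mathbf{B}_2,\mathbf{A}_2)$ is redundant, so
\begin{align}
H(\mathbf{A}_1\mathbf{B}_1\mid\mathbf{B}_1,\mathbf{B}_2,\mathbf{A}_2)
\;=\; H(\mathbf{A}_1\mathbf{B}_1\mid\mathbf{B}_1).
\end{align}

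Finally, $\mathbf{A}_1$ is $L\times M$ with i.i.d.\ uniform entries and $\mathbf{B}_1$ is an independent $M\times M$ matrix with i.i.d.\ uniform entries, so Lemma \ref{lemma:invert} (applied with $\bar{\mathbf{A}}=\mathbf{A}_1$, $\bar{\mathbf{B}}=\mathbf{B}_1$, $l=L$, $k=M$) yields $\lim_{q\to\infty} H(\mathbf{A}_1\mathbf{B}_1\mid\mathbf{B}_1)=H(\mathbf{A}_1)$, which chains with the two previous equalities to give the claim.

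I do not expect any real obstacles: the only nontrivial input is Lemma \ref{lemma:invert}, which is precisely tailored for this situation, and the rest is a decomposition of $\mathbf{AB}$ plus basic independence bookkeeping. The only thing to be a bit careful about is making sure the equality holds in the $q\to\infty$ sense inherited from Lemma \ref{lemma:invert}, which matches the asymptotic convention already adopted in Lemma \ref{lemma:Hab}.
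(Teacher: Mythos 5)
Your proof is correct and follows the paper's intended approach: decompose $\mathbf{AB}=\mathbf{A}_1\mathbf{B}_1+\mathbf{A}_2\mathbf{B}_2$, strip off the term that is deterministic given the conditioning, drop the independent conditioning variables, and apply Lemma~\ref{lemma:invert}. In fact your write-up is cleaner than the paper's, whose chain of equalities lists the four block products $\mathbf{A}_1\mathbf{B}_1,\mathbf{A}_1\mathbf{B}_2,\mathbf{A}_2\mathbf{B}_1,\mathbf{A}_2\mathbf{B}_2$ even though under the stated column/row partition the product $\mathbf{AB}$ is the single sum $\mathbf{A}_1\mathbf{B}_1+\mathbf{A}_2\mathbf{B}_2$ and, e.g., $\mathbf{A}_1\mathbf{B}_2$ is not even dimensionally well-defined; this appears to be a leftover artifact from a different partition, and your version is the one that actually type-checks.
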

\begin{proof}
As $q\rightarrow\infty$, the square matrix $\mathbf{B}_1$ is invertible with probability $1$. Therefore, using Lemma \ref{lemma:invert}, {$H_q({\bf AB}\mid {\bf B}_1,{\bf B}_2,{\bf A}_2)=H_q({\bf A}_1{\bf B}_1+{\bf A}_2{\bf B}_2\mid {\bf B}_1,{\bf B}_2,{\bf A}_2)=H_q({\bf A}_1{\bf B}_1\mid {\bf B}_1,{\bf B}_2,{\bf A}_2)=H_q({\bf A_1}\mid {\bf B}_1,{\bf B}_2,{\bf A}_2)=H_q({\bf A}_1)$}.

\end{proof}

\begin{lemma}\label{lemma:hablm2}
When $K<M$, let us express $\mathbf{B}$ as
\begin{align}
\mathbf{B}=[\underbrace{(\mathbf{B}_1)_{K\times K}}_{\text{First $K$ columns}}|\underbrace{(\mathbf{B}_2)_{K\times (M-K)}}_{\text{Last $M-K$ columns}}].
\end{align}
Then we have
\begin{align}\label{eq:lemmahab21}
H_q(\mathbf{A}\mathbf{B}\mid \mathbf{B}_1,\mathbf{B}_2)=H_q(\mathbf{A}).
\end{align}
In particular, when $K<L$, we have
\begin{align}\label{eq:lemmahab22}
H_q(\mathbf{A}\mathbf{B}\mid\mathbf{B}_1)=H_q(\mathbf{A})+H_q(\mathbf{B}_2).
\end{align}
\end{lemma}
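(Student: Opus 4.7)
The plan is to partition the product $\mathbf{AB}$ into its first $K$ columns $\mathbf{C}_1=\mathbf{AB}_1$ and last $M-K$ columns $\mathbf{C}_2=\mathbf{AB}_2$, then apply the chain rule and leverage Lemma \ref{lemma:invert} together with an analogous full-column-rank argument for the tall matrix $\mathbf{A}$.

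For \eqref{eq:lemmahab21}, I would write
\begin{align*}
H(\mathbf{AB}\mid\mathbf{B}_1,\mathbf{B}_2)&=H(\mathbf{C}_1,\mathbf{C}_2\mid\mathbf{B}_1,\mathbf{B}_2)\\
&=H(\mathbf{C}_1\mid\mathbf{B}_1,\mathbf{B}_2)+H(\mathbf{C}_2\mid\mathbf{C}_1,\mathbf{B}_1,\mathbf{B}_2).
\end{align*}
Since $\mathbf{B}_2$ is independent of $(\mathbf{A},\mathbf{B}_1)$, the first term equals $H(\mathbf{AB}_1\mid\mathbf{B}_1)$, which by Lemma \ref{lemma:invert} tends to $H(\mathbf{A})$ as $q\to\infty$ (because $\mathbf{B}_1$ is $K\times K$ with i.i.d.\ uniform entries). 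For the second term, the invertibility of $\mathbf{B}_1$ (which holds with probability approaching $1$) makes $\mathbf{A}=\mathbf{C}_1\mathbf{B}_1^{-1}$ a function of $(\mathbf{C}_1,\mathbf{B}_1)$, whence $\mathbf{C}_2=\mathbf{AB}_2$ is determined; the singular case contributes an $o(1)$ term bounded by $L(M-K)\Pr[\mathbf{B}_1\text{ singular}]$, which vanishes.

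For \eqref{eq:lemmahab22}, the same chain-rule split gives
\begin{align*}
H(\mathbf{AB}\mid\mathbf{B}_1)=H(\mathbf{C}_1\mid\mathbf{B}_1)+H(\mathbf{C}_2\mid\mathbf{C}_1,\mathbf{B}_1).
\end{align*}
The first term again equals $H(\mathbf{A})$ by Lemma \ref{lemma:invert}. For the second, invertibility of $\mathbf{B}_1$ means $\mathbf{A}$ is recoverable from $(\mathbf{C}_1,\mathbf{B}_1)$ and conversely $\mathbf{C}_1$ is a function of $(\mathbf{A},\mathbf{B}_1)$, so
\begin{align*}
H(\mathbf{C}_2\mid\mathbf{C}_1,\mathbf{B}_1)=H(\mathbf{AB}_2\mid\mathbf{A},\mathbf{B}_1)=H(\mathbf{AB}_2\mid\mathbf{A}),
\end{align*}
where the last equality uses the independence of $\mathbf{B}_2$ from $\mathbf{B}_1$. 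It remains to show $H(\mathbf{AB}_2\mid\mathbf{A})\to H(\mathbf{B}_2)$.

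The main (and only nontrivial) step will be this last claim, which requires a transposed analog of Lemma \ref{lemma:invert}: when $\mathbf{A}\in\mathbb{F}_q^{L\times K}$ has i.i.d.\ uniform entries and $K<L$, the probability that $\mathbf{A}$ has full column rank is $\prod_{i=0}^{K-1}(1-q^{i-L})\to 1$ as $q\to\infty$. Conditioned on this event, the map $\mathbf{B}_2\mapsto\mathbf{AB}_2$ is injective (one has a left inverse $(\mathbf{A}^T\mathbf{A})^{-1}\mathbf{A}^T$), so $H(\mathbf{AB}_2\mid\mathbf{A},\text{full rank})=H(\mathbf{B}_2)=K(M-K)$. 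Splitting on the rank event exactly as in the proof of Lemma \ref{lemma:invert} and using that $H(\mathbf{AB}_2\mid\mathbf{A},\text{rank deficient})$ is uniformly bounded by $L(M-K)$ while the rank-deficient probability vanishes, we conclude $H(\mathbf{AB}_2\mid\mathbf{A})\to H(\mathbf{B}_2)$. Combining the two pieces yields $H(\mathbf{AB}\mid\mathbf{B}_1)=H(\mathbf{A})+H(\mathbf{B}_2)$, completing the lemma.
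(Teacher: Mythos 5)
Your proof matches the paper's approach: both split $\mathbf{AB}$ column-wise into $\mathbf{AB}_1$ and $\mathbf{AB}_2$, invoke the asymptotic invertibility of the square block $\mathbf{B}_1$ to recover $\mathbf{A}$ from $\mathbf{AB}_1$, and, for the second identity, use the fact that a uniformly random $L\times K$ matrix with $K<L$ has full column rank with probability tending to $1$, so the map $\mathbf{B}_2\mapsto\mathbf{AB}_2$ is injective given $\mathbf{A}$. The one technical slip is the assertion that $(\mathbf{A}^T\mathbf{A})^{-1}\mathbf{A}^T$ provides the left inverse: over a finite field, $\mathbf{A}^T\mathbf{A}$ can be singular even when $\mathbf{A}$ has full column rank (for instance $\mathbf{A}=(1,1)^T$ over $\mathbb{F}_2$ gives $\mathbf{A}^T\mathbf{A}=0$), so this formula is not available. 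Instead, one should observe that full column rank guarantees some $K\times K$ submatrix of $\mathbf{A}$ is invertible, and inverting that submatrix supplies the needed left inverse (equivalently, row-reduction puts $\mathbf{A}$ in the form $[\mathbf{I}_K;\,\mathbf{0}]$ up to an invertible row transformation). With that correction the argument is sound and gives the same conclusion as the paper, with your explicit bookkeeping of the vanishing contribution from the rank-deficient event matching the reasoning in Lemma~\ref{lemma:invert}.
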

\begin{proof}
As $q\rightarrow\infty$, the square matrix $\mathbf{B}_1$ is invertible with probability $1$. Therefore, $H_q({\bf AB}\mid {\bf B}_1,{\bf B}_2)=H_q({\bf A}{\bf B}_1, {\bf A}{\bf B}_2\mid {\bf B}_1,{\bf B}_2)=H_q({\bf A}, {\bf A}{\bf B}_2\mid {\bf B}_1,{\bf B}_2)=H_q({\bf A}\mid {\bf B}_1,{\bf B}_2)=H_q({\bf A}).$
When $K<L$, the matrix $\mathbf{A}$ has full column rank with probability $1$, so that given ${\bf A}$, the matrix ${\bf B}_2$ is an invertible function of ${\bf A}{\bf B}_2$. Therefore, $H_q({\bf AB}\mid {\bf B}_1)=H_q({\bf AB_1, AB_2}\mid {\bf B}_1)=H_q({\bf A, AB_2}\mid {\bf B}_1)=H_q({\bf A}\mid {\bf B}_1)+H_q({\bf AB_2}\mid {\bf B}_1, {\bf A})=H_q({\bf A}\mid {\bf B}_1)+H_q({\bf B}_2\mid {\bf B}_1,{\bf A})=H_q({\bf A})+H_q({\bf B}_2).$

\end{proof}
Now we are ready to prove the first part of Lemma \ref{lemma:Hab}.
\subsection{Proof of Lemma \ref{lemma:Hab}: \eqref{eq:Hab}}
\begin{proof}

Case 1. $K\leq\min(L,M)$.

First, let us consider the upper bound. Note that we can rewrite matrix $\mathbf{A}$ as
\begin{align}
\mathbf{A}=\begin{bmatrix}
(\mathbf{A}_1)_{K\times K}\\
(\mathbf{A}_2)_{(L-K)\times K}\\
\end{bmatrix}\begin{array}{l}
\}\text{\scriptsize First $K$ rows}\\
\}\text{\scriptsize Last $L-K$ rows}.
\end{array}
\end{align}
Similarly, let us rewrite matrix $\mathbf{B}$ as
\begin{align}
\mathbf{B}=[\underbrace{(\mathbf{B}_1)_{K\times K}}_{\text{First $K$ columns}}|\underbrace{(\mathbf{B}_2)_{K\times (M-K)}}_{\text{Last $M-K$ columns}}].
\end{align}
Note that as $q\rightarrow\infty$, square matrices $\mathbf{A}_1$ and $\mathbf{B}_1$ are invertible with probability $1$. Thus we have
\begin{align}
\mathbf{A}=\begin{bmatrix}
\mathbf{I}_K\\
\mathbf{A}_2\mathbf{A}_1^{-1}\\
\end{bmatrix}\mathbf{A}_1,\\
\mathbf{B}=\mathbf{B}_1~\left[\mathbf{I}_K\mid\mathbf{B}_1^{-1}\mathbf{B}_2\right].
\end{align}
Therefore, 
\begin{align}
\mathbf{A}\mathbf{B}=\begin{bmatrix}
\mathbf{I}_K\\
\mathbf{A}_2\mathbf{A}_1^{-1}\\
\end{bmatrix}(\mathbf{A}_1\mathbf{B}_1)\left[\mathbf{I}_K|\mathbf{B}_1^{-1}\mathbf{B}_2\right].
\end{align}
Then we have
\begin{align}
H_q(\mathbf{A}\mathbf{B})&\leq H_q(\mathbf{A}_2\mathbf{A}_1^{-1},\mathbf{A}_1\mathbf{B}_1,\mathbf{B}_1^{-1}\mathbf{B}_2)\\
&\leq K(L-K)+K^2+K(M-K)\\
&=LK+KM-K^2,
\end{align}
in $q$-ary units. On the other hand, from \eqref{eq:lemmahab22} of Lemma \ref{lemma:hablm2}, we have
\begin{align}
H_q(\mathbf{A}\mathbf{B})\geq &H_q(\mathbf{A}\mathbf{B}\mid\mathbf{B}_1)\\
=&H_q(\mathbf{A})+H_q(\mathbf{B}_2)\\
=&LK+KM-K^2,
\end{align}
in $q$-ary units. This completes the proof of \eqref{eq:Hab} for Case 1.

Case 2. $M\leq\min(L,K)$.

Let us consider the upper bound first. Since $\mathbf{A}\mathbf{B}$ has dimension $L\times M$,  it is trivial that $H_q(\mathbf{A}\mathbf{B})\leq LM$ in $q$-ary units. On the other hand, from Lemma \ref{lemma:hablm1}, we have $H_q(\mathbf{A}\mathbf{B})\geq H_q(\mathbf{A}\mathbf{B}\mid\mathbf{B}_1,\mathbf{B}_2,\mathbf{A}_2)=H_q(\mathbf{A}_1)=LM$ 
in $q$-ary units. This completes the proof of \eqref{eq:Hab} for Case 2.

Case 3. $L\leq\min(K,M)$. By symmetry this case is identical to Case 2. This completes the  proof of \eqref{eq:Hab} for Lemma \ref{lemma:Hab}.
\end{proof}
\subsection{Proof of Lemma \ref{lemma:Hab}: \eqref{eq:Haba},\eqref{eq:Habb}}

First let us prove that $H_q(\mathbf{A} \mathbf{B} \mid \mathbf{B})=\min(LM,LK)$. If $K\leq M$, then from \eqref{eq:lemmahab21} of Lemma \ref{lemma:hablm2}, we have $H_q(\mathbf{A}\times \mathbf{B}\mid\mathbf{B})=H_q(\mathbf{A})=LK$ in $q$-ary units. Now consider $K>M$.  We have $H_q(\mathbf{A} \mathbf{B}\mid \mathbf{B})\leq H_q(\mathbf{A} \mathbf{B})= LM$, in $q$-ary units. On the other hand, from Lemma \ref{lemma:hablm1}, we have $H_q(\mathbf{A}\times \mathbf{B}\mid \mathbf{B})=H_q(\mathbf{A}\times \mathbf{B}\mid \mathbf{B}_1,\mathbf{B}_2)
\geq H_q(\mathbf{A}\times \mathbf{B}\mid \mathbf{B}_1,\mathbf{B}_2,\mathbf{A}_2)=H_q(\mathbf{A}_1)=LM$ in $q$-ary units. By symmetry, $H_q(\mathbf{A}\times \mathbf{B} | \mathbf{A})=\min(LM,KM)$ can be similarly proved. This completes the proof of Lemma \ref{lemma:Hab}.

\section{Invertibility of $\mathbf{M}_N$}\label{app:mn}
\begin{lemma}\label{lemma:mninv}
The matrix $\mathbf{M}_N$ is invertible if $f_{[S]}$ are $S$ distinct elements from $\mathbb{F}_q$ and $\alpha_{[N]}$ are $N$ distinct elements from $\mathbb{G}$,
\begin{equation}
\mathbb{G}=\{\alpha\in\mathbb{F}_q:\alpha+f_s\neq0, \forall s\in[S]\}.
\end{equation}
\end{lemma}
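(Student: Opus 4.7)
The plan is to exhibit $\mathbf{M}_N$ as essentially a Cauchy--Vandermonde matrix and to argue invertibility by translating the statement ``the columns are linearly independent'' into the statement that a suitably constructed rational function cannot vanish at too many points unless it is identically zero.

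Suppose toward contradiction that $\mathbf{M}_N$ is singular, so that there exist scalars $c_1,\ldots,c_S,d_0,\ldots,d_{X_A+X_B-1}\in\mathbb{F}_q$, not all zero, making the corresponding linear combination of columns equal to the zero vector. Define the rational function
\[
g(x)=\sum_{s\in[S]}\frac{c_s}{f_s+x}+\sum_{t=0}^{X_A+X_B-1}d_t\, x^t,
\]
so that the column-dependence relation precisely says $g(\alpha_n)=0$ for every $n\in[N]$. Since $\alpha_n\in\mathbb{G}$, the denominator $f_s+\alpha_n$ is nonzero for every $s,n$, so $g(\alpha_n)$ is well defined. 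Clearing denominators, set
\[
p(x)=\sum_{s\in[S]}c_s\prod_{s'\in[S]\setminus\{s\}}(f_{s'}+x)+\left(\sum_{t=0}^{X_A+X_B-1}d_t\,x^t\right)\prod_{s\in[S]}(f_s+x).
\]
This $p(x)$ is a polynomial of degree at most $(S-1)+(X_A+X_B)=N-1$, and it vanishes at the $N$ distinct points $\alpha_1,\ldots,\alpha_N$. Hence $p\equiv 0$.

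It remains to conclude that all $c_s$ and all $d_t$ must vanish, which is the only nontrivial step. For this I would use uniqueness of partial fractions: $p\equiv 0$ implies $g\equiv 0$ as a rational function, and the functions $\{1/(f_s+x)\}_{s\in[S]}$ (with distinct $f_s$) together with the monomials $\{1,x,\ldots,x^{X_A+X_B-1}\}$ are linearly independent over $\mathbb{F}_q$ in the field of rational functions: each $1/(f_s+x)$ has a simple pole at $-f_s$ while the polynomial part has no poles, and the poles of the different rational summands are distinct. Concretely, to extract $c_{s_0}$ one multiplies $g(x)$ by $(f_{s_0}+x)$ and evaluates at $x=-f_{s_0}$, forcing $c_{s_0}=0$; once all $c_s$ vanish, the polynomial $\sum_t d_t x^t$ is the zero polynomial, so every $d_t=0$. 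This contradicts the assumption that not all coefficients were zero, completing the proof.

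The main (minor) obstacle is just the degree bookkeeping in the definition of $p(x)$: one must verify that $\deg p\le N-1$ so that having $N$ roots forces $p\equiv 0$; everything else is a clean application of the partial-fraction decomposition and the hypothesis that $\alpha_n\in\mathbb{G}$ guarantees no denominator vanishes.
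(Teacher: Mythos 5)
Your proof is correct and takes essentially the same approach as the paper: the paper defers to the proof of Lemma 5 in the XSTPIR reference, which is exactly this Cauchy--Vandermonde argument (build the rational function $g$ from a putative column dependence, clear denominators, note the cleared polynomial has degree at most $N-1$ but $N$ distinct roots in $\mathbb{G}$, and then evaluate at the poles $-f_s$ to force $c_1=\cdots=c_S=0$ and hence $d_0=\cdots=d_{X_A+X_B-1}=0$). Your write-up is simply the self-contained version of that argument, with the degree bookkeeping $\deg p \le (S-1)+(X_A+X_B) = N-1$ spelled out explicitly.
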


\begin{proof}
Proof of Lemma \ref{lemma:mninv} is almost identical to that of Lemma 5 in \cite{Jia_Sun_Jafar_XSTPIR}. The only difference is that since $f_{[S]}$ are distinct $S$ elements from $\mathbb{F}_q$, (117) in \cite{Jia_Sun_Jafar_XSTPIR} becomes
\begin{align}
g(\alpha)=\sum_{i=1}^{S}c_i\left(\frac{\Delta}{f_i+\alpha}\right).
\end{align}
Now choosing $\alpha$ such that $(f_i+\alpha)=0$ gives us $c_1=\dots=c_S=0$. Other parts of the proof in \cite{Jia_Sun_Jafar_XSTPIR} apply directly. This completes the proof of  Lemma \ref{lemma:mninv}.
\end{proof}

\bibliographystyle{IEEEtran}
\bibliography{Thesis}

\end{document}